\newcommand{\oldrevision}[1]{}
\newtheorem{definition}{Definition}
\newtheorem{example}{Example}
\let\oldnl\nl 
\newcommand{\nonl}{\renewcommand{\nl}{\let\nl\oldnl}} 
\newcommand{\mytt}[1]{\text{${\tt #1}$}}
\newcommand*\rot{\rotatebox{90}}
\newcolumntype{C}[1]{>{\centering\arraybackslash}m{#1}}
\newcolumntype{R}[1]{>{\raggedleft\arraybackslash}m{#1}}
\newcommand{\mychapter}{Chapter~}
\newcommand{\mysection}{Section~}
\newcommand{\mydefinition}{Definition~}
\newcommand{\mydefinitions}{Definitions~} 
\newcommand{\myfigure}{Figure~}
\newcommand{\mytable}{Table~}
\newcommand{\mylemma}{Lemma~}
\newcommand{\mylemmas}{Lemmas~} 
\newcommand{\mytheorem}{Theorem~}
\newcommand{\mycorollary}{Corollary~}
\newcommand{\myequation}{Equation~}
\newcommand{\myalgorithm}{Algorithm~}
\newcommand{\trans}{\Rightarrow}
\newcommand{\mysem}[1]{[\![#1]\!]}
\newcommand{\live}{\mytt{live}}
\newcommand{\islive}{\mytt{is\_live}}
\newcommand{\ureachdef}{\mytt{urdef}}
\newcommand{\stmt}{\mytt{stmt}}
\newcommand{\point}{\mytt{point}}
\newcommand\Tau{\mathcal{T}}
\newcommand{\wskip}{\mytt{skip}}
\newcommand{\wdef}{\mytt{def}}
\newcommand{\wuse}{\mytt{use}}
\newcommand{\wtrans}{\mytt{trans}}
\newcommand{\wfreevar}{\mytt{freevar}}
\newcommand{\wconlit}{\mytt{conlit}}
\newcommand{\wstmt}{\mytt{stmt}}
\newcommand{\wpoint}{\mytt{point}}
\newcommand{\wx}{\mytt{x}}
\newcommand{\wy}{\mytt{y}}
\newcommand{\we}{\mytt{e}}
\newcommand{\wc}{\mytt{c}}
\newcommand{\wv}{\mytt{v}}
\newcommand{\wm}{\mytt{m}}
\newcommand{\mundef}{unde\hspace{-0.1em}f}
\newcommand{\osrtrans}{\mytt{OSR\_trans}}
\newcommand{\buildcomp}{\mytt{build\_comp}}
\newcommand{\reconstruct}{\mytt{reconstruct}}
\newcommand{\apply}{\mytt{apply}}
\newcommand{\dopasses}{\mytt{do\_passes}}
\newcommand{\osrkit}{{\sf OSRKit}}
\newcommand{\tinyvm}{{\sf TinyVM}}
\newcommand{\clang}{{\tt clang}}
\newcommand{\alloca}{\mytt{alloca}}
\newcommand{\load}{\mytt{load}}
\newcommand{\memtoreg}{\mytt{mem2reg}}
\newcommand{\store}{\mytt{store}}
\newcommand{\RAUW}{\mytt{replaceAll}}
\newcommand{\RAUWfull}{\mytt{replaceAll}$(O, N)$}
\newcommand{\gdb}{\mytt{gdb}}
\newcommand{\lldb}{\mytt{LLDB}}
\newcommand{\fbase}{$\textsf{f}$}
\newcommand{\fvariant}{$\textsf{f'}$}
\newcommand{\fosrfrom}{$\textsf{f}_{\textsf{from}}$}
\newcommand{\fosrto}{$\textsf{f}\textsf{'}_{\textsf{to}}$}
\newcommand{\osrsource}{$\textsf{L}$}
\newcommand{\osrlanding}{$\textsf{L'}$}
\newcommand{\speccpu}{{\tt SPEC}~{\tt CPU2006}}
\newcommand{\phoronixpts}{{\tt Phoronix}~{\tt PTS}}
\begin{document}


\title{On-Stack Replacement \`{a} la Carte\vspace{1em}}
\date{}
\author{
  Daniele Cono D'Elia\\
  \texttt{delia@dis.uniroma1.it}
  \and
  Camil Demetrescu\\
  \texttt{demetres@dis.uniroma1.it}
}

\maketitle

\begin{abstract}

On-stack replacement (OSR) dynamically transfers execution between different code versions. This mechanism is used in mainstream runtime systems to support adaptive and speculative optimizations by running code tailored to provide the best expected performance for the actual workload. Current approaches either restrict the program points where OSR can be fired or require complex optimization-specific operations to realign the program's state during a transition. 
The engineering effort to implement OSR and the lack of abstractions make it rarely accessible to the research community, leaving fundamental question regarding its flexibility largely unexplored.


In this article we make a first step towards a provably sound abstract framework for OSR. We show that compiler optimizations can be made OSR-aware in isolation, and then safely composed. We identify a class of transformations, which we call {\em live-variable equivalent} (LVE), that captures a natural property of fundamental compiler optimizations, and devise an algorithm to automatically generate the OSR machinery required for an LVE transition at arbitrary program locations.

We present an implementation of our ideas in LLVM and evaluate it against prominent benchmarks, showing that bidirectional OSR transitions are possible almost everywhere in the code in the presence of common, unhindered global optimizations. We then discuss the end-to-end utility of our techniques in source-level debugging of optimized code, showing how our algorithms can provide novel building blocks for debuggers for both executables and managed runtimes.

\end{abstract}


\section{Introduction}
\label{se:introduction}

On-stack replacement (OSR) is a mechanism employed in language runtimes to dynamically switch the execution between different versions of a function~\cite{Paleczny01,Fink03}. Modern runtimes typically generate multiple variants of a function with different, often speculative, optimizations, adapting the code to execute to the current workload~\cite{Alpern00,Paleczny01}. OSR is usually at the core of large and complex just-in-time (JIT) compilers employed by popular production virtual machines (VMs), and is essential technology for dynamic optimization and debugging. Due to the substantial engineering effort to implement it, OSR tends to be restricted to a few of the most advanced runtime systems and is rarely accessible to the research community. The situation is further complicated by the lack of language abstractions to reason about the correctness and the flexibility of the mechanism. Common OSR embodiments require VM designers to manually generate ad-hoc metadata and glue code to get the program state to a correct resumption point. 
Other implementations restrict OSR transitions to places where hopefully there is no need to fix the program's state, e.g., at a function's entry point or at a loop's header. This results in lack of flexibility, making current approaches hardly applicable to several scenarios such as the ones we discuss below.

\paragraph{Motivating Examples} Our first example is excerpted from the Java HotSpot Glossary of Terms~\cite{hotspot-glossary}: ``A compiler initially assumes a reference value is never null, and tests for it using a trapping memory access. Later on, the application uses null values, and the method is deoptimized and recompiled to use an explicit test-and-branch idiom to detect such nulls''. As a null reference exception may be thrown anywhere in the code, deoptimization cannot happen on the fly unless OSR can be performed at arbitrary program locations. As a second example, we wish to collect accurate information about a program crash in an optimized production environment. When a crash happens, OSR reverts the program's execution to the state it would have had in the original unoptimized version and creates an informative core dump that includes the values of live variables that appear in the source code at that point. For this to work, OSR should be made to work at arbitrary locations in arbitrarily optimized programs. In a third scenario, we obfuscate a program to prevent security attacks by randomly diverting execution between different versions of a program at arbitrary execution points.

\medskip\noindent This article investigates how to overcome the limitations of previous approaches in the literature, supporting OSR at arbitrary program points across multiple unhindered program transformations.




\paragraph{Contributions and Overview} We contribute to the theory and practice of OSR by addressing a number of fundamental questions regarding its underlying computation model and how it can be mapped to concrete efficient implementations.
We provide the first formal treatment of OSR, distilling its essence to an abstract program morphing problem over a simple imperative calculus with an operational semantics. Our formalization aims at bridging the gap between the engineering practice in which OSR has been incubated, and the formal language methods, providing tools for reasoning abstractly and devising provably sound techniques.

To capture OSR in its full generality, we define a notion of {\em multi-program}, i.e., a collection of different program versions along with support to dynamically transfer execution between them. Using program bisimulation, we show that an OSR can correctly divert execution from one version to the other if they are {\em live-variable bisimilar}, i.e., the live variables they have in common at  corresponding execution states are equal. We identify a class of {\em live-variable equivalent} transformations that captures a natural property of common fundamental compiler optimizations, and devise algorithms for them that can automatically generate machinery to support OSR transitions at arbitrary program points in constant time and space.

A {\em compensation code} fixes the program state so that execution can correctly resume after an OSR transition, reconstructing the values of the variables that are live at the OSR target but not at the source. We make single transformations OSR-aware in isolation, and flexibly combine them by exploiting the {\em composability} of compensation code. This has a direct practical impact, as it can provide VM builders with a rich ``menu'' of possible program points where OSR can safely occur, relieving them from the burden of manually generating compensation code. 


We present and evaluate an implementation of our ideas in LLVM, showing that our algorithms support OSR transitions almost everywhere in the code under several classic optimizations. We discuss the end-to-end utility of our techniques in source-level debugging of optimized code, providing novel building blocks for debuggers.
We show how to correctly report values expected at the source level for variables that have been optimized away or hold misleading information. This represents a step forward in the state of the art of optimized code debugging.


\oldrevision{
Our main contributions can be summarized as follows:

\begin{itemize}

\item We provide the first formal treatment of OSR, distilling its essence to an abstract program morphing problem over a simple imperative calculus with an operational semantics. Our formalization aims at bridging the gap between the engineering practice in which OSR has been incubated, and the formal language methods, providing tools for reasoning abstractly and devising provably sound techniques.

\item To capture OSR in its full generality, we define a notion of {\em multi-program}, which is a collection of different versions of a program along with support to dynamically transfer execution between them. Execution in a multi-program starts from a designated base version. At any time, an oracle decides whether the execution should continue in the current version, or an OSR transition should divert it to a different version, modeling any conceivable OSR-firing strategy. 

\item We devise an algorithm for building {\em compensation code} that can be run to fix the state of a program so that execution can correctly resume after an OSR transition. In particular, we show how to automatically generate the compensation code required to reconstruct the values of the variables that are live at the OSR target, but not at the source. There is a trade-off between the number of points where OSR can be correctly fired and the price to pay in terms of space and time in order to support them. Our work lies at the performance-preserving end of the spectrum, generating compensation code that runs in constant time and space. Our approach allows us to make single transformations OSR-aware in isolation, and flexibly combine them by exploiting the {\em composability} of compensation code. This has a direct practical impact, as it can provide virtual machine builders with a rich ``menu'' of possible program points where OSR can safely occur, relieving them from the burden of manually generating compensation code, which is the standard approach in the OSR practice. 

\item We explore sufficient conditions for a multi-program to be {\em deterministic}, yielding the same result regardless of the oracle's decisions. This captures the intuitive idea that any sequence of OSR transitions is {\em correct} if it does not alter the intended semantics of a program. Using program bisimulation, we prove that an OSR can correctly divert execution from one program version to the other if they are {\em live-variable bisimilar}, i.e., the live variables they have in common at any corresponding execution states are equal. We show that this is a natural property that: (i) is satisfied by the programs generated with fundamental compiler optimizations that eliminate or move code around, such as dead code elimination, constant propagation, and code hoisting; (ii) is a sufficient condition for the correctness of our algorithm.


\item We discuss an implementation of our ideas in LLVM, evaluating it against prominent benchmarks. In particular, we show that our compensation code construction algorithm allows OSR transitions to be fired almost everywhere in the code, improving upon the state of the art where OSR is allowed to occur at specific program points only, which may hinder potential optimizations (see, e.g.,~\cite{DElia16}).


\item We finally present a case study showing how our techniques can provide novel building blocks for source-level debugging of optimized code, allowing a debugger to correctly report the values of variables as expected from the source code, but may have been optimized away or hold misleading information in the generated code. This represents a step forward in the state of the art of optimized code debugging.
	
\end{itemize}
}

\paragraph{Structure of the Article} This article is organized as follows. In Section~\ref{se:language-framework} we define syntax and semantics of the simple imperative language we use to illustrate our ideas. We then present computation tree logic and rewrite rules to reason about program properties and describe code transformations. Section~\ref{se:framework} illustrates our theoretical framework for OSR: we devise algorithms for automatic compensation code generation, and propose a general OSR model based on the notion of multi-program. We discuss our LLVM implementation in Section~\ref{se:implementation}. Our case study on optimized code debugging is presented in Section~\ref{se:debugging}. Section~\ref{se:related} discusses the connections of our ideas with previous works. We consider directions for future work and present concluding remarks in Section~\ref{se:conclusions}. 


\section{Language Framework}
\label{se:language-framework}

Our discussion is based on a minimal imperative language whose syntax is reported in \myfigure\ref{fig:osr-program-syntax}. In this section we introduce some basic definitions used in our representation of programs, and provide a big-step semantics for the language. We then present a formalism based on {\em computation tree logic} (CTL) to reason about program properties and describe program transformations through rewrite rules with side conditions~\cite{Clarke86}. 

\subsection{Syntax}
\label{ss:syntax}

\begin{definition}[Program]
\label{de:program}
A program is a sequence of instructions of the form:
\vspace{-2mm} 
\begin{equation*}
\pi=\langle I_1, I_2, \ldots, I_n \rangle\in Prog = \bigcup_{i=2}^{\infty} Instr^{i}
\vspace{-4mm}
\end{equation*}
where: 

\begin{itemize}[itemsep=2pt,parsep=0pt,topsep=2pt]
\item \texttt{$I_i\in Instr$} is the $i$-th instruction of the program, indexed by program point \texttt{$i\in[1,n]$}
\item \texttt{$I_1$ $=$ in $\cdots$} is the initial instruction, \texttt{$I_n$ $=$ out $\cdots$} is the final instruction
\item \texttt{$\forall i\in[2,n-1]:$ $I_i$ $\neq$ in $\cdots$ $\wedge$ $I_i$ $\neq$ out $\cdots$}
\end{itemize}
\end{definition}

\noindent Instruction \texttt{in} must appear at the beginning of a program and specifies the variables that must be defined prior to entering the program. Similarly, \texttt{out} occurs at the end and specifies the variables that are returned as output. 

By \texttt{e[x]} we indicate that \texttt{x} is a variable of the expression \texttt{e}\,$\in Expr$. We also denote by $vars($\texttt{e}$)$ the set of variables that occur in expression \texttt{e}. By $|\pi|=n$ we indicate the number of instructions in $\pi=\langle I_1, I_2, \ldots, I_n \rangle$.

\begin{figure}[!ht]
\vspace{0.25mm}
\begin{small}
\begin{center}
$\begin{array}{rcl}
\texttt{$Instr$} & ::= & \hphantom{\texttt{| }}\texttt{$Var$ := $Expr$} \\
& & \texttt{| if ( $Expr$ ) goto $Num$} \\
& & \texttt{| goto $Num$} \\
& & \texttt{| skip} \\
& & \texttt{| abort} \\
& & \texttt{| in $Var\cdots Var$} \\ 
& & \texttt{| out $Var\cdots Var$} \\
\texttt{ $Expr$ } & ::= & \texttt{$Num$ | $Var$ | $Expr$ + $Expr$ | $\ldots$ } \\
\texttt{ $Var$ } & ::= & \texttt{X | Y | Z | $\ldots$} \\
\texttt{ $Num$ } & ::= & \texttt{$\ldots$ | -2 | -1 | 0 | 1 | 2 | $\ldots$} \\
\end{array}$
\end{center}
\end{small}
\vspace{-3.5mm}
\caption{\label{fig:osr-program-syntax}Program Syntax}
\vspace{-2mm} 
\end{figure}

\subsection{Semantics}
\label{ss:semantics}

\begin{definition}[Memory Store]
A {\em memory store} is a total function $\sigma:Var\rightarrow \mathbb{Z}\cup\{\bot\}$ that associates integer values to defined variables, and $\bot$ to undefined variables. We denote by $\Sigma$ the set of all possible memory stores. 
\end{definition}

\noindent By $\sigma[\wx\gets v]$ we denote the same memory store function as $\sigma$, except that $\wx$ takes value $v$. Furthermore, for any $A\subseteq Var$, $\sigma\vert_{A}$ denotes $\sigma$ restricted to the variables in $A$, i.e., $\sigma\vert_{A}(\wx)=\sigma(\texttt{x})$ if $\wx\in A$ and $\sigma\vert_{A}(\wx)=\bot$ if $\wx\not\in A$. 

\begin{definition}[Program State]
\label{de:prog-state}
The {\em state} of a program $\pi=\langle I_1, I_2, \ldots, I_n \rangle$ is described by a pair $(\sigma,l)$, where $\sigma$ is a memory store and $l\in [1,n]$ is the program point of the next instruction to be executed. We denote by $State=\Sigma\times \mathbb{N}$ the set of all possible program states.
\end{definition}

\noindent We provide a big-step semantics using the transition relation $\trans_{\pi}\:\subseteq State\times State$, which specifies how a single instruction of a program $\pi$ affects its state. Our description relies on the relation $\Downarrow\subseteq(\Sigma\times Expr)\times \mathbb{Z}$ to describe how expressions are evaluated in a given memory store.

\begin{definition}[Big-Step Transitions]
\label{de:transitions}
For any program $\pi$, we define the relation $\Rightarrow_{\pi}\:\subseteq State\times State$ as follows, with meta-variables $\texttt{x}, \texttt{y}\in Var$, $\texttt{e}\in Expr$, and $\texttt{m}\in Num$:

\begin{small}
\begin{equation}
\label{eq:asgn-sem}
\frac
{I_l=\texttt{x:=e} ~~ \wedge ~~ (\sigma, \texttt{e}) \Downarrow v}
{(\sigma, l)\Rightarrow_{\pi} (\sigma[\wx\gets v], l+1)}
\end{equation}
\vspace{0.5mm}
\begin{equation}
\label{eq:goto-sem}
\frac
{I_l=\texttt{goto m}}
{(\sigma, l)\Rightarrow_{\pi} (\sigma, \texttt{m})}
\end{equation}
\vspace{0.5mm}
\begin{equation}
\label{eq:skip-sem}
\frac
{I_l=\texttt{skip}}
{(\sigma, l)\Rightarrow_{\pi} (\sigma, l+1)}
\end{equation}
\begin{equation}
\label{eq:ifz-sem}
\frac
{I_l=\texttt{if (e) goto m} ~~ \wedge ~~ (\sigma, \texttt{e}) \Downarrow 0}
{(\sigma, l)\Rightarrow_{\pi} (\sigma, l+1)}
\end{equation}
\vspace{0.5mm}
\begin{equation}
\label{eq:ifnz-sem}
\frac
{I_l=\texttt{if (e) goto m} ~~ \wedge ~~ (\sigma, \texttt{e}) \Downarrow v ~~~ \wedge ~~~ v\neq 0}
{(\sigma, l)\Rightarrow_{\pi} (\sigma, \texttt{m})}
\end{equation}
\vspace{0.5mm}
\begin{equation}
\label{eq:in-sem}
\frac
{I_1=\texttt{in x y}~\cdots ~~ \wedge ~~~ \sigma(\texttt{x})\neq\bot ~~~ \wedge ~~~ \sigma(\texttt{y})\neq\bot ~~~ \wedge ~~~ \cdots }
{(\sigma, 1)\Rightarrow_{\pi} (\sigma, 2)}
\end{equation}
\begin{equation}
\label{eq:out-sem}
\frac
{I_n=\texttt{out x y}~\cdots ~~ \wedge ~~~ \sigma(\texttt{x})\neq\bot ~~~ \wedge ~~~ \sigma(\texttt{y})\neq\bot ~~~ \wedge ~~~ \cdots }
{(\sigma, n)\Rightarrow_{\pi} (\sigma\vert_{\{\texttt{x}, \texttt{y}, \cdots\}}, n+1)}
\end{equation}
\end{small}
\vspace{0.5mm} 
\end{definition}

\noindent For a transition to apply, we assume that $I_l$ is defined, i.e., $l\in[1,n]$. 


\begin{definition}[Program Semantic Function]
\label{de:program-semantics}
We define the semantic function $\mysem{\pi}:\Sigma \rightarrow \Sigma$ of a program $\pi$ as: 
\begin{gather*}
\forall \sigma\in\Sigma: ~~ \mysem{\pi}(\sigma)=\sigma'~~
\Longleftrightarrow ~~ (\sigma,1) \Rightarrow^{*}_{\pi} (\sigma',|\pi|+1)
\end{gather*}
where $\Rightarrow^{*}_{\pi}$ is the transitive closure of $\Rightarrow_{\pi}$.
\end{definition}

\noindent Note that a program has undefined semantics if its execution on a given store does not reach the final \texttt{out} instruction. This accounts for infinite loops, abort instructions, exceptions, and ill-defined programs or input stores. 
We define the notion of program semantic equivalence as follows:

\begin{definition}[Program Equivalence]
\label{de:semantic-equivalence}
Two programs $\pi_1$ and $\pi_2$ are {\em semantically equivalent} iff $\mysem{\pi_1}=\mysem{\pi_2}$.
\end{definition}

\noindent A notion that will be useful in our framework is that of a {\em trace} of a transition system:

\begin{definition}[Traces]
\label{de:exec-trace}
A {\em trace} in a transition system $(S,$ $R\subseteq S^2)$ starting from $s\in S$ is a sequence $\tau=\langle s_0,s_1,\ldots,$ $s_i,\ldots\rangle$ such that $s_0=s$ and $\forall i\ge 0:~s_i\in\tau ~ \wedge ~ s_i~R~s_{i+1}$ $\Longleftrightarrow s_{i+1}\in\tau$. By ${\mathcal T}_{R,s}$ we denote the system of all traces of $(S,R\subseteq S^2)$ starting from $s$. By $\tau[i]$ we denote the $i$-th state of a trace $\tau$, i.e., $\tau[i]=s_i$. Furthermore, if $\tau$ is finite then $|\tau|$ denotes the index of its final state, i.e., $\tau=\langle s_0,s_1,\ldots,s_{|\tau|}\rangle$, otherwise $|\tau|=\infty$. Finally, $dom(\tau)=\{i: s_i\in\tau\}$ denotes the set of indexes of states in $\tau$.
\end{definition}

\noindent Notice that since $\Rightarrow_{\pi}$ is deterministic in our language, then for any initial store $\sigma$, the system of traces ${\mathcal T}_{\Rightarrow_{\pi},(\sigma,1)}$ of the execution transition system $(Store,\Rightarrow_{\pi})$ contains a single trace, which we denote by $\tau_{\pi\sigma}$.

\oldrevision{
Finally, we provide a formal definition of control flow graph, which will be useful in defining computation tree logic operators for reasoning on program properties:

\begin{definition}[Control Flow Graph]
\label{de:cfg}
The {\em control flow graph} (CFG) for a program $\pi=\langle I_1, I_2, \ldots, I_n \rangle$ is described by a pair $G=(V, E \subseteq V\times V)$ where:
\begin{align*}
V &= \{ I_1, I_2, \ldots, I_n \} \\
E &= \{(I_i, I_{i+1})\:|\: I_i \neq \textsf{abort} \wedge I_i \neq \textsf{goto m}, \!\textsf{ m}\in Num \} \\
&\cup\;\{(I_i, I_m)\:|\: I_i = \textsf{goto m} \vee I_i = \textsf{if (e) goto m}, \!\textsf{ m}\in Num, \!\textsf{ e}\in Expr \}.
\end{align*}
\end{definition}
}





\subsection{Reasoning about Program Properties}
\label{ss:reasoning}

To analyze properties of a program, we use Boolean formulas with free meta-variables that combine facts that must hold globally or at certain points of a program. Formulas can be checked against concrete programs by a {\em model checker}. For any program $\pi$ and formula $\phi$, the checker verifies whether there exists a substitution $\theta$ that binds free meta-variables with program objects so that $\theta(\phi)$ is satisfied in $\pi$. 
In this article, by $\mathcal{A}\models \phi$ we mean that $\phi$ is true in $\mathcal{A}$, i.e., formula $\phi$ is satisfied by structure $\mathcal{A}$ (or equivalently, $\mathcal{A}$ models $\phi$)~\cite{Clarke86}. 

Two global predicates that we will use later on are ${\tt conlit}(\wc)$, which states that an expression $\wc$ is a constant literal, and ${\tt freevar}(\wx,\we)$, which holds if and only if $\wx$ is a free variable of the expression $\we$.

\oldrevision{
To support analyses based on facts that involve finite maximal paths in the control flow graph, such as liveness and dominance, we use formulas based on CTL operators. In order to introduce these operators, we need to formalize the concept of finite maximal paths first.

\begin{definition}[Set of Complete Paths] Given a control flow graph $G=(V,E)$ and an initial node $n_0\in V$, the {\em set of complete paths} $CPaths(n_0,G)$ starting at $n_0$ consists of all finite sequences $\langle n_0,n_1,\ldots,n_k\rangle$ such that $(n_i,n_{i+1})\in E$ for all $n_i$ with $i<k$, and such that there does not exist a $n_{k+1}$ such that $(n_k,n_{k+1})\in E$.
\end{definition}

\noindent Complete paths from a specified node (i.e., instruction) are thus maximal finite sequences of connected nodes through a control flow graph from an initial point to a sink node, which in our setting is unique (unless {\tt abort} instructions are present) and corresponds to the final instruction $I_n$.
}

To support analyses based on facts that involve finite maximal paths in the control flow graph (CFG), such as liveness and dominance, we use formulas based on CTL operators.
First-order CTL can be used to specify properties of nodes and paths in a CFG. In particular, temporal CTL operators can be used to express properties of some or all possible future computational paths, any one of which might be an actual path that is realized.
We say that for any point $l$ in a program $\pi$ and two formulas $\phi$ and $\psi$, the following predicates are satisfied at $l$:
\begin{itemize}[parsep=0pt,topsep=3pt]
\item $\overrightarrow{AX}(\phi)$: if $\phi$ holds for all immediate successors of $l$;
\item $\overrightarrow{EX}(\phi)$: if $\phi$ holds for at least one immediate successor of $l$;
\item $\overrightarrow{A}(\phi~U~\psi)$: if $\phi$ holds on all paths from $l$, until $\psi$ holds;
\item $\overrightarrow{E}(\phi~U~\psi)$: if $\phi$ holds on at least one path from $l$, until $\psi$ holds.
\end{itemize}
\noindent Corresponding operators $\overleftarrow{AX}$ and $\overleftarrow{EX}$ are defined for immediate predecessors of $l$, while $\overleftarrow{A}$ and $\overleftarrow{E}$ refer to backward paths from $l$. Operators $A$ and $E$ are quantifiers over paths, while $X$ and $U$ path-specific quantifiers. Notice that $\phi~U~\psi$ requires that $\phi$ has to hold at least until at some node $\psi$ is satisfied: $\psi$ will thus be verified in the future.

\oldrevision{
Before formalizing the temporal operators that we are going to use in the next sections, we provide an intuitive definition for them. We say that, given a point $l$ in a program $\pi$ and two formulas $\phi$ and $\psi$, the following predicates are satisfied at $l$ if:

\begin{itemize}[parsep=0pt,topsep=3pt]
\item $\overrightarrow{AX}(\phi)$: $\phi$ holds for all immediate successors of $l$;
\item $\overrightarrow{EX}(\phi)$: $\phi$ holds for at least one immediate successor of $l$;
\item $\overrightarrow{A}(\phi~U~\psi)$: $\phi$ holds on all paths from $l$, until $\psi$ holds;
\item $\overrightarrow{E}(\phi~U~\psi)$: $\phi$ holds on at least one path from $l$, until $\psi$ holds.
\end{itemize}
\noindent Corresponding operators $\overleftarrow{AX}$ and $\overleftarrow{EX}$ are defined for immediate predecessors of $l$, while $\overleftarrow{A}$ and $\overleftarrow{E}$ refer to backward paths from $l$.


\begin{definition}[Temporal Operators]
Given a node $n$ in the control flow graph $G=(V,E)$ of a program $\pi$, we define the following CTL {\em temporal operators}:

\begin{align*}
n \models \overrightarrow{AX}(\phi) &\Longleftrightarrow \forall m: (n,m)\in E: \pi,m\models\phi \\
n \models \overrightarrow{EX}(\phi) &\Longleftrightarrow \exists m: (n,m)\in E: \pi, m\models\phi \\
n \models \overrightarrow{A}(\phi~U~\psi) &\Longleftrightarrow \forall p: p\in CPaths(n,G): Until(\pi, p,\phi,\psi) \\
n \models \overrightarrow{E}(\phi~U~\psi) &\Longleftrightarrow \exists p: p\in CPaths(n,G): Until(\pi, p,\phi,\psi) \\
\end{align*}

\vspace{-0.5em} 
\noindent where predicate $Until(\pi,p,\phi,\psi)$ holds for $p = \langle n_0,n_1,\ldots,n_k\rangle \in CPaths(n_0,G)$ if:
\vspace{-0.5em}

\begin{equation*}
\exists j: 0 \le j\le k: \pi, n_j \models \psi \; \wedge \: \forall 0 \le i < j: \pi, n_i \models \phi
\end{equation*}

\noindent Operators $\overleftarrow{AX}$, $\overleftarrow{EX}$, $\overleftarrow{A}$, and $\overleftarrow{E}$ can be defined similarly on the reverse control flow graph $\overleftarrow{G}$, which is identical to $G$ but with every edge in $\overleftarrow{E}$ flipped.
\end{definition}

}

\begin{figure}[!ht]
\begin{small}
\begin{center}
\begin{eqnarray*}
\wdef(\wx) & \triangleq & I_l= \texttt{x:=e} ~~ \vee ~~ I_l= \texttt{in} ~ \cdots ~ \texttt{x} \cdots \\
                            &            & [\wx ~ \textit{is defined by instruction} ~ I_l ~ \textit{in} ~ \pi] \\
\wuse(\wx) & \triangleq & I_l= \texttt{y:=e[x]} ~ \vee  \\
                            &            & I_l= \texttt{if (e[x]) goto m} ~ \vee \ \\
                            &            & I_l= \texttt{out} ~ \cdots ~ \texttt{x} \cdots \\
                            &            & [\wx ~ \textit{is used by instruction} ~ I_l ~ \textit{in} ~ \pi] \\
\stmt(I) & \triangleq & I=I_l ~~~ [I ~ \textit{is the instruction at} ~ l ~ \textit{in} ~ \pi]\\
\point(\texttt{m}) & \triangleq & \texttt{m}=l ~~~ [\textit{program point} ~ \wm ~ \textit{is} ~ l ~ \textit{in} ~ \pi] \\
\wtrans(\we) & \triangleq & I_l= \texttt{x:=e'} ~ \wedge ~ \neg\wfreevar(\wx,\we) ~ \vee I_l\neq\texttt{x:=e'}\\
                            &		& [\textit{no constituent of}~\we~\textit{is modified by instruction}~I_l ~ \textit{in} ~ \pi] \\                            
\islive(\wx) & \triangleq & \overleftarrow{AX}\overleftarrow{A}(\text{true} ~ U ~ \wdef(\wx))  \wedge\overrightarrow{E}(\neg\wdef(\wx) ~ U ~ \wuse(\wx)) \\
                            &            & [\wx ~ \textit{is live at program point} ~ l ~ \textit{in} ~ \pi] \\                            
\ureachdef(\wx,l') & \triangleq & \overleftarrow{AX}\overleftarrow{A}(\neg\wdef(\wx) ~ U ~ \point(l')\wedge\wdef(\wx)) \\
                            &            & [\textit{unique definition of}~\wx~{at}~l'~\textit{reaching}~l~\textit{in} ~ \pi] \\
\end{eqnarray*}
\end{center}
\end{small}
\vspace{-4mm}
\caption{\label{fig:osr-loc-pred}Predicates expressing local properties of a point $l\in [1,n]$ in a program $\pi=\langle I_1,\ldots,I_n\rangle$, with meta-variables $\texttt{e},\texttt{e'}\in Expr$, $\texttt{x}, \texttt{y}\in Var$, and $l, \texttt{m}\in Num$.}
\end{figure}


\myfigure\ref{fig:osr-loc-pred} shows a number of local predicates that will be useful throughout this article.
For instance, $\pi,l\models \ureachdef(\wx, l')$ holds if there is a {\em unique reaching definition} of $\wx$ that reaches $l$, and this definition is at $l'$. Its formulation states that on all backward paths ($\overleftarrow{A}$) starting at all the predecessors of $l$ ($\overleftarrow{AX}$), there is no node assigning to $x$ until $l'$ is reached. The following definition will be useful, too: 

\begin{definition}[Live Variables]
\label{de:live-var}
The set of live variables of a program $\pi$ at point $l$ is defined as:
\vspace{-1mm}
\begin{equation*}
\live(\pi,l) \triangleq \{ ~ \wx\in Var: \pi, l\models \islive(\wx) ~ \}
\end{equation*}
\end{definition}





\subsection{Program Transformations}
\label{ss:transformations}

To describe program transformations, we use rewrite rules with side conditions in a similar manner to~\cite{Lacey04,Kundu09}. We consider generalized rules that transform multiple instructions simultaneously, with side conditions drawn from CTL formulas:

\begin{definition}[Rewrite Rule]
\label{de:rewrite-rule}
A rule $T$ has the form:
\vspace{-1mm}
\begin{equation*}
\begin{array}{lllll}
T = & m_1: \hat{I}_1 \Longrightarrow \hat{I'}_1 
& \cdots
& m_r: \hat{I}_r \Longrightarrow \hat{I'}_r
& {\tt if} ~ \phi
\end{array}
\vspace{-1mm}
\end{equation*}
\noindent where $\forall k\in[1,r]$, $m_k$ is a meta-variable that denotes a program point, $\hat{I}_k$ and $\hat{I'}_k$ are program instructions that can contain meta-variables, and $\phi$ is a side condition that states whether the rewriting rule can be applied to the input program. We denote by $\Tau$ the set of all possible rewrite rules.
\end{definition}

\oldrevision{
\noindent An elementary example of rewrite rule with meta-variables $\wm$, $\wx$, and $\wy$ is: $$m: ~ {\tt y:=2*x} ~~ \Longrightarrow ~~ {\tt y:=x+x} ~~~ {\tt if} ~ true$$ which encodes a peephole optimization based on a weak form of operator strength reduction~\cite{Cooper01}.
}

\noindent Rules can be applied to concrete programs by a transformation engine based on model checking: when the checker finds a substitution $\theta$ that binds free meta-variables with program objects so that $\theta(\phi)$ is satisfied in $\pi$ and $\theta(\hat{I}_k)=I_{\theta(m_k)}\in \pi$ for some $k\in[1,t]$, then $I_{\theta(m_k)}$ is replaced with $\theta(\hat{I'}_k)=I'_{\theta(m_k)}\in \pi'$, as formalized next:

\begin{definition}[Rule Semantics]
\label{de:trans-func}
Let $T$ be a rewrite rule as in \mydefinition\ref{de:rewrite-rule}. The transformation function $\mysem{T}: Prog\rightarrow Prog$ is defined as follows:
\vspace{-2mm}
\begin{multline*}
\forall \pi, \pi'\in Prog: \pi'=\mysem{T}(\pi) \Longleftrightarrow
\exists ~ \theta: ~ \pi\models \theta(\phi) ~ \wedge ~ \\
\forall k\in[1,r]: \theta(\hat{I}_k)=I_{\theta(m_k)}\in \pi ~ \wedge ~ \theta(\hat{I'}_k)=I'_{\theta(m_k)}\in \pi'
\end{multline*}
We say that $T$ is {\em semantics-preserving} if for any program $\pi$ it holds $\mysem{\pi}=\mysem{\pi'}$, where $\pi'=\mysem{T}(\pi)$.
\end{definition}

\noindent In this article, we focus on transformations that do not alter the semantics of a program. Examples of semantics-preserving rules for classic compiler optimizations as proved in~\cite{Lacey02,Lacey04} are given in \myfigure\ref{fig:sample-trans}.

\begin{figure}[!hb]
\begin{center}
\begin{small}
\begin{minipage}[t]{0.47\textwidth}
\begin{tabularx}{0.9\textwidth}{X}{
\begin{tabularx}{\textwidth}{|X|}\hline
{\bf Constant propagation} (CP)\\\hline 
$m: ~ {\tt x:=e[v]} ~~ \Longrightarrow ~~ {\tt x:=e[c]}$ \\
${\tt if} ~~ \wconlit(\wc) ~ \wedge ~ m \models \overleftarrow{A}(\neg\wdef(\wv) ~ U ~ \wstmt({\tt v:=c}))$ \\\hline
\end{tabularx}
\newline
\newline
\begin{tabularx}{\textwidth}{|X|}\hline
{\bf Dead code elimination} (DCE)\\\hline 
$m: ~ {\tt x:=e} ~~ \Longrightarrow ~~ {\tt skip}$ \\
${\tt if} ~~ m \models \overrightarrow{AX} ~ \neg\overrightarrow{E}(true ~ U ~ \wuse(\wx))$ \\\hline
\end{tabularx}
\vspace{-2mm}
}
\end{tabularx}
\end{minipage}
\quad
\begin{minipage}[t]{0.47\textwidth}
\begin{tabularx}{\textwidth}{|X|}\hline
{\bf Code hoisting} (Hoist)\\\hline 
$p: ~ {\tt skip} ~~ \Longrightarrow ~~ {\tt x:=e}$ \\
$q: ~ {\tt x:=e} ~~ \Longrightarrow ~~ {\tt skip}$ \\
${\tt if} ~~ p \models \overrightarrow{A}(\neg\wuse(\wx) ~ U ~ \wpoint(q)) ~~ \wedge$ \\
$\hphantom{\texttt{if}} ~~ q \models \overleftarrow{A}((\neg\wdef(\wx)\vee\wpoint(q))\wedge \wtrans(e) ~ U ~ \wpoint(p))$ \\\hline
\end{tabularx}
\end{minipage}
\vspace{-2mm} 
\end{small}
\end{center}
\caption{\label{fig:sample-trans} Rewriting rules for CP, DCE, and Hoist transformations.} 
\vspace{-3mm}
\end{figure}

\oldrevision{
\begin{definition}[Semantics-Preserving Rules]
\label{de:sound-trans}
A rewrite rule $T$ is {\em semantics-preserving} if for any program $\pi$ it holds $\mysem{\pi}=\mysem{\pi'}$, where $\pi'=\mysem{T}(\pi)$.
\end{definition}
}


\oldrevision{
The constant propagation (CP) rule replaces uses of a variable $v$ at a node $m$ with a constant $c$. Its side condition is satisfied when in all backward paths starting at $m$, the first definition of $v$ we encounter is always $v:=c$.

The dead code elimination (DCE) rule deletes an instruction at a node $m$ if the result of its computation will never be used later in the execution. As we are not interested in uses of the variable itself at $m$, in the side condition we skip past it with $AX$ and specify that there should not exist a forward path that eventually reads from the variable. 

Finally, the code hoisting (Hoist) rule moves an assignment of the form $x:=v[e]$ from a node $q$ to a node $p$ provided that two conditions are met. The first requires that in all forward paths starting at the insertion point $p$, $x$ is not used until the original location $q$ is reached. The second requires that in all backward paths starting at $q$, $x$ is not reassigned at any node other than $q$ and the constituents of $e$ are not redefined, until the insertion point $p$ is reached. 
}

\section{On-Stack Replacement Framework}
\label{se:framework}

OSR consists in dynamically transferring execution from a point $l$ in a program $\pi$ to a point $l'$ in a program $\pi'$ so that execution can transparently continue from $\pi'$ without altering the original intended semantics of $\pi$. To model this behavior, we assume there exists a function that maps each point $l$ in $\pi$ where OSR can safely be fired to the corresponding point $l'$ in $\pi'$ from which execution can continue.

The OSR practice often makes the conservative assumption that $\pi'$ can always continue from the very same memory store as $\pi$~\cite{DElia16}. However, this assumption may reduce the number of points where sound OSR transitions can be fired. To overcome this limitation and support more aggressive OSR transitions, our model includes a {\em store compensation code} $\chi$ to be executed during an OSR transition from point $l$ in $\pi$ to point $l'$ in $\pi'$. The goal of the compensation code is to fix the memory store of $\pi$ at $l$ so that execution can safely continue in $\pi'$ from $l'$ with the fixed store. Note that if no compensation is needed for an OSR transition, $\mysem{\chi}$ is simply the identity function. We formalize these concepts in the next sections.


\subsection{OSR Mappings}
\label{ss:osr-mapping}

The machinery required to perform OSR transitions between two programs can be modeled as an {\em OSR mapping}:

\begin{definition}[OSR Mapping]
\label{de:osr-mapping}
For any $\pi,\pi'\in Prog$, an {\em OSR mapping} from $\pi$ to $\pi'$ is a (possibly partial) function
$\mu_{\pi\pi'}:[1,|\pi|]\rightarrow [1,|\pi'|]\times Prog$ such that:
\begin{gather*}
\forall \sigma\in\Sigma, \forall s_i=(\sigma_i,l_i)\in\tau_{\pi\sigma}: l_i\in dom(\mu_{\pi\pi'}),~\\
\exists \sigma'\in\Sigma, \exists s_j=(\sigma_j,l_j)\in\tau_{\pi'\sigma'}:\\
\mu_{\pi,\pi'}(l_i)=(l_j,\chi)~\wedge~\mysem{\chi}(\sigma_i\vert_{\live(\pi,l_i)})=\sigma_j\vert_{\live(\pi',l_j)}
\end{gather*}
A mapping is {\em strict} if $\sigma'=\sigma$. We call the set of all possible mappings between any pair of programs $OSRMap$.
\end{definition}



\noindent Intuitively, an OSR mapping provides the information required to transfer execution from any realizable state of $\pi$, i.e., an execution state that is reachable from some initial store by $\pi$, to a realizable state of $\pi'$. This definition is rather general, as a non-strict mapping allows execution to be transferred to a program $\pi'$ that is {\em not} semantically equivalent to $\pi$. For instance, $\pi'$ may contain speculatively optimized code, or just some optimized fragments of $\pi$~\cite{Guo11,Bala00,Gal09}. In such scenarios, execution in $\pi'$ can typically be invalidated by performing an OSR transition back to $\pi$ or to some other recovery program. Notice that \mydefinition\ref{de:osr-mapping} uses a weak notion of store equality restricted to live variables. To simplify the discussion, we assume that the memory store is only defined on scalar variables (we address memory \mytt{load} and \mytt{store} instructions in \mysection\ref{ss:load-store}). Hence, the behavior of a program only depends on the content of its live variables:

\begin{restatable}{theorem}{onlylivecount}
\label{thm:only-live-count}
For any program $\pi\in Prog$, any $\sigma,\sigma'\in\Sigma$, and any $l,l'\in \mathbb{N}$, it holds: 
$$
(\sigma,l)\Rightarrow_{\pi}(\sigma',l') ~~ \Longleftrightarrow ~~ (\sigma\vert_{\live(\pi,l)},l)\Rightarrow_{\pi}(\sigma'\vert_{\live(\pi,l')},l')
$$
\end{restatable}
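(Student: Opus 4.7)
The plan is to proceed by case analysis on the form of $I_l$, invoking the matching big-step rule (\ref{eq:asgn-sem})--(\ref{eq:out-sem}) and, for each rule, independently checking that (i) the side condition survives replacing $\sigma$ by $\sigma\vert_{\live(\pi,l)}$, and (ii) the resulting post-store, projected to $\live(\pi,l')$, coincides on both sides of the biconditional. The two directions are then symmetric because $\Rightarrow_{\pi}$ is deterministic and the same rule fires on each side.

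Two auxiliary facts will carry most of the weight. The first is a \emph{use-is-live} observation: whenever an expression $\we$ appears in $I_l$, every $\wx \in vars(\we)$ witnesses $\wuse(\wx)$ at $l$ (Figure~\ref{fig:osr-loc-pred}) and, under the standard well-formedness hypothesis that uses are dominated by definitions, satisfies both conjuncts of $\islive(\wx)$, giving $vars(\we) \subseteq \live(\pi, l)$. Since $\Downarrow$ depends only on $vars(\we)$, the evaluation premise $(\sigma, \we) \Downarrow v$ is invariant under restriction. The second is a \emph{backward liveness propagation} fact: if $(l, l')$ is a CFG edge and $I_l$ does not define $\wy$, then $\wy \in \live(\pi, l')$ implies $\wy \in \live(\pi, l)$, obtained by prepending $l \to l'$ to the witness path of $\overrightarrow{E}(\neg\wdef(\wy)~U~\wuse(\wy))$ at $l'$ and transferring the first conjunct of $\islive$ along the same edge.

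With these facts I expect the cases to go through as follows. In (\ref{eq:asgn-sem}), where $\sigma' = \sigma[\wx \gets v]$, any $\wy \in \live(\pi, l+1)$ either equals $\wx$ (both sides assign $v$) or differs from $\wx$, in which case propagation places $\wy \in \live(\pi, l)$ and so $\sigma(\wy) = \sigma\vert_{\live(\pi,l)}(\wy)$. The pure control-flow rules (\ref{eq:goto-sem})--(\ref{eq:ifnz-sem}) leave the store untouched, and propagation yields $\live(\pi, l') \subseteq \live(\pi, l)$ directly. Rule (\ref{eq:out-sem}) is analogous: each listed output triggers $\wuse$, hence lies in $\live(\pi, n)$, and both the non-$\bot$ precondition and the projection $\sigma\vert_{\{\wx, \wy, \ldots\}}$ are preserved by restriction.

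The hard part will be the entry rule (\ref{eq:in-sem}): its precondition requires \emph{every} declared input to be defined in $\sigma$, yet a declared input that is never subsequently read would be forced to $\bot$ by restriction to $\live(\pi, 1)$, breaking the forward direction. I plan to close this gap with the natural convention, consistent with Definition~\ref{de:program}, that the \texttt{in} list induces an implicit use at program entry, so that declared inputs are automatically included in $\live(\pi, 1)$; equivalently, one enlarges the CTL characterization of liveness at $l = 1$ to account for the input list. Under this convention the restriction preserves the side condition, and since $\sigma' = \sigma$ in (\ref{eq:in-sem}) the equality of post-stores on $\live(\pi, 2)$ reduces once again to the backward propagation fact applied across the \texttt{in} edge.
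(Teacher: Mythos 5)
Your proposal is correct and follows essentially the same route as the paper's proof: a case analysis on the big-step rules of Definition~\ref{de:transitions}, resting on the two facts that every variable occurring in an expression evaluated at $l$ belongs to $\live(\pi,l)$ (so $\Downarrow$ is invariant under restriction) and that liveness propagates backwards across the executed edge, i.e.\ $\live(\pi,l')\subseteq\live(\pi,l)\cup\{\texttt{x} \mid I_l=\texttt{x:=e}\}$. If anything you are more careful than the paper, which silently lumps the \texttt{in} rule together with the other store-preserving cases and leaves implicit the definite-assignment (first) conjunct of $\islive$: your explicit convention for declared inputs at $l=1$ and your well-formedness hypothesis patch exactly the corner cases the published proof glosses over, and your propagation lemma is stated in the direction actually needed (the paper's proof writes that inclusion with the inclusion sign reversed, evidently a typo).
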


\noindent Notice that $dom(\mu_{\pi\pi'})\subseteq [1,|\pi|]$ is the set of all possible points in $\pi$ where OSR transitions to $\pi'$ can be fired. If $\mu_{\pi\pi'}$ is partial, then there are points in $\pi$ where OSR cannot be fired. In \mysection{\ref{ss:osr-mapping-algorithms}} we present an algorithm whose goal is to minimize the number of these points.


\subsection{Live-Variable Equivalent Transformations}
\label{ss:osr-lve}

In this section we discuss sufficient properties for a compiler transformation to be turned into a provably correct building block of an OSR-aware compilation toolchain. We first need to introduce some formal machinery based on bisimilarity of programs. 


\begin{definition}[Program Bisimulation]
\label{de:bisimulation}
A relation $R\subseteq State\times State$ is a bisimulation relation between two programs $\pi$ and $\pi'$ if for any input store $\sigma\in \Sigma$ it holds:
\begin{align*}
s\in\tau_{\pi\sigma} ~ & \wedge ~~ s'\in \tau_{\pi'\sigma} ~~ \wedge ~~ s~R~s' \Longrightarrow \\
1) & ~~ s\Rightarrow_{\pi} s_1 ~~~ \Longrightarrow ~~~ s'\Rightarrow_{\pi'} s'_1 ~~ \wedge  ~~ s_1~R~s'_1 \\
2) & ~~ s'\Rightarrow_{\pi'} s'_1 ~~~ \Longrightarrow ~~~ s\Rightarrow_{\pi} s_1 ~~ \wedge  ~~ s_1~R~s'_1
\end{align*}
\end{definition}

\noindent Our notion of bisimulation between programs $\pi$ and $\pi'$ requires that $R$ be a bisimulation between transition systems $(\tau_{\pi\sigma}, \Rightarrow_{\pi})$ and $(\tau_{\pi'\sigma}, \Rightarrow_{\pi'})$ for any store $\sigma\in \Sigma$. This implies that for any $\sigma$, $\tau_{\pi\sigma}$ is finite if and only if $\tau_{\pi'\sigma}$ is finite; also, if they are finite, then they have the same length.
This assumption can be made without loss of generality, as equal length of traces can be enforced by padding programs with \wskip\ statements.

\begin{definition}[Partial State Equivalence]
\label{de:state-equiv-relation}
For any function $A:\mathbb{N}\rightarrow 2^{Var}$, the {\em partial state equivalence} relation $R_A\subseteq State\times State$ is defined as:
\begin{equation*}
R_A\triangleq\{ (s, s')\in State\times State:  ~s=(\sigma,l) ~ \wedge ~ s'=(\sigma',l) ~ \wedge ~ \sigma\vert_{A(l)} = \sigma'\vert_{A(l)} \}.
\end{equation*}
\end{definition}

\noindent Relation $R_A$ is clearly reflexive, symmetric, and transitive.

\begin{definition}[Live-Variable Bisimilar Programs]
\label{de:lvb-programs}
$\pi$ and $\pi'$ are {\em live-variable bisimilar} (LVB) if $R_{A}$ is a bisimulation relation between them, where $A=l\mapsto\live(\pi,l)\cap \live(\pi',l)$ is the function that yields for each program point $l$ the set of variables that are live at $l$ in both $\pi$ and $\pi'$.
\end{definition}

\noindent We can now formally define the class of transformations we are interested in as follows:

\begin{definition}[Live-Variable Equivalent Transformation]
\label{de:lve-trans}
A program transformation $T$ is {\em live-variable equivalent} (LVE) if for any program $\pi$, $\pi$ and $\mysem{T}(\pi)$ are live-variable bisimilar.
\end{definition}

\noindent Live-variable equivalence is a natural property of fundamental compiler optimizations that insert, delete, or move instructions around. Constant propagation, dead code elimination, and code hoisting as defined in \myfigure\ref{fig:sample-trans} are examples of LVE transformations. 
\begin{restatable}{theorem}{lvetransexamples}
\label{th:lve-trans-examples}
Transformations CP, DCE, and Hoist of \myfigure\ref{fig:sample-trans} are live-variable equivalent.
\end{restatable}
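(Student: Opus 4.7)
The plan is to prove each of the three claims (for \textsf{CP}, \textsf{DCE}, and \textsf{Hoist}) separately, but with a common template. For a transformation $T$ and a program $\pi$, let $\pi'=\mysem{T}(\pi)$ and define $A:\mathbb{N}\to 2^{Var}$ by $A(l)=\live(\pi,l)\cap\live(\pi',l)$. I would show that the partial state equivalence $R_A$ of \mydefinition\ref{de:state-equiv-relation} is a bisimulation between $\pi$ and $\pi'$ in the sense of \mydefinition\ref{de:bisimulation}, which by \mydefinition\ref{de:lvb-programs}--\ref{de:lve-trans} is exactly what is needed. The base case is trivial because $(\sigma,1)\,R_A\,(\sigma,1)$ for any initial store $\sigma$. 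The inductive step is a case analysis: for program points $l$ not involved in the rewrite the instructions of $\pi$ and $\pi'$ coincide, so a transition on one side is matched by the identical transition on the other and $R_A$ is preserved by \mytheorem\ref{thm:only-live-count} (transitions depend only on live variables). Thus the entire argument concentrates on the handful of points targeted by each rule.

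For \textsf{CP}, only point $m$ is modified: $\wx:=\we[\wv]$ becomes $\wx:=\we[\wc]$. The side condition $m\models\overleftarrow{A}(\neg\wdef(\wv)~U~\wstmt(\texttt{v:=c}))$ states that on every backward CFG path from $m$ the first def of $\wv$ encountered is $\wv:=\wc$; since every execution prefix reaching $m$ induces such a backward path, the deterministic semantics forces $\sigma(\wv)=\wc$ whenever execution in $\pi$ enters $m$. Consequently $(\sigma,\we[\wv])\Downarrow v$ iff $(\sigma,\we[\wc])\Downarrow v$, the stores after executing the instruction at $m$ are equal on all of $Var$, and $R_A$ closes. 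For \textsf{DCE} at $m$, the stores in $\pi$ and $\pi'$ agree everywhere before $m$ and can differ only on $\wx$ starting from $m+1$; the side condition $m\models\overrightarrow{AX}\,\neg\overrightarrow{E}(true~U~\wuse(\wx))$ ensures there is no use of $\wx$ on any forward path from $m+1$, so $\wx\notin\live(\pi,l)$ for every $l$ reachable from $m+1$, hence $\wx\notin A(l)$, and $R_A$ is preserved.

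For \textsf{Hoist}, splitting the trace into the segments before $p$, strictly between $p$ and $q$, and from $q$ onward, I would argue as follows. Before $p$ the two programs are identical. At $p$, $\pi'$ evaluates $\we$ and assigns to $\wx$ while $\pi$ skips; between $p$ and $q$ the right-hand-side condition $\overleftarrow{A}((\neg\wdef(\wx)\vee\wpoint(q))\wedge\wtrans(\we)~U~\wpoint(p))$ keeps the constituents of $\we$ unchanged, while the left-hand-side condition $\overrightarrow{A}(\neg\wuse(\wx)~U~\wpoint(q))$ kills $\wx$ on that segment, so $\wx\notin A(l)$ for $p<l\le q$. At $q$, $\pi$ now performs $\wx:=\we$; by the preservation of $\we$'s constituents the value computed at $q$ in $\pi$ equals the one computed at $p$ in $\pi'$, so from $q+1$ onward the stores agree on all variables, including $\wx$, and $R_A$ is maintained through the rest of the trace.

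The main obstacle I anticipate is the \textsf{Hoist} case, specifically bridging the CTL-level conditions (which talk about arbitrary CFG paths) with the concrete deterministic execution trace $\tau_{\pi\sigma}$, and verifying that the value of $\wx$ introduced early at $p$ in $\pi'$ genuinely never becomes live-relevant before the join at $q$: one has to rule out backward-reaching defs of $\wx$ that could place $\wx$ into $\live(\pi',l)$ in ways that mismatch $\live(\pi,l)$. The \textsf{CP} and \textsf{DCE} cases are largely mechanical once the CFG-to-trace bridge is established, so I would package that bridge as a small auxiliary lemma (every prefix of $\tau_{\pi\sigma}$ ending at $l$ projects to a backward CFG path from $l$) and invoke it uniformly in all three arguments.
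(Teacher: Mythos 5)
Your proposal is correct, and its skeleton --- the identity relation for CP, and a piecewise ``identity except on $\theta(\wx)$'' relation split at the rewrite points for DCE and Hoist --- is exactly the shape of the relations used in the paper's proof. The difference lies in how bisimilarity is established: the paper does not re-derive it, but imports the bisimulation relations from the correctness proofs of \cite{Lacey02} and then merely identifies, for each transformation, the set $A(l)$ on which related stores agree ($Val$ for CP, $Val\setminus\{\theta(\wx)\}$ after the deleted assignment for DCE and between $\theta(p)$ and $\theta(q)$ for Hoist), observing that this set contains $\live(\pi,l)\cap\live(\pi',l)$, so the imported relation refines $R_A$ of \mydefinition\ref{de:state-equiv-relation}. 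You instead prove directly from the CTL side conditions that $R_A$ is preserved along corresponding traces; this buys a self-contained argument that does not lean on an external correctness proof, at the price of redoing the semantic work, and your auxiliary lemma (every execution prefix ending at $l$ projects to a backward CFG path from $l$) is precisely the CFG-to-trace bridge that is tacit in the paper and in \cite{Lacey02}. Two small points to make explicit when you write it out: first, the obstacle you flag for Hoist resolves as you suspect --- between $p$ and $q$ the hoisted $\wx$ may indeed be live in $\pi'$, but the condition $\overrightarrow{A}(\neg\wuse(\wx)~U~\wpoint(q))$ makes it dead in $\pi$ on that segment, so it drops out of the intersection $A(l)$ no matter what, and only the invariance of $vars(\we)$ along the executed segment (from $\wtrans(\we)$) is needed to close the relation at $q$; second, when $m$ (or the pair $p,q$) sits inside a loop, your ``before/between/after'' split must be read per most-recent occurrence in the trace, which is exactly how the paper's case analysis over trace indexes for DCE and Hoist is phrased.
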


\noindent
The argument for the proof follows the bisimulation relations used in~\cite{Lacey02} to prove the transformations correct. For CP, $R$ is simply the identity relation, while for DCE and Hoist it is piecewise-defined on the indices of the traces. Further optimizations not formally discussed here are evaluated in \mysection\ref{ss:evaluation}.




\subsection{OSR Mapping Generation Algorithm}
\label{ss:osr-mapping-algorithms}
We now discuss how to automatically enhance an existing LVE transformation so that, given a base program $\pi$, it produces not only a rewritten program $\pi'=\mysem{T}(\pi)$, but also a forward OSR mapping $\mu_{\pi\pi'}$ from $\pi$ to $\pi'$ and a backward OSR mapping $\mu_{\pi'\pi}$ from $\pi'$ to $\pi$. The produced compensation code runs in $O(1)$ time and supports bidirectional OSR between $\pi$ and $\pi'$, enabling both optimization and deoptimization.

The proposed algorithm, which we call \osrtrans, is shown in \myalgorithm\ref{alg:osr-trans} and relies on two subroutines: 1) \apply\ (defined in Theorem~\ref{th:osr-trans-correctness} and in Section~\ref{ss:tracking-opt}) builds a program $\pi'$ by applying $T$ on $\pi$ and two functions $\Delta:[1,|\pi|]\rightarrow [1,|\pi'|]$, $\Delta':[1,|\pi'|]\rightarrow [1,|\pi|]$ that map OSR program points between $\pi$ and $\pi'$; 2) \buildcomp\ (\myalgorithm\ref{alg:osr-build-comp}) constructs the {\em store compensation code} to be included in the mappings. If any of the live variables at the OSR destination cannot be guaranteed to be correctly assigned, no entry is created (lines~\ref{line-trans:def}, \ref{line-trans:defprime} in \myalgorithm\ref{alg:osr-trans}) and the point will not be eligible for OSR. In \mysection\ref{ss:evaluation} we analyze experimentally the fraction of points for which a compensation code can be created by \buildcomp\ in a variety of prominent benchmarks.

\oldrevision{
\begin{enumerate}[itemsep=2pt,parsep=0pt,topsep=2pt]
 \item a program $\pi'=\mysem{T}(\pi)$;
 \item an OSR mapping $\mu_{\pi\pi'}$ from $\pi$ to $\pi'$;
 \item an OSR mapping $\mu_{\pi'\pi}$ from $\pi'$ to $\pi$.
\end{enumerate}

\noindent Mappings $\mu_{\pi\pi'}$ and $\mu_{\pi'\pi}$ produced by the algorithm are based on compensation code that runs in $O(1)$ time and support bidirectional OSR between $\pi$ and $\pi'$, enabling invalidation and deoptimization. The algorithm, which we call \osrtrans, is shown in \myalgorithm\ref{alg:osr-trans} and relies on two subroutines: \apply\ and \buildcomp.

\osrtrans\ then constructs a forward mapping $\mu_{\pi\pi'}$ from $l$ in $\pi$ to $\Delta(l)$ in $\pi'$ (lines~\ref{line-trans:foreach}--\ref{line-trans:def}), and a backward mapping $\mu_{\pi'\pi}$ from $l'$ in $\pi'$ to $\Delta'(l')$ in $\pi$ (lines~\ref{line-trans:foreachprime}--\ref{line-trans:defprime}). 
}

\begin{figure}[ht]
\IncMargin{2em}
\begin{algorithm}[H]
\DontPrintSemicolon
\LinesNumbered
\SetAlgoNoLine
\SetAlgoNoEnd
\SetNlSkip{1em} 
\Indm\Indmm
$\mathbf{algorithm} \> \> \osrtrans$($\pi, T$)$\rightarrow$($\pi'$,$\mu_{\pi\pi'}$,$\mu_{\pi'\pi}$):\;
\everypar={\nl}
\Indp\Indpp
$(\pi',\Delta,\Delta')\gets \texttt{apply}(\pi,T)$\tcc*[r]{$\Delta,\Delta'$ map program points between $\pi,\pi'$} 
\ForEach{$l\in dom(\Delta)$}{\label{line-trans:foreach}
    \lIf{$(\chi\gets\buildcomp(\pi,l,\pi',\Delta(l)))\neq\mundef$}{
	$\mu_{\pi\pi'}(l)\gets(\Delta(l),\chi)$\
    }\label{line-trans:def}
}
\ForEach{$l'\in dom(\Delta')$}{\label{line-trans:foreachprime}
    \lIf{$(\chi\gets\buildcomp(\pi',l',\pi,\Delta'(l')))\neq\mundef$}{
	$\mu_{\pi'\pi}(l')\gets(\Delta'(l'),\chi)$
    }\label{line-trans:defprime}
}
\Return{$(\pi',\mu_{\pi\pi'},\mu_{\pi'\pi})$}\;
\Indm\Indmm
\DecMargin{0.5em}
\caption{\label{alg:osr-trans} \osrtrans\ algorithm for OSR mapping construction.}
\IncMargin{0.5em}
\end{algorithm}
\vspace{-2mm}
\end{figure}

\begin{figure}[ht]
\vspace{-1.5mm}
\IncMargin{2em}
\begin{algorithm}[H]
\DontPrintSemicolon
\LinesNumbered
\SetAlgoNoLine
\SetAlgoNoEnd
\SetNlSkip{1em} 
\Indm\Indmm
$\mathbf{algorithm} \> \> \buildcomp$($\pi$, $l$, $\pi'$, $l'$)$\rightarrow\chi$:\;
\everypar={\nl}
\Indp\Indpp
$\chi\gets \textbf{in}~x_1~x_2~\cdots~x_k\,:\,\forall i\in[1,k]:\pi,l \models \live(x_i)$\;\label{line-bc:firstline}
mark all program points of $\pi'$ as unvisited\;\label{line-bc:mark}
$\mathbf{try}$
\ForEach{$\wx:~\pi',l'\models\live(\wx)\wedge\pi,l\models\neg\live(\wx)$}{\label{line-bc:foreach}
    $\chi \gets \chi \cdot \reconstruct(\wx, \pi, l, \pi', l', l')$\;\label{line-bc:reconstruct}
}
$\mathbf{catch~return}~\mundef$\;\label{line-bc:undef}
$\chi\gets \chi\cdot\textbf{out}~x_1~x_2~\cdots~x_{k'}:\forall i\in[1,k']:\pi',l'\models \live(x_i)$\;\label{line-bc:updatechi}
\Return{$\chi$}\;
\Indm\Indmm
\IncMargin{1.5em}
\caption{\label{alg:osr-build-comp} \buildcomp\ algorithm for compensation code construction.}
\DecMargin{1.5em}
\end{algorithm}
\vspace{-2mm}
\end{figure}

\begin{figure}[ht]
\vspace{-1.5mm}
\IncMargin{2em}
\begin{algorithm}[H]
\DontPrintSemicolon
\LinesNumbered
\SetAlgoNoLine
\SetAlgoNoEnd
\SetNlSkip{1em} 
\Indm\Indmm
$\mathbf{procedure} \> \> \reconstruct$($\wx$, $\pi$, $l$, $\pi'$, $l'$, $l''$):\;
\everypar={\nl}
\Indp\Indpp
\uIf{$\exists\hat{l}:\pi',l''\models\ureachdef(\wx,\hat{l})\wedge\pi',\hat{l}\models\stmt(\texttt{\rm x:=e})$}{\label{line-rec:first-inst}
    \lIf{$\hat{l}~\textsf{\rm is visited}$}{
	\hspace{-0.3em}\Return{$\langle\rangle$} 
    }
    mark $\hat{l}$ as visited\;
    \lIf{$\pi',l'\models\ureachdef(\wx,\hat{l})~\wedge~\pi',l'\models\live(x)~\wedge~\pi,l\models\live(x)$}{
    	\hspace{-0.3em}\Return{$\langle\rangle$}\label{line-rec:both-live} 
    }
    $\chi\gets \langle\rangle$\;
    \ForEach{$\wy:~\wy\in\wfreevar(\we)$}{\label{line-rec:freevar}
	$\chi \gets \chi \cdot \reconstruct(\wy, \pi, l, \pi', l', \hat{l})$\label{line-rec:recursive}
    }
    $\chi\gets \chi \cdot \texttt{\rm x:=e}$\;\label{line-rec:assign}
}
\lElse{
    \hspace{-0.2em}$\mathbf{throw}~\mundef$\label{line-rec:fail}
}
\Return{$\chi$}\;
\Indm\Indmm
\DecMargin{0.5em}
\caption{\label{alg:osr-reconstruct} Value reconstruction procedure used by \buildcomp.}
\IncMargin{0.5em}
\end{algorithm}
\vspace{-2mm}
\end{figure}

\oldrevision{
\paragraph{\texttt{OSR\_trans}} The algorithm relies on two subroutines: \apply\ and \buildcomp. Procedure \apply\ takes as input a program $\pi$ and a program rewriting function $T$, and returns a transformed program $\pi'$ and two functions $\Delta:[1,|\pi|]\rightarrow [1,|\pi'|]$, $\Delta':[1,|\pi'|]\rightarrow [1,|\pi|]$ that map OSR program points between $\pi$ and $\pi'$. Algorithm
\buildcomp\ (listed in \myalgorithm\ref{alg:osr-build-comp}) takes as input $\pi$, $l$, $\pi'$, $l'$ and aims to build a {\em store compensation code} $\chi$ that allows firing an OSR from $\pi$ at $l$ to $\pi'$ at $l'$. \osrtrans\ first calls \apply\ and then uses \buildcomp\ on $\pi$, $\pi'$, $\Delta$, $\Delta'$ to build the OSR mappings $\mu_{\pi\pi'},\mu_{\pi'\pi}$.

Lines~\ref{line-trans:foreach}--\ref{line-trans:def} build the forward mapping $\mu_{\pi\pi'}$ from $l$ in $\pi$ to $\Delta(l)$ in $\pi'$, while lines~\ref{line-trans:foreachprime}--\ref{line-trans:defprime} build the backward mapping $\mu_{\pi'\pi}$ from $l'$ in $\pi'$ to $\Delta'(l')$ in $\pi$. If any of the live variables at the OSR destination cannot be guaranteed to be correctly assigned, no entry is created in $\mu_{\pi\pi'}$ or $\mu_{\pi'\pi}$ for the OSR origin point (lines~\ref{line-trans:def} and~\ref{line-trans:defprime}). Hence, those points will not be eligible for OSR transitions. In \mysection\ref{se:evaluation} we analyze experimentally the fraction of points for which a compensation code can be created by \buildcomp\ in a variety of prominent benchmarks.
}

\paragraph{\texttt{build\_comp}} \myalgorithm\ref{alg:osr-build-comp} takes as input $\pi$, $\pi'$, and two locations $l$ and $l'$ to build a program $\chi$ that enables an OSR from $\pi$ at $l$ to $\pi'$ at $l'$. The ``{\tt in}'' statement spans the live variables at $l$ (line~\ref{line-bc:firstline}), while the ``{\tt out}'' statement yields the live variables at $l'$ (line~\ref{line-bc:updatechi}). The goal of $\chi$ is to make sure that all {\tt out} variables are correctly assigned, either because they already hold the correct value upon entry, or because they can be computed in terms of the input variables. The algorithm iterates over all the variables $x_i$ that are live at the destination, but not at the origin (line~\ref{line-bc:foreach}): procedure \reconstruct\ is called to build a code fragment that assigns $x_i$ with its correct value using live variables at the origin (line~\ref{line-bc:reconstruct}). On failure, an undefined compensation code is returned (line~\ref{line-bc:undef}), which implies that OSR cannot be performed at $l$. \reconstruct\ will mark points in $\pi'$ as visited to avoid duplicated code and unnecessary work. \buildcomp\ can be implemented with a running time linearly bounded by $|\pi'|$.

\paragraph{\texttt{reconstruct}} The procedure reported in \myalgorithm\ref{alg:osr-reconstruct} takes a variable \wx, the OSR origin and destination points $l$ and $l'$ in $\pi$ and $\pi'$, respectively, and an additional point $l''$ in $\pi'$. It builds a straight-line code fragment that assigns \wx\ with the value it would have had at $l''$ just before reaching $l'$ if execution had been carried on in $\pi'$ instead of $\pi$. The algorithm first checks whether there is a unique reaching definition of \wx\ of the form \mytt{x:=e} for point $l''$ at some point $\hat{l}$ in $\pi''$. In the presence of multiple reaching definitions, the algorithm gives up.
%
If \wx\ is live both at the origin $l$ and at the destination $l'$, and the definition of \wx\ at $\hat{l}$ that reaches $l''$ is also a unique reaching definition for $l'$ (line~\ref{line-rec:both-live}), then \wx\ would have assumed at $l''$ the same value available at $l'$. For the live-variable bisimilarity hypothesis, the algorithm correctly assumes that \wx\ is already available at the origin and no compensation code is needed to reconstruct it (\mytt{return} at line~\ref{line-rec:both-live}). If \wx\ is not available at $l$, then the algorithm iterates over all the constituents of the expression $e$ computed at $\hat{l}$ and recursively builds code that computes the values that they would have assumed at $\hat{l}$ just before reaching $l'$ if execution had been carried on in $\pi'$. Once the recursively generated code has been added to $\chi$, the assignment \mytt{x:=e} is appended to $\chi$ (line~\ref{line-rec:assign}).

\paragraph{Correctness}
Live-variable bisimilarity for $\pi$ and $\pi'$ is a sufficient condition for the correctness of \osrtrans:

\begin{restatable}{theorem}{osrtranscorrectness}
\label{th:osr-trans-correctness}
For any program $\pi$ and LVE transformation $T$, if ${\tt apply}(\pi,T)\triangleq(\pi', \Delta_I, \Delta_I)$ where $\pi'=\mysem{T}(\pi)$ and $\Delta_I:[1,|\pi|]\rightarrow [1,|\pi|]$ is the identity mapping between program points, then ${\tt OSR\_trans}(\pi,T)$ $=(\pi',\mu_{\pi\pi'},\mu_{\pi'\pi})$ yields a strict OSR mapping $\mu_{\pi\pi'}$ between $\pi$ and $\pi'$ and a strict OSR mapping $\mu_{\pi'\pi}$ between $\pi'$ and $\pi$.
\end{restatable}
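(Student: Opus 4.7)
The plan is to verify the two conditions of Definition~\ref{de:osr-mapping} for each of $\mu_{\pi\pi'}$ and $\mu_{\pi'\pi}$, using the hypothesis that $T$ is LVE together with the structural invariants of the procedures \buildcomp\ and \reconstruct. Because $\apply$ returns the identity maps $\Delta_I,\Delta_I$, each successful entry of $\mu_{\pi\pi'}$ (resp.\ $\mu_{\pi'\pi}$) has the form $l\mapsto(l,\chi)$, so the target program point is fixed and what remains is the semantic condition on the compensation code. By symmetry, I will focus on the forward mapping $\mu_{\pi\pi'}$; the backward case follows from the same argument with the roles of $\pi$ and $\pi'$ swapped, which is legitimate because LVE (hence LVB) is symmetric in its two programs.

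\textbf{Step 1: instantiate LVB and align the two traces.} Since $T$ is LVE, $\pi$ and $\pi'=\mysem{T}(\pi)$ are live-variable bisimilar, so the relation $R_A$ with $A(l)=\live(\pi,l)\cap\live(\pi',l)$ is a bisimulation between them. Fix any initial store $\sigma$ and any reachable state $s_i=(\sigma_i,l_i)\in\tau_{\pi\sigma}$ with $l_i\in dom(\mu_{\pi\pi'})$. Starting the paired execution of $\pi'$ from the \emph{same} $\sigma$ and walking the bisimulation step by step yields a state $s'_i=(\sigma'_i,l_i)\in\tau_{\pi'\sigma}$ at the same program point, with $\sigma_i\vert_{A(l_i)}=\sigma'_i\vert_{A(l_i)}$. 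This immediately gives strictness, since the two traces share the initial store.

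\textbf{Step 2: analyse the compensation code produced by \buildcomp.} Let $\chi=\buildcomp(\pi,l_i,\pi',l_i)\neq\mundef$ and set $V_{\textrm{in}}=\live(\pi,l_i)$, $V_{\textrm{out}}=\live(\pi',l_i)$. By lines~\ref{line-bc:firstline} and~\ref{line-bc:updatechi}, the semantic function $\mysem{\chi}$ is total on stores defined exactly on $V_{\textrm{in}}$ and returns a store defined on $V_{\textrm{out}}$. Split $V_{\textrm{out}}=(V_{\textrm{in}}\cap V_{\textrm{out}})\uplus(V_{\textrm{out}}\setminus V_{\textrm{in}})$. Variables in the first piece lie in $A(l_i)$ and are not touched by $\chi$, hence by Step~1 they agree with $\sigma'_i$. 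For each $x\in V_{\textrm{out}}\setminus V_{\textrm{in}}$, the call $\reconstruct(x,\pi,l_i,\pi',l_i,l_i)$ appends a straight-line fragment to $\chi$; I need a lemma stating the intended post-condition of this fragment.

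\textbf{Step 3: key lemma on \reconstruct.} I will prove by induction on the recursion depth that if $\reconstruct(x,\pi,l,\pi',l',l'')$ returns a code fragment $\chi_x$ without throwing $\mundef$, then appending $\chi_x$ to the code built so far preserves the invariant ``every variable already assigned by $\chi$ holds the value it would hold at its corresponding location in $\pi'$'' and, upon exit, $x$ holds the value $\sigma'_i$ assigns to it at the effective location along the path to $l'$. The base case is the early return at line~\ref{line-rec:both-live}: when $x$ is live at both $l$ and $l'$ with a common unique reaching definition $\hat{l}$ in $\pi'$, the LVB hypothesis forces $\sigma_i(x)=\sigma'_i(x)$, so no code is needed. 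The inductive step uses the unique reaching definition ${\tt x:=e}$ at $\hat{l}$: by determinism of $\Rightarrow_{\pi'}$ and uniqueness of the reaching definition, the value of $x$ at $l''$ equals the value of $e$ just after $\hat{l}$; the recursive calls on $\wfreevar(\we)$ (line~\ref{line-rec:freevar}) inductively materialize those free-variable values, and the appended assignment ${\tt x:=e}$ (line~\ref{line-rec:assign}) then computes the correct value. The marking mechanism (line~\ref{line-bc:mark} and subsequent ``mark $\hat{l}$ as visited'') is needed only to keep the generated code linear in $|\pi'|$; correctness is unaffected because revisiting a marked $\hat{l}$ means $x$ was already reconstructed earlier in $\chi$, so $\langle\rangle$ is the right continuation. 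I will also check that the condition at line~\ref{line-rec:first-inst} forbids silently using stale definitions: if no unique reaching definition exists, \reconstruct\ throws $\mundef$, \buildcomp\ returns $\mundef$, and no entry is inserted in $\mu_{\pi\pi'}$, which is consistent with Definition~\ref{de:osr-mapping}.

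\textbf{Main obstacle and wrap-up.} The delicate point is justifying the base case at line~\ref{line-rec:both-live}: invoking the LVB hypothesis requires that the value of $x$ stored in $\sigma_i$ and $\sigma'_i$ matches whenever $x$ is live in both programs at $l_i$, and that its value ``just before reaching $l'$'' coincides with its value at $l_i$ whenever the unique reaching definition is shared. This is precisely the structural content of $R_A$ together with the fact that no intervening assignment to $x$ occurs on the sub-path from $\hat{l}$ to either $l$ or $l'$, as witnessed by $\ureachdef$. Once the lemma of Step~3 is in place, combining it with Step~2 yields $\mysem{\chi}(\sigma_i\vert_{V_{\textrm{in}}})=\sigma'_i\vert_{V_{\textrm{out}}}$, which is the semantic obligation of Definition~\ref{de:osr-mapping}; since $s'_i\in\tau_{\pi'\sigma}$ with $\sigma'=\sigma$, the mapping $\mu_{\pi\pi'}$ is strict. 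Applying the symmetric argument to $\mu_{\pi'\pi}$ completes the proof.
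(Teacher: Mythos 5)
Your proposal is correct and follows essentially the same route as the paper: aligning the two traces via the live-variable bisimulation started from the same store (the paper's Lemmas~\ref{le:bisim-prop} and~\ref{le:lvb-same-loc}) and then proving, by induction on the recursion of \reconstruct, that \buildcomp's output satisfies the semantic condition of \mydefinition\ref{de:osr-mapping} (the paper's \mylemma\ref{le:build-comp-corr}), with strictness and the backward mapping handled exactly as you do. The only difference is presentational: the paper factors your Step~3 into a standalone lemma rather than inlining it.
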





\subsection{Composing Multiple Transformation Passes}
\label{ss:trans-compose}

A relevant property of OSR mappings is that they can be composed, allowing multiple optimization passes to be applied to a program using ${\tt OSR\_trans}$. The first ingredient is {\em program composition}, defined as follows:

\begin{definition}[Program composition]
\label{de:composition}
We say that two programs $\pi,\pi'\in Prog$ with $\pi=\langle I_1,\ldots,I_n\rangle$ and $\pi'=\langle I'_1,\ldots,I'_{n'}\rangle$ are {\em composable} if $I_n=\texttt{out}~v_1,\ldots,v_k$ and $I'_1=\texttt{in}~v'_1,\ldots,v'_{k'}$ with $\{v'_1,\ldots,v'_{k'}\}\subseteq\{v_1,\ldots,v_k\}$. For any pair of composable programs $\pi,\pi'$, we define $\pi\circ\pi'=\langle I_1,\ldots,I_{n-1},\hat{I'}_2,\ldots,\hat{I'}_{n'}\rangle$, where $\forall i\in[1,n']$, $\hat{I'}_i$ is obtained from $I'_i$ by relocating each {\tt goto} target $m$ with $m+n-2$.
\end{definition}

\oldrevision{
\begin{restatable}[Semantics of program composition]{lem}{progcompsem}
\label{le:prog-comp-sem}
Let $\pi,\pi'\in Prog$ be any pair of composable programs, then $\forall\sigma\in\Sigma,$ $\mysem{\pi\circ\pi'}(\sigma)=\mysem{\pi'}\left(\mysem{\pi}(\sigma)\right)$.
\end{restatable}
}

\noindent A composition of OSR mappings for composable programs can then be defined as follows:

\begin{restatable}[Mapping Composition]{theorem}{osrmappingcomp}
\label{thm:osr-mapping-comp}
Let $\pi,\pi',\pi''\in Prog$, let $\mu_{\pi\pi'}$ and $\mu_{\pi'\pi''}$ be OSR mappings as in \mydefinition\ref{de:osr-mapping}, and let $\mu_{\pi\pi'}\circ\mu_{\pi'\pi''}$ be a {\em composition of mappings} defined as follows:
\begin{gather*}
\forall l\in dom(\mu_{\pi\pi'}): \mu_{\pi\pi'}(l)=(l',\chi)\wedge l'\in dom(\mu_{\pi'\pi''}):\\
\mu_{\pi'\pi''}(l')=(l'',\chi')\implies(\mu_{\pi\pi'}\circ\mu_{\pi'\pi''})(l)=(l'',\chi\circ\chi')
\end{gather*}
Then $\mu_{\pi\pi'}\circ\mu_{\pi'\pi''}$ is an OSR mapping from $\pi$ to $\pi''$.
\end{restatable}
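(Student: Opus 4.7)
The plan is to unfold the definition of OSR mapping (\mydefinition\ref{de:osr-mapping}) and invoke the OSR-mapping hypothesis twice — once for $\mu_{\pi\pi'}$ and once for $\mu_{\pi'\pi''}$ — then glue the two compensation codes using \mylemma on the semantics of program composition (\texttt{prog-comp-sem}). Concretely, I would fix an arbitrary $\sigma\in\Sigma$ and a state $s_i=(\sigma_i,l_i)\in\tau_{\pi\sigma}$ with $l_i\in dom(\mu_{\pi\pi'}\circ\mu_{\pi'\pi''})$. By the definition of the composition, $l_i\in dom(\mu_{\pi\pi'})$, and writing $\mu_{\pi\pi'}(l_i)=(l',\chi)$ we also get $l'\in dom(\mu_{\pi'\pi''})$, say $\mu_{\pi'\pi''}(l')=(l'',\chi')$.

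Next I would apply \mydefinition\ref{de:osr-mapping} to $\mu_{\pi\pi'}$ at $s_i$ to obtain some $\sigma'\in\Sigma$ and a state $s'_j=(\sigma'_j,l')\in\tau_{\pi'\sigma'}$ with $\mysem{\chi}(\sigma_i\vert_{\live(\pi,l_i)})=\sigma'_j\vert_{\live(\pi',l')}$. Since $s'_j$ is a realizable state of $\pi'$, I can then apply \mydefinition\ref{de:osr-mapping} to $\mu_{\pi'\pi''}$ at $s'_j$ to produce some $\sigma''\in\Sigma$ and $s''_k=(\sigma''_k,l'')\in\tau_{\pi''\sigma''}$ with $\mysem{\chi'}(\sigma'_j\vert_{\live(\pi',l')})=\sigma''_k\vert_{\live(\pi'',l'')}$. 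Chaining these two equalities via \texttt{prog-comp-sem} yields
\[
\mysem{\chi\circ\chi'}(\sigma_i\vert_{\live(\pi,l_i)})=\mysem{\chi'}\bigl(\mysem{\chi}(\sigma_i\vert_{\live(\pi,l_i)})\bigr)=\sigma''_k\vert_{\live(\pi'',l'')},
\]
which is exactly the condition required for $\mu_{\pi\pi'}\circ\mu_{\pi'\pi''}$ to be an OSR mapping from $\pi$ to $\pi''$ at $l_i$.

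The main obstacle I foresee lies not in the semantic chaining itself, but in verifying that the composition $\chi\circ\chi'$ is actually well-defined in the sense of \mydefinition\ref{de:composition}. This requires that the \texttt{out}-list of $\chi$ be a superset of the \texttt{in}-list of $\chi'$. Inspecting \buildcomp\ (\myalgorithm\ref{alg:osr-build-comp}), the \texttt{out} of any compensation code produced for destination $(\pi',l')$ consists of $\live(\pi',l')$ (line~\ref{line-bc:updatechi}), while the \texttt{in} of any compensation code produced for source $(\pi',l')$ consists of $\live(\pi',l')$ (line~\ref{line-bc:firstline}); the two lists therefore coincide and composability follows. A small side issue is that the ``some $\sigma'$'' guaranteed by the OSR hypothesis for $\mu_{\pi\pi'}$ is not necessarily the original $\sigma$, but \mydefinition\ref{de:osr-mapping} only demands the existence of \emph{some} witness store at each hop, so the two applications compose without conflict. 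Finally, since the claim is stated as an existence condition over initial stores of $\pi''$, we simply take $\sigma''$ produced in the second hop as our witness; no additional determinism or semantics-preservation hypotheses on the individual mappings are needed.
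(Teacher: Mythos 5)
Your proposal is correct and follows essentially the same route as the paper's proof: apply \mydefinition\ref{de:osr-mapping} once to $\mu_{\pi\pi'}$ at the realizable state of $\pi$ and once to $\mu_{\pi'\pi''}$ at the resulting realizable state of $\pi'$, then chain the two equalities through \mylemma\ref{le:prog-comp-sem} to get $\mysem{\chi\circ\chi'}(\sigma_i\vert_{\live(\pi,l_i)})=\sigma''_k\vert_{\live(\pi'',l'')}$. Your additional check that $\chi\circ\chi'$ is well-defined under \mydefinition\ref{de:composition} (the \texttt{out} list of $\chi$ and the \texttt{in} list of $\chi'$ both being the live variables at $l'$ in $\pi'$) is a detail the paper leaves implicit, and is a welcome clarification rather than a divergence.
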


\oldrevision{
\begin{restatable}{cor}{composestrict}
\label{co:compose-strict}
Let $\pi,\pi',\pi''\in Prog$, let $\mu_{\pi\pi'}$ and $\mu_{\pi'\pi''}$ be strict OSR mappings as in \mydefinition\ref{de:osr-mapping}. Then $\mu_{\pi\pi'}\circ\mu_{\pi'\pi''}$ is a strict OSR mapping from $\pi$ to $\pi''$.
\end{restatable}
}

\subsection{Multi-Version Programs}
\label{se:multiprogram}

We conclude our formal treatment of OSR by proposing a general OSR model where computations are described by a {\em multi-version program}, which consists of different versions of a program along with OSR mappings to enable execution transfers between them. This captures possible OSR uses in their full generality.

\begin{definition}[Multi-Version Program]
\label{de:mv-program}
A multi-version program is an edge-labeled graph $\Pi=({\mathcal V}, {\mathcal E}, {\mathcal M})$ where ${\mathcal V}=\{ \pi_1, \pi_2, \ldots,\pi_r\}$ is a set of program versions, ${\mathcal E}\subseteq \Pi^2$ is a set of edges such that $(\pi_p,\pi_q)$ indicates that an OSR transition can be fired from some point of $\pi_p$ to $\pi_q$,
and ${\mathcal M}:{\mathcal E}\rightarrow OSRMap$ labels each edge $(\pi,\pi')\in {\mathcal E}$ with an OSR mapping from $\pi$ to $\pi'$.
\end{definition}


\noindent The state of a multi-version program is similar to the state of a program (\mydefinition\ref{de:prog-state}), but it also includes the index of the currently executed program version:


\begin{definition}[Multi-Version Program State]
The {\em state} of a multi-version program $\Pi=({\mathcal V}, {\mathcal E}, {\mathcal M})$ is described by a triple $(p,\sigma,l)$, where $p\in[1,|{\mathcal V}|]$ is the index of a program version, $\sigma$ is a memory store, and $l\in [1,|\pi_p|]$ is the point of the next instruction to be executed in $\pi_p$. The {\em initial state} from a store $\sigma$ is $(1,\sigma,1)$, i.e., computations start at $\pi_1$. We denote by $MState=\mathbb{N}\times\Sigma\times \mathbb{N}$ the set of all possible multi-version program states.
\end{definition}

\noindent A practical way to generate a multi-version program consists in starting from a base program and constructing a tree of different versions, where each version is derived from its parent by applying one or more transformations.

The execution semantics of a multi-version program is described by the following transition relation:

\begin{definition}[Multi-Version Big-Step Transitions]
\label{de:osr-semantics}
For any multi-version program $\Pi$, relation $\Rightarrow_{\Pi}\subseteq MState\times MState$ is defined as follows:

\begin{footnotesize}
\begin{equation}
\label{eq:mv-big-step}
\begin{array}{rc}
(Norm)
&
\dfrac
{(\sigma, l)\Rightarrow_{\pi_p} (\sigma',l')}
{(p,\sigma, l)\Rightarrow_{\Pi} (p,\sigma',l')}
\\
\\
(OSR)
&
\dfrac
{(\pi_p,\pi_q)\in {\mathcal E} ~ \wedge ~ (l',\chi)={\mathcal M}(\pi_p,\pi_q)(l) ~ \wedge ~ \sigma'=\mysem{\chi}(\sigma)}
{(p,\sigma, l)\Rightarrow_{\Pi} (q,\sigma',l')}\\
\end{array}
\end{equation}
\end{footnotesize}
\end{definition}

\noindent The meaning is that at any time, execution can either continue in the current program version (Norm rule), or an OSR transition -- if possible at the current point -- can direct the control to another program version (OSR rule). The choice is non-deterministic, i.e., an oracle can tell the execution engine which rule to apply.

In practice, the choice may be based for instance on profile data gathered by the runtime system: a common strategy is to dynamically ``OSR'' to the available version with the best expected performance on the actual workload. Notice that since $\Rightarrow_{\Pi}$ may be non-deterministic, in general there may be different final stores for the same initial store. However, we are interested here in multi-version programs that deterministically yield a unique result, which guarantees semantic transparency of OSR transitions:

\oldrevision{
Using this approach and the algorithm \dopasses\ described in \mysection\ref{ss:trans-compose}, it is straightforward to construct a multi-version program $\Pi=({\mathcal V}, {\mathcal E}, {\mathcal M})$ such that:
\vspace{-1mm}
\begin{align*}
(\pi_p,\pi_q)\in {\mathcal E} ~~ \Longleftrightarrow ~~ \exists L: ~ &{\tt do\_passes}(\pi_p,L)=(\pi_q,\mu,\mu') ~ \wedge ~ {\mathcal M}(\pi_p,\pi_q)=\mu ~~ \vee \\
&{\tt do\_passes}(\pi_q,L)=(\pi_p,\mu,\mu') ~ \wedge ~ {\mathcal M}(\pi_p,\pi_q)=\mu'
\end{align*}
}


\begin{restatable}[Multi-Version Program Determinism]{theorem}{mvprogdeterm}
\label{th:mv-prog-determ}
Let $\Pi$ $=({\mathcal V}, {\mathcal E}, {\mathcal M})$ be a multi-version program constructed using OSR mapping composition over LVE transformations. Then $\Pi$ is deterministic.
\end{restatable}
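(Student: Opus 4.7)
The plan is to show that every terminating $\Rightarrow_\Pi$-derivation starting from an initial state $(1,\sigma_0,1)$ yields the same final memory store, independently of how the oracle interleaves Norm and OSR steps. I would first pin down two structural facts that hold by construction of $\Pi$. (a) Every version $\pi_p\in\mathcal{V}$ is semantically equivalent to $\pi_1$: since $\pi_p$ arises from $\pi_1$ by composing LVE transformations, and live-variable bisimilarity combined with Theorem~\ref{thm:only-live-count} forces equality of semantic functions, a straightforward induction on the composition chain gives $\mysem{\pi_p}=\mysem{\pi_1}$ for every $p$. (b) Every edge mapping $\mathcal{M}(\pi_p,\pi_q)$ is strict: Theorem~\ref{th:osr-trans-correctness} yields strict mappings for the atomic LVE steps, and this property is preserved under the composition of Theorem~\ref{thm:osr-mapping-comp}, because the concatenated compensation code $\chi\circ\chi'$ of Definition~\ref{de:composition} keeps the initial store of the source trace consistent with that of the destination trace at every intermediate stage.

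With (a) and (b) in hand, I would prove by induction on derivation length the following invariant: for every reachable multi-version state $(p,\sigma,l)$ there is a state $(\sigma^{\star},l)\in\tau_{\pi_p\sigma_0}$ with $\sigma|_{\live(\pi_p,l)}=\sigma^{\star}|_{\live(\pi_p,l)}$. The base case $(1,\sigma_0,1)$ is immediate. A Norm step $(p,\sigma,l)\Rightarrow_\Pi(p,\sigma',l')$ is handled by applying Theorem~\ref{thm:only-live-count} to the one-step transition $(\sigma,l)\Rightarrow_{\pi_p}(\sigma',l')$ and to the matching step of the witness trace. An OSR step $(p,\sigma,l)\Rightarrow_\Pi(q,\mysem{\chi}(\sigma),l')$ uses strictness of $\mathcal{M}(\pi_p,\pi_q)$ to produce a witness $(\sigma^{\star\star},l')\in\tau_{\pi_q\sigma_0}$ with $\mysem{\chi}(\sigma|_{\live(\pi_p,l)})=\sigma^{\star\star}|_{\live(\pi_q,l')}$, and a further appeal to Theorem~\ref{thm:only-live-count} upgrades this into equality of $\mysem{\chi}(\sigma)$ and $\sigma^{\star\star}$ on $\live(\pi_q,l')$, re-establishing the invariant at the new state.

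Determinism is then immediate: any terminating derivation must end at some $(p,\sigma_{\mathrm{fin}},|\pi_p|+1)$, the final $\texttt{out}$ instruction returns a store depending only on the live variables there, and by the invariant together with (a) this output must equal $\mysem{\pi_1}(\sigma_0)$, independently of the choices made along the way. The main obstacle I anticipate is precisely the discrepancy between the ``full'' store carried by a multi-version execution and the ``live-restricted'' equalities that the OSR mapping definition guarantees: without Theorem~\ref{thm:only-live-count} there is no way to re-synchronize a post-OSR state with a canonical witness trace of the target version. A secondary technical point to verify carefully is that strictness survives mapping composition, which I would settle by unrolling Definition~\ref{de:composition} and checking that the input/output store contracts match up across consecutive compensation fragments; once these two ingredients are in place, the induction goes through without further complications.
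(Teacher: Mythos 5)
Your proposal is correct and follows essentially the same route as the paper: it establishes strictness under mapping composition (the paper's Corollary~\ref{co:compose-strict}), proves an invariant correlating each reachable multi-version state with a live-variable-equal state of the current version's trace from the same initial store (the paper's Lemma~\ref{le:comp-lemma}, which the paper proves by induction on OSR transitions rather than on derivation length), and concludes via Theorem~\ref{thm:only-live-count} and semantic equivalence of all versions. The differences are only in the granularity of the induction, not in the key ideas.
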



\subsection{Discussion}
\label{ss:osr-discussion}

\mytheorem\ref{thm:osr-mapping-comp} allows us to flexibly combine transformation rules, provided that an OSR mapping between the original and modified programs can be produced for each rule. \buildcomp\ can automatically generate compensation code required for LVE transformations, but the applicability of mapping composition is general, i.e., mappings from LVE and non-LVE transformations are still composable. Hence, our framework can be extended with algorithms that generate mappings for other transformations (e.g., vectorization-based ones) and the compensation code they produce can be combined with the one from LVE transformations. Function transformations such as inlining would instead require extending our formalism to account for procedures and for the relations between points across functions.

We would like to remark that the assumption of an identity mapping between program points required for live-variable bisimilarity is without loss of generality. In fact, it can always be enforced by padding programs with \wskip\ statements (e.g., the Hoist rule in \myfigure\ref{fig:sample-trans} expects a \wskip\ to already exist at the point where an instruction is moved) and is not required in a real compiler as we will see in \mysection\ref{ss:tracking-opt}. 


\section{LLVM Implementation}
\label{se:implementation}

In this section we present and evaluate an implementation in LLVM of our techniques for automatic OSR mapping construction. In particular, we discuss how to deal with the presence of memory \load\ and \store\ instructions, and how to implement algorithms \apply\ and \buildcomp\ in a real compiler.
We then investigate whether in the presence of a number of common compiler optimizations, \buildcomp\ can offer an extensive ``menu'' of possible program points where OSR can safely occur, generating the possibly required compensation code in an automated fashion. Our experiments suggest that bidirectional OSR is supported almost everywhere in this setting.

\subsection{The LLVM Compiler Infrastructure}
LLVM is designed to support transparent, life-long program analysis and transformation for arbitrary programs~\cite{Lattner04}. Front-ends are available for a number of static languages (e.g., \clang\ for C, C++, and Objective C/C++), while its MCJIT just-in-time compiler is currently employed to generate optimized code in virtual machines for a variety of dynamic languages, including Python, Ruby, Julia, and R.


The core of LLVM is its low-level intermediate representation (IR). A high-level language front-end  compiles a program's source code to LLVM IR; platform-independent {\em optimization passes} manipulate the IR, and a back-end eventually compiles it to native code, performing architecture-specific optimizations such as register allocation. A shared extensive optimization pipeline is offered to front-end authors to generate efficient code for their language. 

LLVM provides an infinite set of typed {\em virtual registers} in static single assignment (SSA) form~\cite{Cytron91}, and values can be transferred between registers and memory solely via \load\ and \store\ operations. When a program variable might assume a different value depending on where the control flow came from, a $\phi$ function merges multiple incoming virtual registers into a new one, i.e., a $\phi$-node.
Front-ends do not have to generate code in SSA form: they can place variables on the stack using the \alloca\ instruction, and access them using \load\ and \store. The \memtoreg\ pass will then construct the SSA form by promoting stack references to virtual registers.

\vspace{-1mm} 
\subsection{Integration with \texorpdfstring{\texttt{OSRKit}}{OSRKit}}
\label{ss:osrkit}

\osrkit~\cite{DElia16} is an LLVM library working at IR level: it allows a front-end to perform OSR at arbitrary locations, provided that optimizers can generate code to realign the state after the transition. This library overcomes limitations of previous OSR work in LLVM~\cite{Lameed13} that provides support for transitions at loop headers only when no state adjustments are required.

Given a base function \fbase, a variant \fvariant\ to ``OSR'' into, and a location \osrsource\ in \fbase, \osrkit\ instruments \fbase\ with an OSR point guarded by a user-provided condition. The transition is modeled as a function call that transfers the live state to a newly generated continuation function \fosrto, which is an efficient, specialized version of \fvariant\ that executes any required compensation code at its entry point before jumping to the resumption point \osrlanding.

\cite{DElia16} focuses on the engineering aspects for supporting OSR with compensation code in LLVM, presenting a case study on dynamic inlining with aggressive type specialization in MATLAB in which compensation code is hand-written. This article makes a step forward showing how to automatically generate and compose compensation code for LVE transformations on top of \osrkit\ using the algorithms from \mysection\ref{ss:osr-mapping-algorithms}.


\oldrevision{
Given a base function \fbase, a variant \fvariant\ and an OSR source location \osrsource\ in \fbase, \osrkit\ generates two functions \fosrfrom\ and \fosrto\ (\myfigure\ref{fig:osrkit}). \fosrfrom\ is an instrumented version of \fbase\ that checks an user-provided OSR condition before the execution reaches \osrsource\ and fires an OSR according to its outcome. The OSR transition is modeled as a function call to \fosrto, passing the live state at \osrsource\ to it as arguments for the call. \fosrto\ is a specialized version of \fvariant\ that executes a (possibly empty) compensation code in its entry point before jumping to the resumption point \osrlanding. \fosrto\ is also called the {\em continuation function} for the OSR transition.

\begin{figure}[t]
\begin{center}
\includegraphics[width=0.44\columnwidth]{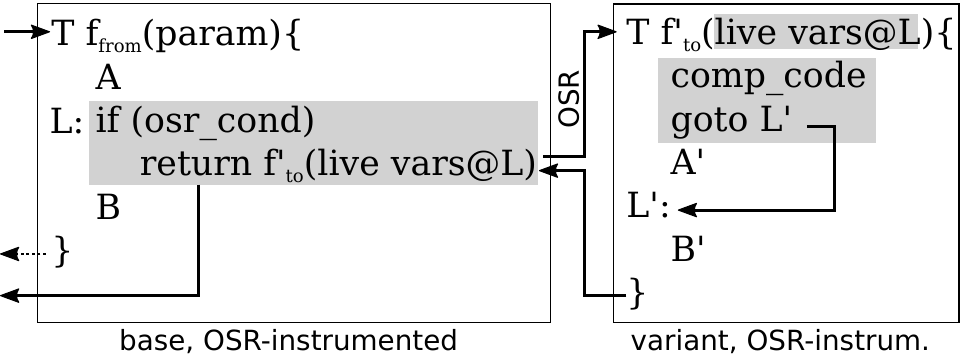}
\caption{\label{fig:osrkit} \osrkit\ generates a continuation function for an OSR transition~\protect\cite{DElia16}.}
\end{center}
\end{figure}

\osrkit\ provides a stub mechanism for modeling scenarios in which \fvariant\ is not known when \fosrfrom\ is generated (e.g., \fvariant\ is generated later using run-time profiling information), and relies on the LLVM optimization pipeline to generate the most efficient native code possible for an instrumented function.

\cite{DElia16} focuses on the engineering aspects for supporting OSR with compensation code in LLVM, presenting a case study on dynamic inlining with aggressive type specialization in MATLAB in which compensation code is hand-written. We have extended \osrkit\ to automatically create and compose compensation code for LVE transformations using the algorithms from \mysection\ref{ss:osr-mapping-algorithms}.
}


\subsection{Making Existing LLVM Passes OSR-Aware}
\label{ss:tracking-opt}
In this section, we discuss how to make existing LVE LLVM optimization passes OSR-aware. For the sake of simplicity, in Section~\ref{se:framework} we have made the impractical assumption that an OSR always jumps from a program point $l$ in $\pi$ to the same program point $l'=l$ in $\pi'$. However, in a real implementation a mapping between LLVM instruction locations across versions should be explicitly created by suitably defining the \apply\ function of \mysection\ref{ss:osr-mapping-algorithms}. We notice that it is sufficient to instrument LLVM optimizations at places where IR manipulations are done without having to rewrite them. We recall that, once the mappings $\Delta$ and $\Delta'$ between program points are created by \apply, compensation code can be automatically constructed using \buildcomp\ (\myalgorithm\ref{alg:osr-build-comp}).

Without loss of generality, we can capture the effects of an LVE program transformation in terms of six primitive actions: (1) \mytt{add}$(inst, loc)$ and (2) \mytt{delete}$(inst)$ to model code insertion and deletion; (3) \mytt{hoist}$(loc, newLoc)$ and (4) \mytt{sink}$(loc, newLoc)$ to move instructions; (5) \mytt{replace}$(inst, oldOp, newOp)$ to modify an operand of an instruction; and (6) \mytt{replaceAll}$(oldOp, newOp)$ to replace an operand with another in all of its uses in the function.


\oldrevision{
\begin{enumerate}[parsep=0pt,partopsep=0pt]
 \item \mytt{add}$(inst, loc)$: insert a new instruction $inst$ at location $loc$;
 \item \mytt{delete}$(loc)$: delete the instruction at location $loc$;
 \item \mytt{hoist}$(loc, newLoc)$: hoist an instruction from location $loc$ to $newLoc$;
 \item \mytt{sink}$(loc, newLoc)$: sink an instruction from location $loc$ to $newLoc$;
 \item \mytt{replace\_operand}$(inst, old\_op, new\_op)$: replace an operand $old\_op$ for a a given instruction $inst$ with another operand $new\_op$;
 \item \mytt{replace\_all}$(old\_op, new\_op)$: replace all uses of an operand in the code with another operand.
\end{enumerate}
}

Our implementation of \apply\ takes as input a function and an optimization, clones the function, optimizes the clone, and eventually constructs a mapping between program points in the two versions by processing the history of applied actions. The mapping is augmented with information correlating virtual registers from the two functions when fresh IR objects are introduced, e.g., an instruction is replaced with a more efficient one.
In our experience, to make an LLVM pass OSR-aware we had to insert 5-15 tracking primitive actions. The hardest part was clearly understanding what each LLVM pass does. Readers familiar with LLVM's internals may notice that most primitive actions mirror typical manipulation utilities used in optimization passes. 

\oldrevision{
Tracking actions of the first four kinds is essential in order to correlate program points, as a mapping between LLVM instruction locations across versions should be explicitly maintained.
An OSR mapping between two LLVM functions is defined in terms of virtual registers. A \RAUWfull\ operation updates the mapping as follows. When all uses of $O$ are replaced with $N$, $O$ becomes trivially dead: as in LVB programs $N$ and $O$ yield the same result, any virtual register $O'$ in the mapping pointing to $O$ can be updated to point to $N$. This is useful for deoptimization, as our experiments suggest that in an optimized program version a single variable can often be used in place of multiple variables from the unoptimized version.
}

\subsection{Supporting \texorpdfstring{\texttt{load}}{load} and \texorpdfstring{\texttt{store}}{store} Instructions}
\label{ss:load-store}
LLVM provides \load\ and \store\ instructions to transfer values between memory and virtual registers.
A simple sufficient condition for multi-program determinism is that \store\ instructions are executed at the same program point in all versions. Indeed, when two program versions assign to a variable with a \load\ from the same address, and the variable is live at some same program point in both versions, then the value read from memory has to be the same in both versions. Our implementation preserves the \store\ invariant above while allowing instructions that do not access memory to be hoisted above or sunk below a \store\ instruction. Common LLVM optimizations such as loop hoisting and code sinking deal with \store\ instructions in a similar manner.

A possible extension for scenarios where the above assumption might be too restrictive is as follows. Suppose that a \store\ is sunk during optimization. For each CFG location between the original location and the insertion point: (a) in an OSR to the optimized version, no compensation code is required, as the \store\ has been executed already, and re-executing it at the insertion point will be harmless; (b) in an OSR to the base version, we have to realign the memory state by executing the sunk \store, which has not been reached yet in the optimized version.

\oldrevision{
\begin{itemize} 
 \item in an OSR to the optimized version, no compensation code is required, as the \store\ has been executed already, and re-executing it at the insertion point will be harmless;
 \item in an OSR to the base version, we have to realign the memory state by executing the sunk \store, which has not been reached yet while executing the optimized version.
\end{itemize}
}

\subsection{Implementing \texorpdfstring{\mytt{build\_comp}}{build\_comp} and \texorpdfstring{\mytt{reconstruct}}{reconstruct}}
\label{ss:buildcomp-implementation}
We now discuss the implications of implementing \buildcomp\ (\myalgorithm\ref{alg:osr-build-comp}) for programs in SSA form. While this form guarantees that the reaching definition for a variable is unique at any point it dominates, \reconstruct\ gives up when attempting to reconstruct an assignment made through a $\phi$ function. Our current implementation also conservatively prevents \reconstruct\ from inserting \load\ instructions in the compensation code.

Compared to the abstract model described in \mysection\ref{se:language-framework}, the particular form of IR code generated by LLVM may limit the effectiveness of \reconstruct\ in our context. We have thus implemented three versions of the algorithm. We denote by $P$ the pool of variables at the OSR source that can be used to reconstruct the assignments. The $live$ version is the base version of \myalgorithm\ref{alg:osr-reconstruct} that includes in $P$ only those variables that are live at the OSR source.

The $live_{opt}$ version has a few enhancements. It can recursively reconstruct constant $\phi$-assignments\footnote{A constant $\phi$-assignment merges together the same value for all CFG paths. Examples are $\phi$-nodes placed by compilers at loop exits for values that are live across the loop boundary when constructing the so-called {\em Loop-Closed} SSA (LCSSA) form.} and includes in $P$ also non-live function parameters, as arguments cannot be modified by IR instructions in LLVM. $live_{opt}$ also exploits implicit aliasing information deriving from a \RAUW$(O,$ $N)$, as the corresponding $O'$ variable for $O$ in the mapping can be used to reconstruct $N$ when $N'$ is not live at the OSR source location. In fact, in an optimizing OSR a variable to set at the destination might  be aliased by multiple variables at the source.

\oldrevision{
Results for the $alias$ version of \reconstruct\ are not reported as they do not improve over $live_{(e)}$. Indeed, aliasing information is useful when a variable to set at the destination is aliased by multiple variables at the source, which we do not expect to happen after optimizations such as CSE have been applied. 
}

The $avail$ version includes in $P$ also those virtual registers that are not live at the source location, but contain {\em available} values that \reconstruct\ can directly assign to the instruction operand (line~\ref{line-rec:recursive}) or assignment (line~\ref{line-rec:assign}) being reconstructed. We exploit the uniqueness of reaching definitions to efficiently identify such variables. 

\oldrevision{
\begin{enumerate}
 \item The first version, which we will refer to as $live$, is the base version of \myalgorithm\ref{alg:osr-reconstruct} that includes in $P$ only those variables that are live at the OSR source.
 \item An enhanced version $live_{(e)}$ exploits some unique features of LLVM IR. In particular, it can recursively reconstruct a $\phi$-assignment that merges together the same value for all CFG paths\footnote{Compilers can place $\phi$-nodes at loop exits for values that are live across the loop boundary, constructing the so-called {\em Loop-Closed} SSA (LCSSA) form.}, and extends $P$ by including also non-live function parameters, as arguments cannot be modified by IR instructions.
 \item A third version, which we call $alias$, can exploit implicit aliasing information deriving from a \RAUW$(O,$ $N)$. Let $O'$ and $N'$ be the corresponding variables according to the OSR mapping for $O$ and $N$, respectively: we can add \mytt{N:=O'} to the compensation code required to reconstruct $N$ when $N'$ is not live at the source location, but $O'$ is.
 \item Finally, the fourth version $avail$ includes in $P$ also those virtual registers that are not live at the source location, but contain {\em available} values that \reconstruct\ can directly assign to the instruction operand (line~\ref{line-rec:recursive}) or assignment (line~\ref{line-rec:assign}) being reconstructed. We exploit the uniqueness of reaching definitions to efficiently identify such variables. 
\end{enumerate}
}



\subsection{Evaluation}
\label{ss:evaluation}

In this section we show that our algorithms enable bidirectional OSR transitions on prominent benchmarks almost everywhere in the code across multiple common, unhindered compiler optimizations.


\paragraph{Benchmarks and Environment}
We integrate our techniques in \tinyvm, a proof-of-concept virtual machine that provides an interactive environment for LLVM IR manipulation, JIT compilation, and benchmarking~\cite{DElia16AnonymousThesis}. We extend \tinyvm\ to automatically construct and compose OSR mappings for a sequence of transformations applied to a function $f_{base}$ to generate an optimized version $f_{opt}$. For each feasible OSR point in $f_{base}$/$f_{opt}$, we invoke \osrkit\ to materialize the compensation code $\chi$ produced by \reconstruct\ into a sequence of IR instructions for the OSR entry block of the continuation function $f_{opt_{to}}$/$f_{base_{to}}$ (\mysection\ref{ss:osrkit}).

We evaluate our technique on the \speccpu~\cite{Henning06} and the \phoronixpts~\cite{Phoronix16} benchmarking suites, reporting data for a subset of their C/C++ benchmarks. We profile each benchmark to identify the hottest method and when it accounts for at least 5\% of the total execution time, we pick it, generating its IR using \clang. No optimization is enabled during the compilation other than \memtoreg. Starting from this IR version, henceforth {\em base}, we generate an {\em opt} version by applying all the optimizations we discuss next.
We run our experiments on an Intel Core i7-3632QM machine running Ubuntu 14.10 (64 bit) and LLVM 3.6.2.

\paragraph{Optimizations}
\newcommand{\optnames}[1]{#1} 
We instrument a number of standard LLVM optimization passes, including \optnames{aggressive dead code elimination} (ADCE), \optnames{constant propagation} (CP), \optnames{common subexpression elimination} (CSE), \optnames{loop-invariant code motion} (LICM), \optnames{sparse conditional constant propagation} (SCCP), and \optnames{code sinking} (Sink). We also instrument utility passes required by LICM such as \optnames{natural loop canonicalization} (LC) and \optnames{LCSSA-form construction} (LCSSA). Notice that optimizations performed by the back-end such as instruction scheduling and register allocation do not require instrumentation, as we operate at IR level. 

\mytable\ref{tab:OSR-alC-bench-IR} shows aggregate figures for IR manipulations performed by the optimizations on our benchmarks. Reported numbers suggest that while the {\em opt} version is typically shorter than its {\em base} counterpart, it might have a larger number of $\phi$-nodes: most extra nodes are commonly generated during the LCSSA-form construction and eventually optimized away in the back-end. SCCP can eliminate a large number of unreachable basic blocks for \mytt{ffmpeg}, while for the remaining benchmarks the majority of instruction deletions are performed by CSE.

\begin{table}[t]
\vspace{-1mm}
\begin{small}
\caption{\label{tab:OSR-alC-bench-IR} IR features of hottest function in each benchmark. We report the number of instructions $|\pi|$ ($|\phi|$ of which represent $\phi$-nodes) for  the {\em base} and the {\em opt} version, along with the number of primitive code manipulation actions tracked during optimization.  $R_{\{A,C,I\}}$ stands for \RAUWfull\ actions for some {$N$} of LLVM {\em Argument}, {\em Constant}, or {\em Instruction} type.}
\makebox[\textwidth][c]{
\begin{adjustbox}{width=1.25\textwidth}
\begin{footnotesize}
\begin{tabular}{|c|p{3.8cm}|c|c|c|c|c|c|c|c|c|c|c|}
\cline{3-6}
\multicolumn{2}{l|}{} & \multicolumn{2}{c|}{base} & \multicolumn{2}{c|}{opt} \\
\hline
Benchmark & Function & $|\pi|$ & $|\phi|$ & $|\pi|$ & $|\phi|$  & \texttt{add} & \texttt{delete} & \texttt{hoist} & \texttt{sink} & $R_I$ & $R_C$ & $R_A$ \\
\hline
\hline
bzip2 & mainSort & 657 & 32 & 596 & 44 & 16 & 77 & 12 & 3 & 71 & 0 & 2 \\
\hline
h264ref & SetupFastFullPelSearch & 671 & 28 & 576 & 36 & 9 & 105 & 4 & 21 & 102 & 0 & 0 \\
\hline
hmmer & P7Viterbi & 568 & 6 & 383 & 8 & 2 & 187 & 13 & 1 & 187 & 0 & 0 \\
\hline
namd & \seqsplit{ComputeNonbondedUtil::calc\_pair\_energy\_fullelect} & 1737 & 159 & 1636 & 224 & 68 & 169 & 36 & 73 & 145 & 17 & 0 \\
\hline
perlbench & S\_regmatch & 5574 & 305 & 5001 & 355 & 86 & 667 & 96 & 28 & 627 & 0 & 0 \\
\hline
sjeng & std\_eval & 1940 & 93 & 1540 & 105 & 13 & 413 & 20 & 34 & 412 & 1 & 0 \\
\hline
soplex & SPxSteepPR::entered4X & 195 & 2 & 154 & 2 & 0 & 41 & 2 & 4 & 41 & 0 & 0 \\
\hline
bullet & \seqsplit{btGjkPairDetector::getClosestPointsNonVirtual} & 587 & 24 & 553 & 42 & 26 & 60 & 37 & 3 & 51 & 1 & 0 \\
\hline
dcraw & vng\_interpolate & 590 & 37 & 545 & 49 & 13 & 58 & 25 & 6 & 58 & 0 & 0 \\
\hline
ffmpeg & decode\_cabac\_residual\_internal & 618 & 34 & 462 & 40 & 11 & 168 & 9 & 17 & 52 & 51 & 0 \\
\hline
fhourstones & ab & 288 & 29 & 284 & 39 & 14 & 20 & 3 & 0 & 14 & 2 & 0 \\
\hline
vp8 & vp8\_full\_search\_sadx8 & 334 & 41 & 299 & 60 & 19 & 54 & 17 & 34 & 54 & 0 & 0 \\
\hline
\end{tabular}
\end{footnotesize}
\end{adjustbox}
}
\end{small}
\vspace{-1mm}
\end{table}


\paragraph{Optimizing OSR}

\myfigure\ref{fig:BtoO} shows the fraction of program points that are feasible for an OSR from {\em base} to {\em opt} depending on the version of \reconstruct\ in use. 
Locations that can fire an OSR with no need for a compensation code (i.e., $\chi=\langle\rangle$) account for a limited fraction of all the potential OSR points (less than $10\%$ for most benchmarks). This suggests that optimizations can significantly modify a program's live state across program locations. 

\begin{figure}[t]
\begin{center}
\makebox[\textwidth][c]{
\begin{subfigure}{0.585\textwidth}
\includegraphics[height=4.2cm]{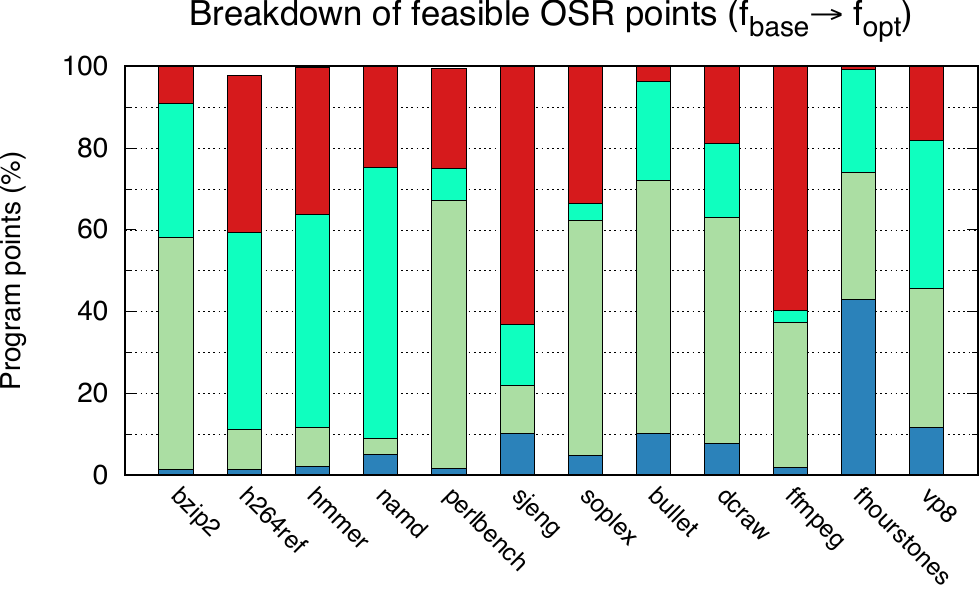}
\caption{\label{fig:BtoO}}
\end{subfigure}
\begin{subfigure}{0.665\textwidth}
\includegraphics[height=4.2cm]{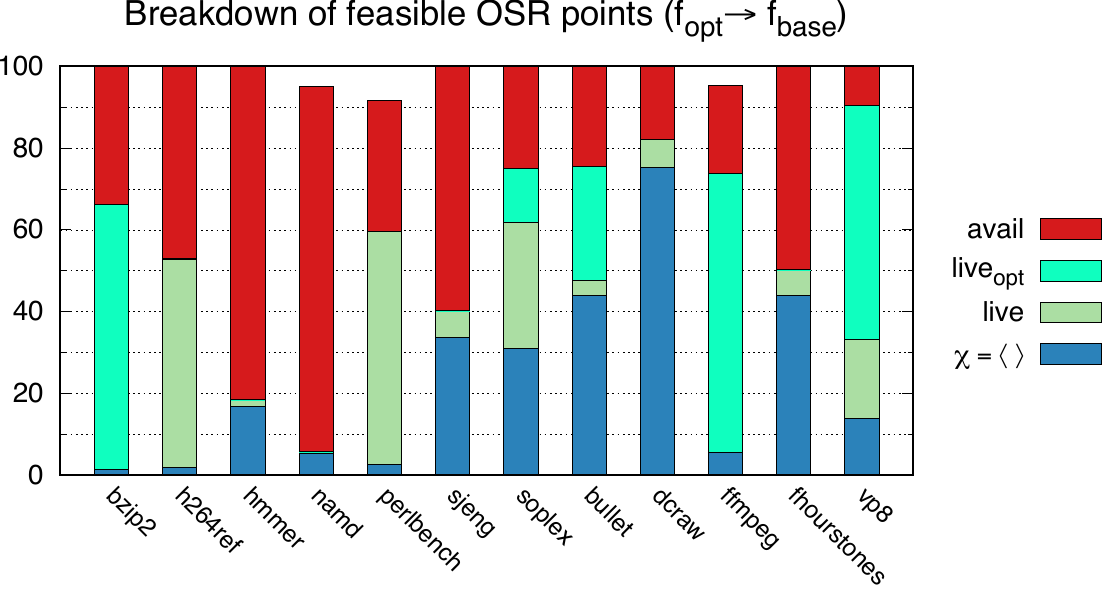}
\caption{\label{fig:OtoB}\hphantom{xxxxxxxxxxx}}
\end{subfigure}
}
\vspace{-2mm}
\caption{Fraction of program points that are OSR-feasible: (a) from {\em base} to {\em opt}, and (b) from {\em opt} to {\em base}.}
\end{center}
\vspace{-2mm}
\end{figure}

We observe that $live$ performs well on some benchmarks (e.g., \mytt{perlbench}, \mytt{bullet}, \mytt{dcraw}) and poorly on others (e.g., \mytt{h264ref}, \mytt{namd}). The enhancements introduced in $live_{opt}$ increase the number of feasible OSR points for all benchmarks. For $9$ out of $12$ of them, it becomes possible to build a compensation code using only live variables at the OSR source for more than $60\%$ of potential OSR points.

When in the $avail$ version \reconstruct\ is allowed to extend the liveness range of an available variable (i.e., an already-evaluated virtual register), the percentage of feasible OSR points grows to nearly $100\%$. We observe for \mytt{bullet} that the same $\phi$-node needs to be reconstructed at nearly $20\%$ of feasible OSR points: this node takes as incoming values a number of $\phi$-nodes that in turn all yield the same available value. Differently than LLVM's built-in method for detecting constant $\phi$-nodes, our recursive heuristic can correctly identify and use such value.


In \mytable\ref{tab:OSR-alC-prologue} we report the average and peak size of the compensation code $\chi$ generated by $live_{opt}$ and $avail$ across feasible OSR points. Figures for $live$ would add little to the discussion and are not reported. Notice that average values are calculated on different sets of program points, as $avail$ extends the set from $live_{opt}$.


The assignment step of \reconstruct\ (line~\ref{line-rec:assign}) generates an average number of instructions typically smaller than $20$, with the notable exception of \mytt{perlbench}. Its hottest function highly benefits from CSE: we found out that no less than $583$ out of its $667$ deleted instructions ($\approx10\%$ of the {\em base} function size) are removed by it. We believe that local CSE would shrink the OSR entry block of the continuation function $f'$ as well. However, this optimization is not strictly necessary. The size of $\phi$ is unlikely to affect the performance of $f'$ for a hot method, as compensation code will be located at the beginning of the continuation function and executed only once.

\mytable\ref{tab:OSR-alC-prologue} also reports the average and peak number of variables ($|K_{avail}|$) that are not live at the source location, but for which $avail$ would artificially extend liveness to support OSR at the program points represented by the top bars in \myfigure\ref{fig:BtoO}. We observe that the average number of values to keep alive is less than $3$ for $9$ out of $12$ benchmarks, with a maximum of $6.15$ for \mytt{bullet}. By using a simple backtracking strategy, $avail$ extends the liveness of an available value only when it is impossible to reconstruct it otherwise.


\begin{table}[!ht]
\begin{small}
\caption{\label{tab:OSR-alC-prologue} Average and peak size $|\chi|$ of the compensation code generated by the $live_{opt}$ and $avail$ versions of procedure \reconstruct. $|K_{avail}|$ is the size of the set of variables that we should artificially keep alive in order to allow an OSR from {\em base} to {\em opt} at program points represented by the top bars in \myfigure\ref{fig:BtoO} and \myfigure\ref{fig:OtoB}.}
\vspace{-1mm}
\makebox[\textwidth][c]{
\begin{adjustbox}{width=1.25\textwidth}
\begin{footnotesize}
\begin{tabular}{ |c|c|c|c|c|c|c|c|c|c|c|c|c| }
\cline{2-13}
\multicolumn{1}{l|}{} & \multicolumn{6}{c|}{$f_{base}\rightarrow f_{opt}$} & \multicolumn{6}{c|}{$f_{opt}\rightarrow f_{base}$} \\
\cline{2-13}
\multicolumn{1}{l|}{} & \multicolumn{2}{c|}{$|\chi|\leftarrow live_{opt}$} & \multicolumn{2}{c|}{$|\chi|\leftarrow avail$} & \multicolumn{2}{c|}{$|K_{avail}|$} & \multicolumn{2}{c|}{$|\chi|\leftarrow live_{opt}$} & \multicolumn{2}{c|}{$|\chi|\leftarrow avail$} & \multicolumn{2}{c|}{$|K_{avail}|$} \\
\hline
Benchmark & Avg & Max & Avg & Max & Avg & Max & Avg & Max & Avg & Max & Avg & Max \\
\hline
\hline
bzip2 & 4.3 & 14 & 4.73 & 13 & 3.6 & 8 & 1.55 & 4 & 1.77 & 4 & 1.47 & 4 \\
\hline
h264ref & 2.9 & 5 & 3.37 & 5 & 1.02 & 2 & 4.46 & 9 & 2.82 & 9 & 1.45 & 7 \\
\hline
hmmer & 16.11 & 23 & 16.63 & 24 & 4.02 & 7 & 1 & 1 & 1 & 1 & 1.02 & 2 \\
\hline
namd & 18.61 & 28 & 17.82 & 28 & 3.38 & 6 & 1.5 & 2 & 5.93 & 15 & 4.74 & 18 \\
\hline
perlbench & 46.12 & 57 & 45.82 & 57 & 1.24 & 12 & 4.09 & 12 & 4.22 & 12 & 1.37 & 11 \\
\hline
sjeng & 9.72 & 21 & 18.52 & 32 & 4.2 & 12 & 1.29 & 2 & 1.67 & 11 & 4.09 & 14 \\
\hline
soplex & 5.02 & 7 & 4.38 & 7 & 2.34 & 4 & 3.3 & 4 & 3.3 & 4 & 1.00 & 1 \\
\hline
bullet & 16.69 & 46 & 15.93 & 46 & 6.15 & 17 & 1 & 1 & 1.26 & 3 & 1.14 & 2 \\
\hline
dcraw & 7.6 & 15 & 7.32 & 15 & 1.97 & 7 & 1.68 & 2 & 3.84 & 6 & 4.06 & 8 \\
\hline
ffmpeg & 5.05 & 8 & 4.03 & 8 & 1.85 & 3 & 1.94 & 5 & 1.95 & 6 & 1.08 & 4 \\
\hline
fhourstones & 4.5 & 6 & 4.98 & 6 & 1.7 & 2 & 0 & 0 & 1.12 & 4 & 1.42 & 4 \\
\hline
vp8 & 10.51 & 16 & 10.13 & 17 & 2.35 & 6 & 5.74 & 13 & 5.51 & 13 & 1.18 & 5 \\
\hline
\hline
Avg & {\bf 12.26} & 20.50 & {\bf 12.81} & 21.50 & {\bf 2.82} & 7.17 & {\bf 2.30} & 4.58 & {\bf 2.87} & 7.33 & {\bf 2.00} & 6.67 \\
\hline
\end{tabular}
\end{footnotesize}
\end{adjustbox}
}
\end{small}
\end{table}

\paragraph{Deoptimizing OSR}

\myfigure\ref{fig:OtoB} reports the fraction of OSR points eligible for {\em opt}-to-{\em base} deoptimization. We observe that the fraction of locations that can fire an OSR with an empty $\chi$ varies significantly from benchmark to benchmark, suggesting a dependence on the structure of the original program.

For $9$ out of $12$ benchmarks, compensation code can be built using only live variables for more than $50\%$ of potential OSR points.
When the $avail$ version is used, the percentage of feasible OSR points is greater than $90\%$ on all benchmarks and nearly $100\%$ for $9$ out of $12$ of them.
In \mytable\ref{tab:OSR-alC-prologue} we then report statistics about the size of the compensation code generated across feasible OSR points, and the number of available variables to be kept alive in $avail$. Compared to the optimizing OSR scenario, the size of the compensation code is much smaller, suggesting that shorter portions of execution need to be reconstructed in a deoptimizing OSR. 

Note that the $0$ values reported for \mytt{fhourstones} in the $live_{opt}$ scenario do not mean that state compensation is not required. In fact, the algorithm detects that each variable $x$ to be rematerialized at the OSR landing pad is aliased by either a non-live function argument or a live constant $\phi$-node. All the uses of $x$  in the code can thus be replaced by uses of the alias when generating the OSR continuation function.


\subsection{Discussion}
\label{ss:experiments-discussion}
Our LLVM implementation requires an OSR mapping to be maintained between the original and the optimized version of a function. Runtime guards inserted by \osrkit\ are transparent to it, and a specialized continuation function generated for the OSR landing pad will resume the execution at full speed
~\cite{Fink03}.

We have seen that common compiler transformations can significantly affect the live state of a program across its locations. The three versions of \reconstruct\ we have implemented can generate compensation code automatically by recursively reassembling portions of the state for the target function. 
%
OSR is supported at more than a half of the program locations by $live_{opt}$, and almost everywhere by $avail$. Figures reported in \mytable\ref{tab:OSR-alC-prologue} suggest that the size of the set of virtual registers to preserve for an OSR point enabled only by $avail$ is small.


We remark that extending the liveness range of an available virtual register $r$ should not be an issue in terms of register pressure increase. If $r$ is assigned to a physical register, a compiler would normally spill it to the stack before it gets clobbered, to only reload it later when an OSR is about to be fired. If $r$ is assigned to a stack location instead, it should be loaded to a physical register only when an OSR is performed. In both cases, we would never reload a register more than once. Furthermore, \osrkit\ allows a front-end to encode the probability of an OSR transition in terms of control-flow edge weights to guide native code generation. Keeping an otherwise dead value in a register makes sense only when used at an OSR point that is very likely to be fired.

\section{Case Study: Source-Level Debugging of Optimized Code}
\label{se:debugging}

In this section we present a case study that shows how our algorithms for compensation code generation can provide useful novel building blocks for optimized-code debuggers. On prominent C benchmarks, \reconstruct\ is able to recover the expected source-level values for the vast majority of scalar user variables that might not be reported correctly by a debugger due to the effects of classic compiler optimizations.

\subsection{Background}
\label{ss:debugging-intro}

A {\em source-level} (or {\em symbolic}) {\em debugger} is a program development tool that allows a programmer to monitor an executing program at the source-language level. Interactive mechanisms are typically provided to the user to halt/resume the execution at {\em breakpoints}, and to inspect the state of the program in terms of its source language.

The importance of the design and use of these tools was already clear in the '60s~\cite{Evans66}. In a production environment it is desirable to use optimizations, as bugs can surface when they are enabled (a debuggable translation of a program may hide bugs) or because differences in timing behavior may cause the appearance of bugs due to race conditions. Also, optimizations may be absolutely necessary to execute a program due to memory limitations, efficiency reasons, or other platform-specific constraints~\cite{Adl-Tabatabai96thesis}.

As pointed out by Hennessy in a seminal work~\cite{Hennessy82}, a classic conflict exists between the use of optimization techniques and the ability to debug a program symbolically. A debugger provides the user with the illusion that the source program is executing one statement at a time. Optimizations preserve semantic equivalence between the executed and the original code, but normally alter the structure and the intermediate results of the program.

Two problems surface when trying to symbolically debug optimized code~\cite{Adl-Tabatabai96,Jaramillo00}. First, the debugger must determine the position in the optimized code that corresponds to a breakpoint ({\em code location} problem). Second, the user expects to see the values of source variables at a breakpoint in a manner consistent with the source code, even though the optimizer might have 
deleted or reordered instructions, or values might have been overwritten as a consequence of register allocation choices ({\em data location} problem).

When attempting to debug optimized programs, debuggers may thus give misleading information about the value of variables at breakpoints. Hence, the programmer has the difficult task of attempting to unravel the optimized code and determine what values the variables should have~\cite{Hennessy82}. When global optimizations can cause the run-time value of a variable to be inconsistent with the source-level value expected at the breakpoint, the variable is called {\em endangered}~\cite{Adl-Tabatabai96}.

In general, for a symbolic debugger there are two ways to present meaningful information about the debugged optimized program~\cite{Wu99}. It can provide {\em expected behavior} of the program when it hides the effects of the optimizations from the user and presents the program state consistent with what they expect from the unoptimized code. It provides instead {\em truthful behavior} if it makes the user aware of the effects of the optimizations and warns them of possibly surprising outcomes.
\cite{Adl-Tabatabai96thesis} observes that constraining optimizations or adding machinery during compilation to aid debugging does not solve the problem of debugging the optimized translation of a program, as the user debugs suboptimal code. Source-level debuggers should thus explore techniques to recover expected behavior without relying on intrusive compiler extensions.

\subsection{Using \texorpdfstring{$\texttt{reconstruct}$}{reconstruct} for State Recovery}
\label{ss:buildcomp-for-recovery}

On-stack replacement has been pioneered in implementations of the SELF programming language to provide expected behavior with globally optimized code~\cite{Holzle92}. OSR can shield a debugger from the effects of optimizations by dynamically deoptimizing code on demand. Debugging information is supplied by the compiler at discrete {\em interrupt points}, which act as a barrier for optimizations, letting the compiler run unhindered between them. Motivated by the observation that our algorithms for generating OSR mappings do not place such restrictions on LVE transformations and can be applied at any program location, we investigate whether they can also encode useful information for providing expected behavior in a source-level debugger. 


As in most recent works on optimized code debugging, we focus on identifying and recovering {\em scalar source variables} in the presence of global optimizations. In LLVM, debugging information is inserted at IR level by the front-end as {\em metadata} attached to global variables, single instructions, functions or entire modules.
These metadata are transparent to optimization passes, they do not prevent them from happening, and are designed to be agnostic about both the source language behind the original program and the target debugging information representation. Two intrinsics associate IR objects with source-level variables: \mytt{llvm.dbg.declare} associates a variable with the address of an \alloca\ buffer; \mytt{llvm.dbg.value} associates a variable with the content of a register.
\oldrevision{
\begin{itemize} 
 \item \mytt{llvm.dbg.declare} associates a variable with the address of an \alloca\ buffer;
 \item \mytt{llvm.dbg.value} tracks that a variable is being assigned with the content of a register.
\end{itemize}
}

We extend \tinyvm\ to reconstruct this mapping and identify which locations in the unoptimized IR $f_{base}$ correspond to source-level locations (i.e., possible breakpoint locations) for a function. An OSR mapping is constructed when LVE transformations are applied to $f_{base}$ to generate $f_{opt}$. For each location in $f_{opt}$ that might correspond to (i.e., have as OSR landing pad) a source-level location in $f_{base}$, we determine which live variables at the destination are live also at the source (and thus yield the same value), and which ones need to be reconstructed instead.
We rely on the SSA form to identify which assignments should be recovered, as every value instance for a source-level variable is represented by a specific virtual register. $\phi$-nodes at control-flow merge points cannot be reconstructed, but our experimental results suggest that this might not be a common issue in practice.


\subsection{The \texorpdfstring{$\speccpu$}{SPEC CPU2006} Benchmarks}
\label{ss:spec-benchmarks} 

To capture a variety of programming patterns and styles from applications with different sizes, we analyze each method of each C benchmark from the \speccpu\ suite, 
applying the same sequence of OSR-aware optimization passes as in \mysection\ref{ss:evaluation} to the baseline IR version obtained with \clang\ \mytt{-O0} 
followed by \memtoreg.
\mytable\ref{tab:CS-debug-benchmarks} reports for each benchmark the code size (LOC), the total number of functions in it ($|F_{tot}|$), the number of functions modified by the applied optimizations ($|F_{opt}|$) and, in turn, how many optimized functions are {\em endangered} ($|F_{end}|$), i.e., contain endangered user variables and may require recovery of the expected behavior.

We observe that $11\%$ (\mytt{libquantum}) to $54\%$ (\mytt{gobmk}) of the optimized functions are endangered, while for
$10\%$ to $33\%$ of the functions in each benchmark, the applied IR-level optimizations do not kick in.
%
%
%
For endangered functions, on average at more than $25\%$ of program points there is at least a user variable whose source-level value might not be reported correctly by a debugger. For most functions in the benchmarks, the average number of affected user variables at such points ranges between $1$ and $2$, although for some benchmarks we observe higher peaks at specific program locations (e.g., as high as $9$ for \mytt{gobmk} and $14$ for \mytt{gcc} and \mytt{h264ref}). 

To investigate possible correlations between the size of a function and the number of user variables affected by source-level debugging issues, we analyze the corpus of functions for the three largest benchmarks in our suite, i.e., \mytt{gcc}, \mytt{gobmk}, and \mytt{perlbench}. Our findings (Appendix~\ref{apx:additional-material}) suggest that although larger functions might be more prone to have a large number of affected variables, such issues frequently arise for smaller functions as well.

\subsection{Experimental Results}
\label{ss:debugging-results}


We evaluate the ability of \reconstruct\ to correctly recover the source-level expected value for endangered user variables in the \speccpu\ experiments. For each function, we measure the {\em average recoverability ratio}, defined as the average across all program points corresponding to source-level locations of the ratio between recoverable and endangered user variables at each point. Two versions of \reconstruct\ can be used here.

\begin{table}[t]
\begin{small}
\caption{\label{tab:CS-debug-benchmarks} \speccpu\ C benchmarks suite: for endangered functions, we report weighted $Avg_g$ and unweighted $Avg_u$ average of the fraction of program points with endangered user variables, then mean, standard deviation, and peak number of endangered variables at such points. We use the number of IR instructions in the unoptimized code as weight for computing $Avg_w$, and consider only IR program points corresponding to source-level locations.
}
\makebox[\textwidth][c]{
\begin{adjustbox}{width=1.25\textwidth}
\begin{footnotesize}
\begin{tabular}{ |c|r|r|r|r|r|r|C{0.9cm}|C{0.9cm}|C{0.6cm}|C{0.6cm}|r| }
\cline{8-12}
\multicolumn{7}{l|}{} & \multicolumn{5}{c|}{\em Endangered functions} \\
\cline{3-12}
\multicolumn{2}{l|}{} & \multicolumn{5}{c|}{\em Functions} & \multicolumn{2}{c|}{Fraction of affected} & \multicolumn{3}{c|}{Endangered user vars} \\ 
\cline{3-7}
\multicolumn{2}{l|}{} & \multicolumn{1}{c|}{Total} & \multicolumn{2}{c|}{IR Optimized} & \multicolumn{2}{c|}{Endangered} & \multicolumn{2}{c|}{program points} & \multicolumn{3}{c|}{per affected point} \\
\hline
Benchmark & \multicolumn{1}{c|}{LOC} & \multicolumn{1}{c|}{$|F_{tot}|$} & \multicolumn{1}{c|}{$|F_{opt}|$}  & \multicolumn{1}{c|}{\tiny$\frac{|F_{opt}|}{|F_{tot}|}$} & \multicolumn{1}{c|}{$|F_{end}|$} & \multicolumn{1}{c|}{\tiny$\frac{|F_{end}|}{|F_{opt}|}$} & $Avg_w$ & $Avg_u$ & $Avg$ & $\sigma$ & $Max$ \\ 
\hline
\hline
bzip2 & 8\,293 & 100 & 66 & 0.66 & 24 & 0.36 & 0.17 & 0.12 & 1.22 & 0.55 & 5 \\ 
\hline
gcc & 521\,078 & 5\,577 & 3\,884 & 0.70 & 1\,149 & 0.30 & 0.25 & 0.22 & 1.13 & 0.31 & 14 \\
\hline
gobmk & 197\,215 & 2\,523 & 1\,664 & 0.66 & 893 & 0.54 & 0.40 & 0.29 & 1.48 & 0.72 & 9 \\ 
\hline
h264ref & 51\,578 & 590 & 466 & 0.79 & 163 & 0.35 & 0.45 & 0.55 & 1.69 & 1.23 & 14 \\ 
\hline
hmmer & 35\,992 & 538 & 429 & 0.80 & 80 & 0.19 & 0.17 & 0.22 & 1.13 & 0.37 & 5 \\ 
\hline
lbm & 1\,155 & 19 & 17 & 0.89 & 2 & 0.12 & 0.30 & 0.51 & 1.97 & 1.37 & 3 \\ 
\hline
libquantum & 4\,358 & 115 & 85 & 0.74 & 9 & 0.11 & 0.13 & 0.10 & 1.06 & 0.17 & 2 \\ 
\hline
mcf & 2\,658 & 24 & 21 & 0.88 & 11 & 0.52 & 0.35 & 0.32 & 1.00 & - & 1 \\ 
\hline
milc & 15\,042 & 235 & 157 & 0.67 & 34 & 0.22 & 0.24 & 0.21 & 1.14 & 0.29 & 3 \\ 
\hline
perlbench & 155\,418 & 1\,870 & 1\,286 & 0.69 & 593 & 0.46 & 0.37 & 0.35 & 1.16 & 0.36 & 8 \\ 
\hline
sjeng & 13\,847 & 144 & 113 & 0.78 & 31 & 0.27 & 0.26 & 0.20 & 1.24 & 0.42 & 3 \\ 
\hline
sphinx3 & 25\,090 & 369 & 275 & 0.75 & 76 & 0.28 & 0.29 & 0.31 & 1.19 & 0.44 & 6 \\ 
\hline
\multicolumn{12}{c}{}\\[-1em]
\cline{7-12}
\multicolumn{6}{l|}{} & Mean & 0.26 & 0.25 & 1.26 & 0.47 & 6.08 \\
\cline{7-12}
\end{tabular} 
\end{footnotesize}
\end{adjustbox}
}
\end{small}
\vspace{-3mm}
\end{table}

$live_{opt}$ can be implemented in debuggers that can evaluate expressions over the current program state, such as \gdb\ and \lldb\footnote{\lldb\ is integrated within the LLVM infrastructure, so it can JIT-compile and run arbitrary code. \gdb\ can evaluate complex expressions, too.}. In fact, this version needs only to access the live state of the optimized program at the breakpoint.

\noindent
$avail$ can be integrated in a debugger using {\em invisible} breakpoints to spill a number of available values before they are overwritten. Invisible breakpoints are largely employed in source-level debuggers~\cite{Zellweger83,Wu99,Jaramillo00}. Using spilled values and the current live state, expected values for endangered user variables can be reconstructed as for $live_{opt}$. Alternatively, in a virtual machine with a JIT compiler and an integrated debugger, the runtime might recompile a function when the user inserts a breakpoint in it, artificially extending the liveness range for the available values possibly needed by \reconstruct.

\myfigure\ref{fig:debug-ratio} shows for each benchmark the global average recoverability ratio achieved by $live_{opt}$ and $avail$ on the set of affected functions $F_{end}$. We observe that $avail$ performs particularly well on all benchmarks, with a global ratio higher than $95\%$ for half of the benchmarks, and higher than $90\%$ for $10$ out of $12$ benchmarks. In the worst case (\mytt{gobmk}), we observe a global ratio slightly higher than $83\%$. Results thus suggest that \reconstruct\ can recover expected values for the vast majority of source-level endangered variables.

\begin{figure}[!ht]
\begin{center}
\vspace{2mm}
\includegraphics[width=0.8\textwidth]{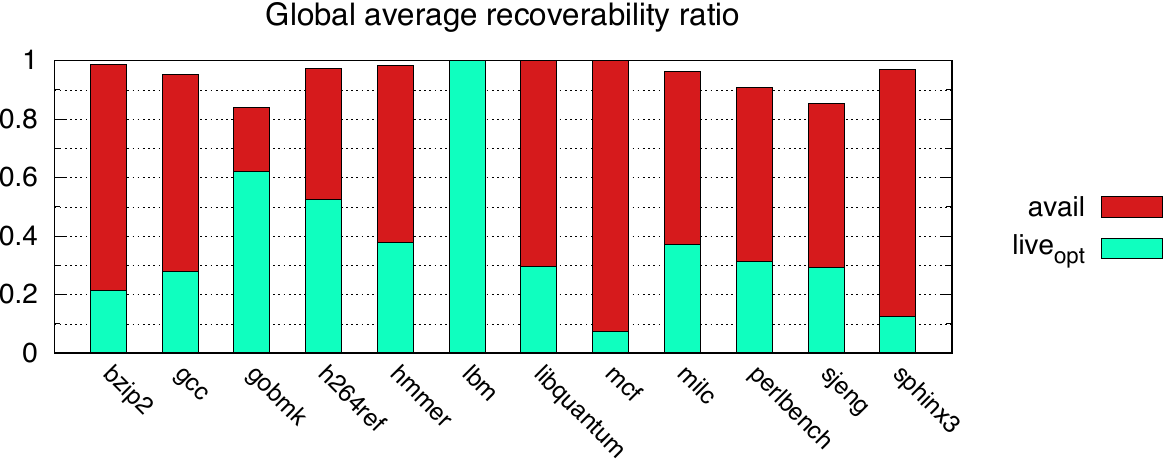}
\vspace{-2mm}
\caption{\label{fig:debug-ratio} Global average recoverability ratio, defined as the weighted average of each function's average recoverability ratio. We used the number of LLVM IR instructions in the unoptimized function code as weight.}
\end{center}
\vspace{-3mm}
\end{figure}

\begin{table}[!t]
\begin{small}
\caption{\label{tab:CS-debug-dead-avail} Available values to be preserved when using $avail$. For functions that require to preserve at least one value, we report the fraction $frac$ of $|F_{end}|$ they cumulatively account for, the average number $avg$ of values to preserve across such functions, and the corresponding standard deviation $\sigma$.}
\makebox[\textwidth][c]{
\begin{footnotesize}
\begin{tabular}{ |c|c|c|c|c|c|>{\centering}p{0.46cm}|c|c|c|c|c|c||c| }
\cline{2-14}
\multicolumn{1}{c|}{} & \rot{bzip2} & \rot{gcc} & \rot{gobmk} & \rot{h264ref} & \rot{hmmer} & \rot{lbm} & \rot{libquantum\hspace{0.5em}} & \rot{mcf} & \rot{milc} & \rot{perlbench} & \rot{sjeng} & \rot{sphinx3} & \rot{Mean} \\
\hline
$frac$ & 0.71 & 0.72 & 0.16 & 0.71 & 0.70 & - & 0.67 & 1.00 & 0.76 & 0.66 & 0.77 & 0.72 & 0.69 \\
\hline
$avg$ & 3.24 & 2.77 & 2.31 & 4.90 & 2.79 & - & 3.00 & 1.82 & 2.19 & 4.76 & 1.88 & 2.31 & 2.91 \\ 
\hline
\hline
$\sigma$ & 3.38 & 5.12 & 2.22 & 9.23 & 2.33 & - & 3.46 & 0.87 & 1.94 & 4.94 & 1.12 & 2.08 & 3.34 \\
\hline
\end{tabular} 
\end{footnotesize}
}
\end{small}
\end{table}

To estimate how many values should be preserved - through either invisible breakpoints or recompilation - to integrate $avail$ in a debugger, we collected for each function the ``keep'' set of non-live available values to save to support deoptimization across all program points corresponding to source-level locations. We then compute the average and the standard deviation for the size of this set on all the endangered functions. Figures reported in \mytable\ref{tab:CS-debug-dead-avail} show that typically a third of the endangered functions do not require any value to be preserved. For the remaining functions, $2.91$ values need to be preserved on average, with a peak of $4.90$ for \mytt{h264ref}.

Observe that values in the keep set do not necessarily need to be preserved all simultaneously or at all points: indeed, the minimal set to be maintained can change across function regions. Typically when debugging, values are saved using an invisible breakpoint before they are overwritten, and deleted as soon as they are no longer needed~\cite{Jaramillo00}. For the recompilation-based approach, the numbers reported in \mytable\ref{tab:CS-debug-dead-avail} should be interpreted in terms of possible register pressure increase as discussed in \mysection\ref{ss:experiments-discussion}.

\section{Related Work}
\label{se:related}


\paragraph{On-Stack Replacement} OSR has been pioneered in the implementations of the SELF language runtime to support dynamic deoptimization for debugging purposes~\cite{Holzle92}. The rise of the Java language has then brought OSR technology to the mass market, employing it in the most sophisticated runtimes.

In HotSpot Server~\cite{Paleczny01} OSR is employed to optimize performance-critical methods by instrumenting their entry point and backward branches, while for deoptimization execution is transferred to the interpreter when class loading invalidates an optimization decision. \cite{Fink03} describes an OSR mechanism for Jikes RVM that places instrumentation as in HotSpot to support a profile-driven deferred compilation mechanism. Jikes RVM employs OSR also to recover from speculative inlining decisions, using an OSR stub to divert execution to a newly generated function. Its compiler can generate an \mytt{OSRBarrier} instruction to capture the JVM-level program state before executing a bytecode instruction in an interruptible method.

Tracing JIT compilers insert guards at points of possible divergence for the recorded control flow. RPython~\cite{Schneider12} uses trampolines to analyze resume information for a guard and runs a compensation code to leave the trace. SPUR~\cite{Bebenita10} relies on a transfer-tail JIT to bridge the execution to the baseline JIT.

The Graal compiler~\cite{Wurthinger13} uses partial evaluation to generate aggressively optimized code, falling back to an interpreter for deoptimization. Interpreter stack frames are restored using the metadata associated with the deoptimization point, while grouping mechanisms are used to reduce the size of metadata to be globally maintained~\cite{Duboscq14} in a similar manner as in RPython and HotSpot.

\noindent
The V8 JavaScript engine implements a multi-tier compilation system with the recent addition of an interpreter. To capture modifications to the program state, the IR graph is processed in an abstract interpretation fashion, tracking changes incrementally performed by single instructions. During the lowering phase this information is then materialized as deoptimization data where needed. V8's highly optimizing TurboFan compiler supports OSR at loop headers, generating a continuation function specialized for the current variable values at the loop entry.

\paragraph{Correctness of Compiler Optimizations}
Translation validation~\cite{Pnueli98,Necula00} tackles the problem of verifying that the optimized version of a specific input program is semantically equivalent to the original program. \cite{Lacey02,Lacey04} propose to express optimizations as rewrite rules with CTL formulas as side conditions, showing how to prove such transformations correct. \cite{Lerner03,Lerner05} investigate how to automatically prove soundness for optimizations expressed as transformation rules. \cite{Kundu09} makes a further step towards generality by proving the equivalence of {\em parameterized programs}, which yields correctness of transformation rules once for all. We believe that this approach deserves further investigation in the OSR context, as it could provide a principled approach to computing mappings between equivalent points in different program versions in the presence of complex optimizations. \cite{Lopes15} presents Alive, a domain-specific language for writing provably correct LLVM peephole optimizations. Alive found several bugs in existing LLVM transformations. We look forward to future extensions that would support control flow branches in Alive.

While all the aforementioned works focus on proving optimizations sound, in this article we aim at proving OSR correct in the presence of optimizations. Of a different flavor, but in a similar spirit as ours, \cite{Guo11} uses bisimulation to study what optimizations of a tracing JIT compiler are sound. OSR is used in traditional JIT compilation to devise efficient code for a whole method, while a tracing JIT performs aggressive optimizations on a linear sequence of instructions, which control flow can leave through guarded side exits only.

\paragraph{Optimized Code Debugging} 
We now discuss the connections of the ideas presented in our case study with previous works in the debugging literature.
We are aware of only one work that supports full source-level debugging with expected behavior. TARDIS~\cite{Barr14} is a time-traveling debugger for managed runtimes that takes snapshots of the program state at a regular basis, and lets the unoptimized code run after a snapshot has been restored to answer queries. Our solution is different in the spirit, as we tackle the problem from the performance-preserving end of the spectrum~\cite{Adl-Tabatabai96thesis}, and in some ways more general, as it can be applied to the debugging of statically compiled languages such as C.

\cite{Wu99} proposes a framework to selectively take control of the execution by inserting four kinds of breakpoints, and perform a forward recovery process in an emulator that executes the optimized instructions mimicking their ordering at the source level. The emulation scheme however cannot report values whose reportability is path-sensitive. FULLDOC~\cite{Jaramillo00} makes a step further, as it can provide truthful behavior for deleted values, and expected behavior for the other values. The authors remark that FULLDOC can be integrated with techniques for reconstructing deleted values, and \reconstruct\ might be an ideal candidate.

\cite{Hennessy82} presents algorithms for recovering values in locally optimized code -- with weaker extensions to global optimizations -- that can only work with operand values that are user variables coming from memory, as they ignore compiler temporaries or registers. Unfortunately, the advances in compiler and debugging technology make a revision of the assumptions behind them necessary~\cite{Copperman93}.


\cite{Adl-Tabatabai96thesis} presents novel algorithms for value recovery in optimized programs. In particular, the algorithms for global optimizations identify compiler temporaries introduced by optimizations that alias endangered source variables. This idea is captured by our technique, which can also use facts recorded during IR manipulation (\mysection\ref{ss:buildcomp-implementation}) when recursively reconstructing portions of the original program's state.


\paragraph{Other Related Work} \cite{Bhandari15} discusses loop tiling in the presence of exception-throwing statements that thwart optimization. To roll back out-of-order updates during deoptimization, their algorithm identifies a minimal number of elements to back up and generates the necessary code. 
Product programs~\cite{Barthe11} are used to verify relational (e.g., transformations) and $k$-safety (e.g., continuity) properties; they are orthogonal to multi-version programs, which embody the notion of OSR and rely on CTL and model checking. 

\oldrevision{
\paragraph{Other Related Work}
Program slicing techniques~\cite{Weiser82,Weiser84,Korel88,Agrawal90} have found many diverse applications, such as program debugging, comprehension, analysis, testing, verification, and optimization.
We believe that the simple ideas behind our \buildcomp\ algorithm could be improved by taking advantage of this wealth of analysis techniques.


\cite{Bhandari15} explores deoptimization for loop tiling in the presence of exceptions. In order to roll back out-of-order updates, an algorithm identifies a minimal number of elements to back up and generates the necessary code. Intuitively, supporting deoptimization for complex loop transformations may be both space- and time- costly, but in the OSR context flexibility/performance trade-offs are still largely unexplored.
}

\section{Conclusions}
\label{se:conclusions}

In this article we make a first step towards a provably sound general framework for OSR, backed by promising results in real benchmarks. We run a number of unhindered LVE transformations, achieving bidirectional support for OSR at most program locations. Our algorithms can also be useful for variable reconstruction in source-level debuggers. We expect our techniques to be easily portable to other runtimes.

Our work is just a scratch off the surface of the fascinating problem of how to dynamically morph one program into another. As a next step, we plan to investigate automatic algorithms for other classes of transformations. Intuitively, supporting compensation code for heavy-duty ones might require a form of state logging (\mysection\ref{se:related}): flexibility/performance trade-offs are however still largely unexplored in the OSR context, and a deep understanding of them remains a compelling goal. 

We believe that the simple ideas behind our \buildcomp\ algorithm could be integrated with powerful program analysis techniques such as program slicing~\cite{Weiser82} in order to support OSR at even more points. We also plan to address situations where the OSR landing pad may not be unique, as in software pipelining. 

We hope to look at future tools deriving from the techniques presented in this article: interesting directions include exploiting the information collected for the instrumented passes to aid the \lldb\ debugger in expected-behavior recovery, and exploring OSR for switching between instrumented and uninstrumented code when using memory sanitizers that add checks at IR level~\cite{Wagner15}.




\oldrevision{
We believe this article represents a first step towards a provably sound methodological framework for OSR, backed by promising results in real benchmarks. Our work is just a scratch off the surface of the fascinating problem of how to dynamically morph one program into another. A deep understanding of the trade-offs between flexibility and time/space requirements of OSR remains a compelling goal. 

Our approach does not impose any barriers to applied optimizations and does not require saving the state to support OSR. The price we pay is that some points may not allow OSR. To what extent can one compensate the state without logging it? How can we perform fine-grained OSR transitions across transformations that move memory store instructions around, as many loop optimizations do? How can we handle situations where the landing location of an OSR transition may not be unique, as in software pipelining~\cite{Kundu09}?
An interesting technique that deserves further investigation in the OSR context is {\em Parameterized Program Equivalence}~\cite{Kundu09}, which could provide a principled approach to computing mappings between equivalent points in different program versions under complex optimizations. 

We hope to look at future tools deriving from the techniques presented in this article: interesting directions include exploiting the information collected by our instrumented LLVM passes to improve the LLDB debugger, and applying the approach to managed runtimes.
}

\bibliographystyle{abbrvnat}
\bibliography{biblio}

\begin{thebibliography}{41}
\providecommand{\natexlab}[1]{#1}
\providecommand{\url}[1]{\texttt{#1}}
\expandafter\ifx\csname urlstyle\endcsname\relax
  \providecommand{\doi}[1]{doi: #1}\else
  \providecommand{\doi}{doi: \begingroup \urlstyle{rm}\Url}\fi

\bibitem[Adl-Tabatabai(1996)]{Adl-Tabatabai96thesis}
A.-R. Adl-Tabatabai.
\newblock \emph{{Source-Level Debugging of Globally Optimized Code}}.
\newblock PhD thesis, Carnegie Mellon University, Pittsburgh, PA, USA, 1996.
\newblock URL
  \url{http://citeseerx.ist.psu.edu/viewdoc/summary?doi=10.1.1.73.5762}.
  Accessed: 2016-06-18.

\bibitem[Adl-Tabatabai and Gross(1996)]{Adl-Tabatabai96}
A.-R. Adl-Tabatabai and T.~Gross.
\newblock {Source-level Debugging of Scalar Optimized Code}.
\newblock In \emph{Proceedings of the ACM SIGPLAN 1996 Conference on
  Programming Language Design and Implementation}, PLDI '96, pages 33--43, New
  York, NY, USA, 1996. ACM.
\newblock ISBN 0-89791-795-2.
\newblock \doi{10.1145/231379.231388}.
\newblock URL \url{http://doi.acm.org/10.1145/231379.231388}.

\bibitem[Alpern et~al.(2000)Alpern, Attanasio, Barton, Burke, Cheng, Choi,
  Cocchi, Fink, Grove, Hind, Hummel, Lieber, Litvinov, Mergen, Ngo, Russell,
  Sarkar, Serrano, Shepherd, Smith, Sreedhar, Srinivasan, and Whaley]{Alpern00}
B.~Alpern, C.~R. Attanasio, J.~J. Barton, M.~G. Burke, P.~Cheng, J.-D. Choi,
  A.~Cocchi, S.~J. Fink, D.~Grove, M.~Hind, S.~F. Hummel, D.~Lieber,
  V.~Litvinov, M.~F. Mergen, T.~Ngo, J.~R. Russell, V.~Sarkar, M.~J. Serrano,
  J.~C. Shepherd, S.~E. Smith, V.~C. Sreedhar, H.~Srinivasan, and J.~Whaley.
\newblock {The {Jalape\~{n}o} Virtual Machine}.
\newblock \emph{IBM Systems Journal}, 39\penalty0 (1):\penalty0 211--238, Jan.
  2000.
\newblock ISSN 0018-8670.
\newblock \doi{10.1147/sj.391.0211}.
\newblock URL \url{http://dx.doi.org/10.1147/sj.391.0211}.

\bibitem[Anonymous(2017)]{DElia16AnonymousThesis}
A.~Anonymous.
\newblock \emph{{Anonymized for Double Blind Reviewing}}.
\newblock PhD thesis, Unknown University, 2017.

\bibitem[Bala et~al.(2000)Bala, Duesterwald, and Banerjia]{Bala00}
V.~Bala, E.~Duesterwald, and S.~Banerjia.
\newblock {Dynamo: A Transparent Dynamic Optimization System}.
\newblock In \emph{Proceedings of the ACM SIGPLAN 2000 Conference on
  Programming Language Design and Implementation}, PLDI '00, pages 1--12, New
  York, NY, USA, 2000. ACM.
\newblock ISBN 1-58113-199-2.
\newblock \doi{10.1145/349299.349303}.
\newblock URL \url{http://doi.acm.org/10.1145/349299.349303}.

\bibitem[Barr and Marron(2014)]{Barr14}
E.~T. Barr and M.~Marron.
\newblock {{TARDIS}: Affordable Time-travel Debugging in Managed Runtimes}.
\newblock In \emph{Proceedings of the 2014 ACM International Conference on
  Object Oriented Programming Systems Languages \& Applications}, OOPSLA '14,
  pages 67--82, New York, NY, USA, 2014. ACM.
\newblock ISBN 978-1-4503-2585-1.
\newblock \doi{10.1145/2660193.2660209}.
\newblock URL \url{http://doi.acm.org/10.1145/2660193.2660209}.

\bibitem[Barthe et~al.(2011)Barthe, Crespo, and Kunz]{Barthe11}
G.~Barthe, J.~M. Crespo, and C.~Kunz.
\newblock Relational verification using product programs.
\newblock In \emph{Proceedings of the 17th International Conference on Formal
  Methods}, FM'11, pages 200--214, Berlin, Heidelberg, 2011. Springer-Verlag.
\newblock ISBN 978-3-642-21436-3.
\newblock URL \url{http://dl.acm.org/citation.cfm?id=2021296.2021319}.

\bibitem[Bebenita et~al.(2010)Bebenita, Brandner, Fahndrich, Logozzo, Schulte,
  Tillmann, and Venter]{Bebenita10}
M.~Bebenita, F.~Brandner, M.~Fahndrich, F.~Logozzo, W.~Schulte, N.~Tillmann,
  and H.~Venter.
\newblock {{SPUR}: A Trace-based JIT Compiler for CIL}.
\newblock In \emph{Proceedings of the ACM International Conference on Object
  Oriented Programming Systems Languages and Applications}, OOPSLA '10, pages
  708--725, New York, NY, USA, 2010. ACM.
\newblock ISBN 978-1-4503-0203-6.
\newblock \doi{10.1145/1869459.1869517}.
\newblock URL \url{http://doi.acm.org/10.1145/1869459.1869517}.

\bibitem[Bhandari and Nandivada(2015)]{Bhandari15}
A.~Bhandari and V.~K. Nandivada.
\newblock {Loop Tiling in the Presence of Exceptions}.
\newblock In J.~T. Boyland, editor, \emph{29th European Conference on
  Object-Oriented Programming (ECOOP 2015)}, volume~37 of \emph{Leibniz
  International Proceedings in Informatics (LIPIcs)}, pages 124--148, Dagstuhl,
  Germany, 2015. Schloss Dagstuhl--Leibniz-Zentrum fuer Informatik.
\newblock ISBN 978-3-939897-86-6.
\newblock \doi{http://dx.doi.org/10.4230/LIPIcs.ECOOP.2015.124}.
\newblock URL \url{http://drops.dagstuhl.de/opus/volltexte/2015/5220}.

\bibitem[Clarke et~al.(1986)Clarke, Emerson, and Sistla]{Clarke86}
E.~M. Clarke, E.~A. Emerson, and A.~P. Sistla.
\newblock {Automatic Verification of Finite-state Concurrent Systems Using
  Temporal Logic Specifications}.
\newblock \emph{{ACM} Transactions on Programming Languages and Systems},
  8\penalty0 (2):\penalty0 244--263, Apr. 1986.
\newblock ISSN 0164-0925.
\newblock \doi{10.1145/5397.5399}.
\newblock URL \url{http://doi.acm.org/10.1145/5397.5399}.

\bibitem[Copperman and McDowell(1993)]{Copperman93}
M.~Copperman and C.~E. McDowell.
\newblock {A Further Note on Hennessy's ``Symbolic Debugging of Optimized
  Code''}.
\newblock \emph{ACM Transactions Programming Languages and Systems},
  15\penalty0 (2):\penalty0 357--365, Apr. 1993.
\newblock ISSN 0164-0925.
\newblock \doi{10.1145/169701.214526}.
\newblock URL \url{http://doi.acm.org/10.1145/169701.214526}.

\bibitem[Cytron et~al.(1991)Cytron, Ferrante, Rosen, Wegman, and
  Zadeck]{Cytron91}
R.~Cytron, J.~Ferrante, B.~K. Rosen, M.~N. Wegman, and F.~K. Zadeck.
\newblock {Efficiently Computing Static Single Assignment Form and the Control
  Dependence Graph}.
\newblock \emph{ACM Transactions on Programming Languages and Systems},
  13\penalty0 (4):\penalty0 451--490, Oct. 1991.
\newblock ISSN 0164-0925.
\newblock \doi{10.1145/115372.115320}.
\newblock URL \url{http://doi.acm.org/10.1145/115372.115320}.

\bibitem[D'Elia and Demetrescu(2016)]{DElia16}
D.~C. D'Elia and C.~Demetrescu.
\newblock Flexible on-stack replacement in {LLVM}.
\newblock In \emph{Proceedings of the 2016 International Symposium on Code
  Generation and Optimization}, CGO 2016, pages 250--260, New York, NY, USA,
  2016. ACM.
\newblock ISBN 978-1-4503-3778-6.
\newblock \doi{10.1145/2854038.2854061}.
\newblock URL \url{http://doi.acm.org/10.1145/2854038.2854061}.

\bibitem[Duboscq et~al.(2014)Duboscq, W\"{u}rthinger, and
  M\"{o}ssenb\"{o}ck]{Duboscq14}
G.~Duboscq, T.~W\"{u}rthinger, and H.~M\"{o}ssenb\"{o}ck.
\newblock {Speculation Without Regret: Reducing Deoptimization Meta-data in the
  Graal Compiler}.
\newblock In \emph{Proceedings of the 2014 International Conference on
  Principles and Practices of Programming on the Java Platform: Virtual
  Machines, Languages, and Tools}, PPPJ '14, pages 187--193, New York, NY, USA,
  2014. ACM.
\newblock ISBN 978-1-4503-2926-2.
\newblock \doi{10.1145/2647508.2647521}.
\newblock URL \url{http://doi.acm.org/10.1145/2647508.2647521}.

\bibitem[Evans and Darley(1966)]{Evans66}
T.~G. Evans and D.~L. Darley.
\newblock {On-line Debugging Techniques: A Survey}.
\newblock In \emph{Proceedings of the November 7-10, 1966, Fall Joint Computer
  Conference}, AFIPS '66 (Fall), pages 37--50, New York, NY, USA, 1966. ACM.
\newblock \doi{10.1145/1464291.1464295}.
\newblock URL \url{http://doi.acm.org/10.1145/1464291.1464295}.

\bibitem[Fink and Qian(2003)]{Fink03}
S.~J. Fink and F.~Qian.
\newblock {Design, Implementation and Evaluation of Adaptive Recompilation with
  On-Stack Replacement}.
\newblock In \emph{Proceedings of the International Symposium on Code
  Generation and Optimization: Feedback-directed and Runtime Optimization}, CGO
  '03, pages 241--252. IEEE Computer Society, 2003.
\newblock \doi{10.1109/cgo.2003.1191549}.
\newblock URL \url{http://dx.doi.org/10.1109/cgo.2003.1191549}.

\bibitem[Gal et~al.(2009)Gal, Eich, Shaver, Anderson, Mandelin, Haghighat,
  Kaplan, Hoare, Zbarsky, Orendorff, Ruderman, Smith, Reitmaier, Bebenita,
  Chang, and Franz]{Gal09}
A.~Gal, B.~Eich, M.~Shaver, D.~Anderson, D.~Mandelin, M.~R. Haghighat,
  B.~Kaplan, G.~Hoare, B.~Zbarsky, J.~Orendorff, J.~Ruderman, E.~W. Smith,
  R.~Reitmaier, M.~Bebenita, M.~Chang, and M.~Franz.
\newblock {Trace-based Just-in-Time Type Specialization for Dynamic Languages}.
\newblock In \emph{Proceedings of the 30th ACM SIGPLAN Conference on
  Programming Language Design and Implementation}, PLDI '09, pages 465--478,
  New York, NY, USA, 2009. ACM.
\newblock ISBN 978-1-60558-392-1.
\newblock \doi{10.1145/1542476.1542528}.
\newblock URL \url{http://doi.acm.org/10.1145/1542476.1542528}.

\bibitem[Guo and Palsberg(2011)]{Guo11}
S.-y. Guo and J.~Palsberg.
\newblock {The Essence of Compiling with Traces}.
\newblock In \emph{Proceedings of the 38th Annual ACM SIGPLAN-SIGACT Symposium
  on Principles of Programming Languages}, POPL '11, pages 563--574, New York,
  NY, USA, 2011. ACM.
\newblock ISBN 978-1-4503-0490-0.
\newblock \doi{10.1145/1926385.1926450}.
\newblock URL \url{http://doi.acm.org/10.1145/1926385.1926450}.

\bibitem[Hennessy(1982)]{Hennessy82}
J.~Hennessy.
\newblock {Symbolic Debugging of Optimized Code}.
\newblock \emph{ACM Transactions on Programming Languages and Systems},
  4\penalty0 (3):\penalty0 323--344, July 1982.
\newblock ISSN 0164-0925.
\newblock \doi{10.1145/357172.357173}.
\newblock URL \url{http://doi.acm.org/10.1145/357172.357173}.

\bibitem[Henning(2006)]{Henning06}
J.~L. Henning.
\newblock {{SPEC} {CPU2006} Benchmark Descriptions}.
\newblock \emph{SIGARCH Computer Architecture News}, 34\penalty0 (4):\penalty0
  1--17, Sept. 2006.
\newblock ISSN 0163-5964.
\newblock \doi{10.1145/1186736.1186737}.
\newblock URL \url{http://doi.acm.org/10.1145/1186736.1186737}.

\bibitem[H\"{o}lzle et~al.(1992)H\"{o}lzle, Chambers, and Ungar]{Holzle92}
U.~H\"{o}lzle, C.~Chambers, and D.~Ungar.
\newblock {Debugging Optimized Code with Dynamic Deoptimization}.
\newblock In \emph{Proceedings of the ACM SIGPLAN 1992 Conference on
  Programming Language Design and Implementation}, PLDI '92, pages 32--43, New
  York, NY, USA, 1992. ACM.
\newblock ISBN 0-89791-475-9.
\newblock \doi{10.1145/143095.143114}.
\newblock URL \url{http://doi.acm.org/10.1145/143095.143114}.

\bibitem[Jaramillo et~al.(2000)Jaramillo, Gupta, and Soffa]{Jaramillo00}
C.~Jaramillo, R.~Gupta, and M.~L. Soffa.
\newblock {{FULLDOC}: A Full Reporting Debugger for Optimized Code}.
\newblock In \emph{Proceedings of the 7th International Symposium on Static
  Analysis}, SAS '00, pages 240--259, Berlin, Heidelberg, 2000. Springer.
\newblock ISBN 978-3-540-45099-3.
\newblock \doi{10.1007/978-3-540-45099-3_13}.
\newblock URL \url{http://dx.doi.org/10.1007/978-3-540-45099-3_13}.

\bibitem[Kundu et~al.(2009)Kundu, Tatlock, and Lerner]{Kundu09}
S.~Kundu, Z.~Tatlock, and S.~Lerner.
\newblock {Proving Optimizations Correct Using Parameterized Program
  Equivalence}.
\newblock In \emph{Proceedings of the 30th ACM SIGPLAN Conference on
  Programming Language Design and Implementation}, PLDI '09, pages 327--337,
  New York, NY, USA, 2009. ACM.
\newblock ISBN 978-1-60558-392-1.
\newblock \doi{10.1145/1542476.1542513}.
\newblock URL \url{http://doi.acm.org/10.1145/1542476.1542513}.

\bibitem[Lacey et~al.(2002)Lacey, Jones, Van~Wyk, and Frederiksen]{Lacey02}
D.~Lacey, N.~D. Jones, E.~Van~Wyk, and C.~C. Frederiksen.
\newblock {Proving Correctness of Compiler Optimizations by Temporal Logic}.
\newblock In \emph{Proceedings of the 29th ACM SIGPLAN-SIGACT Symposium on
  Principles of Programming Languages}, POPL '02, pages 283--294, New York, NY,
  USA, 2002. ACM.
\newblock ISBN 1-58113-450-9.
\newblock \doi{10.1145/503272.503299}.
\newblock URL \url{http://doi.acm.org/10.1145/503272.503299}.

\bibitem[Lacey et~al.(2004)Lacey, Jones, Van~Wyk, and Frederiksen]{Lacey04}
D.~Lacey, N.~D. Jones, E.~Van~Wyk, and C.~C. Frederiksen.
\newblock {Compiler Optimization Correctness by Temporal Logic}.
\newblock \emph{Higher-Order and Symbolic Computation}, 17\penalty0
  (3):\penalty0 173--206, Sept. 2004.
\newblock ISSN 1388-3690.
\newblock \doi{10.1023/B:LISP.0000029444.99264.c0}.
\newblock URL \url{http://dx.doi.org/10.1023/B:LISP.0000029444.99264.c0}.

\bibitem[Lameed and Hendren(2013)]{Lameed13}
N.~A. Lameed and L.~J. Hendren.
\newblock {A Modular Approach to On-Stack Replacement in LLVM}.
\newblock In \emph{Proceedings of the 9th ACM SIGPLAN/SIGOPS International
  Conference on Virtual Execution Environments}, VEE '13, pages 143--154, New
  York, NY, USA, 2013. ACM.
\newblock ISBN 978-1-4503-1266-0.
\newblock \doi{10.1145/2451512.2451541}.
\newblock URL \url{http://doi.acm.org/10.1145/2451512.2451541}.

\bibitem[Lattner and Adve(2004)]{Lattner04}
C.~Lattner and V.~Adve.
\newblock {LLVM}: A compilation framework for lifelong program analysis \&
  transformation.
\newblock In \emph{Proceedings of the International Symposium on Code
  Generation and Optimization: Feedback-directed and Runtime Optimization}, CGO
  '04, pages 75--86, Washington, DC, USA, 2004. IEEE Computer Society.
\newblock ISBN 0-7695-2102-9.
\newblock \doi{10.1109/cgo.2004.1281665}.
\newblock URL \url{http://dl.acm.org/citation.cfm?id=977395.977673}.

\bibitem[Lerner et~al.(2003)Lerner, Millstein, and Chambers]{Lerner03}
S.~Lerner, T.~Millstein, and C.~Chambers.
\newblock {Automatically Proving the Correctness of Compiler Optimizations}.
\newblock In \emph{Proceedings of the ACM SIGPLAN 2003 Conference on
  Programming Language Design and Implementation}, PLDI '03, pages 220--231,
  New York, NY, USA, 2003. ACM.
\newblock ISBN 1-58113-662-5.
\newblock \doi{10.1145/781131.781156}.
\newblock URL \url{http://doi.acm.org/10.1145/781131.781156}.

\bibitem[Lerner et~al.(2005)Lerner, Millstein, Rice, and Chambers]{Lerner05}
S.~Lerner, T.~Millstein, E.~Rice, and C.~Chambers.
\newblock {Automated Soundness Proofs for Dataflow Analyses and Transformations
  via Local Rules}.
\newblock In \emph{Proceedings of the 32Nd ACM SIGPLAN-SIGACT Symposium on
  Principles of Programming Languages}, POPL '05, pages 364--377, New York, NY,
  USA, 2005. ACM.
\newblock ISBN 1-58113-830-X.
\newblock \doi{10.1145/1040305.1040335}.
\newblock URL \url{http://doi.acm.org/10.1145/1040305.1040335}.

\bibitem[Lopes et~al.(2015)Lopes, Menendez, Nagarakatte, and Regehr]{Lopes15}
N.~P. Lopes, D.~Menendez, S.~Nagarakatte, and J.~Regehr.
\newblock {Provably Correct Peephole Optimizations with Alive}.
\newblock In \emph{Proceedings of the 36th ACM SIGPLAN Conference on
  Programming Language Design and Implementation}, PLDI '15, pages 22--32, New
  York, NY, USA, 2015. ACM.
\newblock ISBN 978-1-4503-3468-6.
\newblock \doi{10.1145/2737924.2737965}.
\newblock URL \url{http://doi.acm.org/10.1145/2737924.2737965}.

\bibitem[Necula(2000)]{Necula00}
G.~C. Necula.
\newblock {Translation Validation for an Optimizing Compiler}.
\newblock In \emph{Proceedings of the ACM SIGPLAN 2000 Conference on
  Programming Language Design and Implementation}, PLDI '00, pages 83--94, New
  York, NY, USA, 2000. ACM.
\newblock ISBN 1-58113-199-2.
\newblock \doi{10.1145/349299.349314}.
\newblock URL \url{http://doi.acm.org/10.1145/349299.349314}.

\bibitem[Paleczny et~al.(2001)Paleczny, Vick, and Click]{Paleczny01}
M.~Paleczny, C.~Vick, and C.~Click.
\newblock {The {Java} {HotSpot\texttrademark} Server Compiler}.
\newblock In \emph{Proceedings of the 2001 Symposium on JavaTM Virtual Machine
  Research and Technology Symposium - Volume 1}, JVM'01, Berkeley, CA, USA,
  2001. USENIX Association.

\bibitem[Phoronix(2016)]{Phoronix16}
Phoronix.
\newblock {Phoronix Test Suite (PTS)}, 2016.
\newblock URL \url{http://www.phoronix-test-suite.com/}. Accessed: 2017-04-09.

\bibitem[Pnueli et~al.(1998)Pnueli, Siegel, and Singerman]{Pnueli98}
A.~Pnueli, M.~Siegel, and E.~Singerman.
\newblock {Translation Validation}.
\newblock In \emph{Proceedings of the 4th International Conference on Tools and
  Algorithms for Construction and Analysis of Systems}, TACAS '98, pages
  151--166, London, UK, UK, 1998. Springer-Verlag.
\newblock ISBN 3-540-64356-7.
\newblock \doi{10.1007/bfb0054170}.
\newblock URL \url{http://dx.doi.org/10.1007/bfb0054170}.

\bibitem[Schneider and Bolz(2012)]{Schneider12}
D.~Schneider and C.~F. Bolz.
\newblock {The Efficient Handling of Guards in the Design of RPython's Tracing
  JIT}.
\newblock In \emph{Proceedings of the Sixth ACM Workshop on Virtual Machines
  and Intermediate Languages}, VMIL '12, pages 3--12, New York, NY, USA, 2012.
  ACM.
\newblock ISBN 978-1-4503-1633-0.
\newblock \doi{10.1145/2414740.2414743}.
\newblock URL \url{http://doi.acm.org/10.1145/2414740.2414743}.

\bibitem[Sun~Microsystems(2006)]{hotspot-glossary}
I.~Sun~Microsystems.
\newblock {HotSpot Glossary of Terms}, 2006.
\newblock URL
  \url{http://openjdk.java.net/groups/hotspot/docs/HotSpotGlossary.html}.
  Accessed: 2017-04-17.

\bibitem[Wagner et~al.(2015)Wagner, Kuznetsov, Candea, and Kinder]{Wagner15}
J.~Wagner, V.~Kuznetsov, G.~Candea, and J.~Kinder.
\newblock High system-code security with low overhead.
\newblock In \emph{2015 IEEE Symposium on Security and Privacy}, pages
  866--879, May 2015.
\newblock \doi{10.1109/SP.2015.58}.

\bibitem[Weiser(1982)]{Weiser82}
M.~Weiser.
\newblock {Programmers Use Slices when Debugging}.
\newblock \emph{Communications of the ACM}, 25\penalty0 (7):\penalty0 446--452,
  July 1982.
\newblock ISSN 0001-0782.
\newblock \doi{10.1145/358557.358577}.
\newblock URL \url{http://doi.acm.org/10.1145/358557.358577}.

\bibitem[Wu et~al.(1999)Wu, Mirani, Patil, Olsen, and Hwu]{Wu99}
L.-C. Wu, R.~Mirani, H.~Patil, B.~Olsen, and W.-m.~W. Hwu.
\newblock {A New Framework for Debugging Globally Optimized Code}.
\newblock In \emph{Proceedings of the ACM SIGPLAN 1999 Conference on
  Programming Language Design and Implementation}, PLDI '99, pages 181--191,
  New York, NY, USA, 1999. ACM.
\newblock ISBN 1-58113-094-5.
\newblock \doi{10.1145/301618.301663}.
\newblock URL \url{http://doi.acm.org/10.1145/301618.301663}.

\bibitem[W\"{u}rthinger et~al.(2013)W\"{u}rthinger, Wimmer, W\"{o}\ss, Stadler,
  Duboscq, Humer, Richards, Simon, and Wolczko]{Wurthinger13}
T.~W\"{u}rthinger, C.~Wimmer, A.~W\"{o}\ss, L.~Stadler, G.~Duboscq, C.~Humer,
  G.~Richards, D.~Simon, and M.~Wolczko.
\newblock {One VM to Rule Them All}.
\newblock In \emph{Proceedings of the 2013 ACM International Symposium on New
  Ideas, New Paradigms, and Reflections on Programming \& Software}, Onward!
  2013, pages 187--204, New York, NY, USA, 2013. ACM.
\newblock ISBN 978-1-4503-2472-4.
\newblock \doi{10.1145/2509578.2509581}.
\newblock URL \url{http://doi.acm.org/10.1145/2509578.2509581}.

\bibitem[Zellweger(1983)]{Zellweger83}
P.~T. Zellweger.
\newblock {An Interactive High-level Debugger for Control-flow Optimized
  Programs}.
\newblock In \emph{Proceedings of the Symposium on High-level Debugging},
  SIGSOFT '83, pages 159--172, New York, NY, USA, 1983. ACM.
\newblock ISBN 0-89791-111-3.
\newblock \doi{10.1145/1006147.1006183}.
\newblock URL \url{http://doi.acm.org/10.1145/1006147.1006183}.

\end{thebibliography}


\newpage
\appendix

\section{Computation Tree Logic Operators}


In this section we provide formal definitions of CTL temporal operators in our language framework. In particular, their formalization will rely on the following definition of control flow graph:

\begin{definition}[Control Flow Graph]
\label{de:cfg}
The {\em control flow graph} (CFG) for a program $\pi=\langle I_1, I_2, \ldots, I_n \rangle$ is described by a pair $G=(V, E \subseteq V\times V)$ where:
\begin{align*}
V &= \{ I_1, I_2, \ldots, I_n \} \\
E &= \{(I_i, I_{i+1})\:|\: I_i \neq \textsf{abort} \wedge I_i \neq \textsf{goto m}, \!\textsf{ m}\in Num \} \\
&\cup\;\{(I_i, I_m)\:|\: I_i = \textsf{goto m} \vee I_i = \textsf{if (e) goto m}, \!\textsf{ m}\in Num, \!\textsf{ e}\in Expr \}.
\end{align*}
\end{definition}

\noindent We also need to formalize the concept of finite maximal paths:

\begin{definition}[Set of Complete Paths] Given a control flow graph $G=(V,E)$ and an initial node $n_0\in V$, the {\em set of complete paths} $CPaths(n_0,G)$ starting at $n_0$ consists of all finite sequences $\langle n_0,n_1,\ldots,n_k\rangle$ such that $(n_i,n_{i+1})\in E$ for all $n_i$ with $i<k$, and such that there does not exist a $n_{k+1}$ such that $(n_k,n_{k+1})\in E$.
\end{definition}

\noindent Complete paths from a specified node (i.e., instruction) are thus maximal finite sequences of connected nodes through a control flow graph from an initial point to a sink node, which in our setting is unique (unless {\tt abort} instructions are present) and corresponds to the final instruction $I_n$ of a program $\pi$ as in \mydefinition\ref{de:program}.

\smallskip
We can now define temporal operators as follows:

\begin{definition}[Temporal Operators]
Given a node $n$ in the control flow graph $G=(V,E)$ of a program $\pi$, we define the following CTL {\em temporal operators}:
\begin{align*}
n \models \overrightarrow{AX}(\phi) &\Longleftrightarrow \forall m: (n,m)\in E: \pi,m\models\phi \\
n \models \overrightarrow{EX}(\phi) &\Longleftrightarrow \exists m: (n,m)\in E: \pi, m\models\phi \\
n \models \overrightarrow{A}(\phi~U~\psi) &\Longleftrightarrow \forall p: p\in CPaths(n,G): Until(\pi, p,\phi,\psi) \\
n \models \overrightarrow{E}(\phi~U~\psi) &\Longleftrightarrow \exists p: p\in CPaths(n,G): Until(\pi, p,\phi,\psi)
\end{align*}
\noindent where predicate $Until(\pi,p,\phi,\psi)$ holds for $p = \langle n_0,n_1,\ldots,n_k\rangle \in CPaths(n_0,G)$ if:
\begin{equation*}
\exists j: 0 \le j\le k: \pi, n_j \models \psi \; \wedge \: \forall 0 \le i < j: \pi, n_i \models \phi
\end{equation*}
\noindent Operators $\overleftarrow{AX}$, $\overleftarrow{EX}$, $\overleftarrow{A}$, and $\overleftarrow{E}$ can be defined similarly on the reverse control flow graph $\overleftarrow{G}$, which is identical to $G$ but with every edge in $\overleftarrow{E}$ flipped.
\end{definition}

\begin{example}
Dominance analysis is widely employed in a number of program analyses and optimizations. In a CFG, we say that a node $n$ dominates a node $m$ if every path from the CFG's entry node to $m$ must go through $n$. Using CTL operators, we can easily encode this property. Given a program $\pi$ as in \mydefinition\ref{de:program}, we can write: 
\begin{equation*}
 \mytt{dominates}(n,m) \Longleftrightarrow \pi,I_1 \models \neg E(\neg\wpoint(n)~U~\wpoint(m))
\end{equation*}

\noindent 
which captures the idea that there is no path from $\pi$'s first instruction that reaches $m$ without reaching $n$ first.
\end{example}


\section{Proofs of Theorems}
In this section we provide proofs for the theorems stated in the article and present a number of related lemmas and corollaries. Multi-version programs are addressed separately in Appendix~\ref{apx:multiver}.

\subsection{OSR Mappings}

\onlylivecount*

\begin{proof}
We reason on the structure of the transition relation $\Rightarrow_{\pi}$ for our big-step semantics shown in \mydefinition\ref{de:transitions}. We rewrite our claim as:
\begin{align*}
(\sigma,l)\Rightarrow_{\pi}(\sigma',l') ~~ \Longleftrightarrow ~~ & (\sigma\vert_{\live(\pi,l)},l)\Rightarrow_{\pi}(\hat{\sigma},l') ~~
\wedge ~~ \hat{\sigma}\vert_{\live(\pi,l')} = \sigma'\vert_{\live(\pi,l')} 
\end{align*}

\noindent When \myequation(\ref{eq:asgn-sem}) applies, both states advance to location $l+1$, and the evaluation $(\sigma, \texttt{e}) \Downarrow v$ for the assignment yields the same result in both stores, as each operand in $e$ is either a constant literal or a live variable for $\pi$ at $l$. Indeed, having a variable operand for $e$ not in $\live(\pi,l)$ would contradict the definition of liveness. When the instruction at $l$ is a conditional expression, $\Rightarrow_{\pi}$ applies either \myequation(\ref{eq:ifz-sem}) or \myequation(\ref{eq:ifnz-sem}) to both states: as discussed for assignments, the evaluation of expression $e$ yields the same result in $\sigma$ and $\sigma\vert_{\live(\pi,l)}$, and both states advance to the same location without affecting the store. When one of \Crefrange{eq:goto-sem}{eq:out-sem} applies, trivially both states advance to the same location, while values in their stores are not affected. Finally, from \mydefinition\ref{de:live-var} it follows that $\live(\pi,l')\supseteq\live(\pi,l)\cup\{\,\texttt{x} ~ | ~ I_l=\texttt{x:=e}\,\}$ and thus $\hat{\sigma}\vert_{\live(\pi,l')} = \sigma'\vert_{\live(\pi,l')}$.
\end{proof}

\subsection{LVE Transformations and OSR Mapping Generation Algorithms}

\begin{restatable}{lem}{bisimprop}
\label{le:bisim-prop}
Let $R$ be a reflexive bisimulation relation between programs $\pi$ and $\pi'$. Then for any $\sigma\in \Sigma$ it holds:
\begin{equation}
\label{eq:bisim-prop-1}
|\tau_{\pi\sigma}|=|\tau_{\pi'\sigma}|
\end{equation}
\begin{equation}
\label{eq:bisim-prop-2}
\forall i\in dom(\tau_{\pi\sigma}), ~~ \tau_{\pi\sigma}[i]~R~\tau_{\pi'\sigma}[i]
\end{equation}
\end{restatable}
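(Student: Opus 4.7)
The plan is to prove both claims simultaneously by induction on the trace index $i$, exploiting determinism of $\Rightarrow_\pi$ and $\Rightarrow_{\pi'}$ in our language (noted right after \mydefinition\ref{de:exec-trace}) to upgrade the existential successors provided by the bisimulation into the actual trace successors.

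For the base case $i=0$, both traces start at the initial state $(\sigma,1)$ by construction, so $\tau_{\pi\sigma}[0]=\tau_{\pi'\sigma}[0]=(\sigma,1)$, and reflexivity of $R$ yields $\tau_{\pi\sigma}[0]\,R\,\tau_{\pi'\sigma}[0]$. For the inductive step, I assume that $i\in dom(\tau_{\pi\sigma})\cap dom(\tau_{\pi'\sigma})$ and that $\tau_{\pi\sigma}[i]\,R\,\tau_{\pi'\sigma}[i]$, and I show two facts: (a) $i+1\in dom(\tau_{\pi\sigma})$ iff $i+1\in dom(\tau_{\pi'\sigma})$, and (b) if both hold then $\tau_{\pi\sigma}[i+1]\,R\,\tau_{\pi'\sigma}[i+1]$. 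For (a), if $\tau_{\pi\sigma}[i]\Rightarrow_\pi \tau_{\pi\sigma}[i+1]$ then clause (1) of \mydefinition\ref{de:bisimulation} supplies some $s'_1$ with $\tau_{\pi'\sigma}[i]\Rightarrow_{\pi'} s'_1$, so by \mydefinition\ref{de:exec-trace} $i+1\in dom(\tau_{\pi'\sigma})$; the converse direction uses clause (2). For (b), once both successors exist, determinism forces $s'_1=\tau_{\pi'\sigma}[i+1]$ and analogously on the other side, so the $R$-relation supplied by the bisimulation is precisely the one we want.

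This establishes \myequation(\ref{eq:bisim-prop-2}) for every $i$ in the common domain. To get \myequation(\ref{eq:bisim-prop-1}), I argue by contradiction: if $|\tau_{\pi\sigma}|<|\tau_{\pi'\sigma}|$, pick $k=|\tau_{\pi\sigma}|$, so $\tau_{\pi\sigma}[k]$ has no $\Rightarrow_\pi$-successor while $\tau_{\pi'\sigma}[k]\Rightarrow_{\pi'}\tau_{\pi'\sigma}[k+1]$; by the induction above $\tau_{\pi\sigma}[k]\,R\,\tau_{\pi'\sigma}[k]$, so clause (2) of the bisimulation forces a $\Rightarrow_\pi$-successor of $\tau_{\pi\sigma}[k]$, contradiction. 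The symmetric case uses clause (1). Hence the two lengths coincide (both finite and equal, or both infinite).

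The only subtle step is making sure the induction handles the infinite trace case cleanly: in that case $dom(\tau_{\pi\sigma})=\mathbb{N}$ and the induction runs unimpeded, and the length-equality claim is vacuous on one side and follows by the contradiction argument above on the other. I do not anticipate a real obstacle beyond the bookkeeping of merging the two claims into a single induction, since the bisimulation clauses give exactly what is needed and determinism removes the existential quantifiers.
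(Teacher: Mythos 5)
Your proof is correct and follows essentially the same route as the paper: induction on the trace index for \myequation(\ref{eq:bisim-prop-2}), using reflexivity at $i=0$ and the bisimulation clauses at the step, then a contradiction argument at the point where the shorter trace would end to obtain \myequation(\ref{eq:bisim-prop-1}). Your explicit appeal to determinism of $\Rightarrow_{\pi}$ to identify the bisimulation-supplied successor with the actual trace successor is a minor tightening of a step the paper leaves implicit, not a different argument.
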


\begin{proof}
We prove \myequation(\ref{eq:bisim-prop-2}) by induction on $i$. The base follows from $\tau_{\pi\sigma}[0]=\tau_{\pi'\sigma}[0]=(\sigma,1)$ and the assumption that $R$ is reflexive. Assume as an inductive hypothesis that $\tau_{\pi\sigma}[i]~R~\tau_{\pi'\sigma}[i]$ for any $i<|\tau_{\pi\sigma}|$. Since $|\tau_{\pi\sigma}|>i$ then $\tau_{\pi\sigma}[i] \Rightarrow_{\pi} \tau_{\pi\sigma}[i+1]$ by \mydefinition\ref{de:exec-trace}. It follows by \mydefinition\ref{de:bisimulation} that $\tau_{\pi\sigma}[i+1]~R~\tau_{\pi'\sigma}[i+1]$.

To prove \myequation(\ref{eq:bisim-prop-1}), assume by contradiction that $|\tau_{\pi\sigma}|\neq|\tau_{\pi'\sigma}|$, e.g., $|\tau_{\pi\sigma}|>|\tau_{\pi'\sigma}|=k$. Since $|\tau_{\pi\sigma}|>k$ then $\tau_{\pi\sigma}[k]~R~\tau_{\pi'\sigma}[k]$ by \myequation(\ref{eq:bisim-prop-2}) and $\tau_{\pi\sigma}[k] \Rightarrow_{\pi} \tau_{\pi\sigma}[k+1]$ by \mydefinition\ref{de:exec-trace}. It follows by \mydefinition\ref{de:bisimulation} that $\tau_{\pi'\sigma}[k] \Rightarrow_{\pi'} \tau_{\pi'\sigma}[k+1]$. Hence $|\tau_{\pi'\sigma}|>k$, contradicting the initial assumption. The proof for the case $|\tau_{\pi'\sigma}|>|\tau_{\pi\sigma}|$ is analogous.
\end{proof}


\noindent One consequence of \mydefinition\ref{de:state-equiv-relation}, which simplifies our formal discussion, is the following:

\begin{restatable}{lem}{lvbsameloc}
\label{le:lvb-same-loc}
If $\pi$ and $\pi'$ are live-variable bisimilar, then for any $\sigma$, corresponding states in program traces $\tau_{\pi\sigma}$ and $\tau_{\pi'\sigma}$ are located at the same program points: $\forall i:$ $\tau_{\pi\sigma}[i]=(\sigma_i, l_i)$ $\wedge$ $\tau_{\pi'\sigma}[i]=(\sigma'_i, l'_i)$ $\Longrightarrow$ $l_i=l'_i$.
\end{restatable}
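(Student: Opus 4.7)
The proof will be a short direct deduction combining the definition of live-variable bisimilarity with the auxiliary result on bisimulation relations already established in Lemma~\ref{le:bisim-prop}. The key observation is that the partial state equivalence relation $R_A$ is designed so that two $R_A$-related states, by construction, share the same program point component; program-point equality is therefore baked into the relation itself, and what is really being asserted is that the bisimulation property propagates this equality down the trace.

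Concretely, the plan is as follows. First, I would unfold \mydefinition\ref{de:lvb-programs}: since $\pi$ and $\pi'$ are live-variable bisimilar, the relation $R_A$ with $A=l\mapsto\live(\pi,l)\cap\live(\pi',l)$ is a bisimulation between $\pi$ and $\pi'$. Second, I would invoke the reflexivity of $R_A$ (explicitly noted right after \mydefinition\ref{de:state-equiv-relation}), so that the hypotheses of \mylemma\ref{le:bisim-prop} are satisfied. Applying \myequation(\ref{eq:bisim-prop-2}) from that lemma to $R=R_A$ yields, for every $\sigma\in\Sigma$ and every $i\in dom(\tau_{\pi\sigma})$, the relation $\tau_{\pi\sigma}[i]\,R_A\,\tau_{\pi'\sigma}[i]$.

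Finally, I would unfold \mydefinition\ref{de:state-equiv-relation}: the definition of $R_A$ requires that any pair $(s,s')\in R_A$ has the form $s=(\sigma,l)$ and $s'=(\sigma',l)$ with a \emph{common} second component. Instantiating this for $s=\tau_{\pi\sigma}[i]=(\sigma_i,l_i)$ and $s'=\tau_{\pi'\sigma}[i]=(\sigma'_i,l'_i)$ immediately gives $l_i=l'_i$, which is exactly the claim.

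There is no real obstacle here: the lemma is essentially a bookkeeping consequence of the way $R_A$ is defined. The only subtlety worth double-checking is that \mylemma\ref{le:bisim-prop} requires reflexivity of $R$ (to seed the induction at $i=0$ via $\tau_{\pi\sigma}[0]=\tau_{\pi'\sigma}[0]=(\sigma,1)$), and this is granted for $R_A$. Thus the proof reduces to a three-line chain: LVB $\Rightarrow$ $R_A$ is a reflexive bisimulation $\Rightarrow$ traces are pointwise $R_A$-related $\Rightarrow$ matching indices share their program-point component.
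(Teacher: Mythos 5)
Your proof is correct and follows exactly the route the paper takes: it derives the claim from \mylemma\ref{le:bisim-prop} (using the reflexivity of $R_A$) together with the fact that \mydefinition\ref{de:state-equiv-relation} forces related states to share their program-point component. The paper's own proof is just the one-line "straightforward" version of the argument you spell out.
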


\begin{proof}
Straightforward by \mylemma\ref{le:bisim-prop} and \mydefinition\ref{de:state-equiv-relation}.
\end{proof}

\begin{restatable}{cor}{samesize}
\label{co:same-size}
If $\pi$ and $\pi'$ are live-variable bisimilar, then they have the same size: $\pi=\langle I_1,\ldots,I_n\rangle$ $\wedge$ $\pi'=\langle I'_1,\ldots,I'_{n'}\rangle$ $\Longrightarrow$ $n=n'$.
\end{restatable}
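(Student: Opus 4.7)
The plan is to exhibit a terminating execution of both programs and use Lemma~\ref{le:lvb-same-loc} to pin down the location of the terminal state in each trace.

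First, I would fix any input store $\sigma\in\Sigma$ on which $\pi$ terminates. By \mydefinition\ref{de:program-semantics}, this means that $\tau_{\pi\sigma}$ is finite and its last state has the form $(\sigma^*, n+1)$, obtained by the application of the \texttt{out} rule at program point $n$ (Equation~\ref{eq:out-sem}). By Lemma~\ref{le:bisim-prop} applied to the partial state equivalence bisimulation witnessing LVB, the paired trace $\tau_{\pi'\sigma}$ has the same length $k=|\tau_{\pi\sigma}|$ and is therefore finite as well.

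Next, I would apply Lemma~\ref{le:lvb-same-loc} at index $i=k$: since corresponding trace states occupy the same program point, the final state of $\tau_{\pi'\sigma}$ sits at location $n+1$. On the other hand, according to \mydefinition\ref{de:transitions} the only way a trace of $\pi'$ can reach a location outside $[1,n']$ is by executing the unique \texttt{out} instruction, which sits at position $n'$ and transitions to $n'+1$. Equating the two terminal locations yields $n+1=n'+1$, hence $n=n'$.

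The main obstacle is the degenerate case in which \emph{no} input store yields a terminating computation for $\pi$, so that every trace either loops forever or gets stuck before reaching \texttt{out}. In that situation the terminal-state argument does not directly apply; one would either invoke the padding convention noted right after \mydefinition\ref{de:lvb-programs} (which aligns program lengths with \texttt{skip} statements prior to the LVB comparison), or argue symmetrically on the set of locations reachable in both traces: by Lemma~\ref{le:lvb-same-loc} every location visited by $\pi$ is a valid program point of $\pi'$ and vice versa, so that neither $n<n'$ nor $n>n'$ is compatible with LVB once one additionally observes that the \texttt{out} position is the unique location whose successor lies outside the program. The formal case analysis of this edge case is what I would expect to be the most delicate part to write out cleanly.
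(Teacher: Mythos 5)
Your argument is essentially the paper's own proof: Lemma~\ref{le:bisim-prop} gives equal trace lengths, Lemma~\ref{le:lvb-same-loc} plus the \texttt{out} rule (\Cref{eq:out-sem}) place both final states at location $n+1$, hence $n=n'$. The degenerate no-terminating-trace case you worry about is not treated in the paper either (its proof quantifies only over finite traces), so your extra discussion is a reasonable caution but not a departure from the paper's approach.
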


\begin{proof}
By \mylemmas\ref{le:bisim-prop} and~\ref{le:lvb-same-loc} and \myequation(\ref{eq:out-sem}), for any finite trace $\tau_{\pi\sigma}$ it holds $\tau_{\pi\sigma}[|\tau_{\pi\sigma}|]=(-,n+1)$ and $\tau_{\pi'\sigma}[|\tau_{\pi'\sigma}|]=(-,n+1)$. Hence both $\pi$ and $\pi'$ contain $n$ instructions.
\end{proof}

\noindent We finally introduce one more, fundamental lemma required to prove \mytheorem\ref{th:osr-trans-correctness} correct:

\begin{restatable}[Correctness of Algorithm \buildcomp]{lem}{buildcompcorr}
\label{le:build-comp-corr}
Let $\pi$ and $\pi'$ be live-variable bisimilar programs. For each initial store $\sigma\in \Sigma$ it holds:
\begin{equation*}
\forall i\in dom(\tau_{\pi\sigma}): \chi\neq undef \implies
\mysem{\chi}(\sigma_i\vert_{\live(\pi,l_i)}) = \sigma'_i\vert_{\live(\pi',l_i)}
\end{equation*}
where $(\sigma_i,l_i)=\tau_{\pi\sigma}[i]$, $(\sigma'_i,l_i)=\tau_{\pi'\sigma}[i]$, and $\chi={\tt build\_comp}(\pi,l_i,\pi',l_i)$.
\end{restatable}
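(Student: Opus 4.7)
The plan is to split the claim along the two components of \buildcomp: the preamble \texttt{in} statement and the sequence of \reconstruct\ invocations feeding the concluding \texttt{out}. First, by \mylemma\ref{le:lvb-same-loc}, $\tau_{\pi\sigma}[i]$ and $\tau_{\pi'\sigma}[i]$ sit at the same point $l_i$, and live-variable bisimilarity (\mydefinition\ref{de:lvb-programs}) combined with \mydefinition\ref{de:state-equiv-relation} yields $\sigma_i\vert_{A(l_i)}=\sigma'_i\vert_{A(l_i)}$ for $A(l_i)=\live(\pi,l_i)\cap\live(\pi',l_i)$. Hence every variable in $A(l_i)$ is already correctly loaded by the \texttt{in} statement emitted at line~\ref{line-bc:firstline}, and what remains to prove is that each variable $\wx\in\live(\pi',l_i)\setminus\live(\pi,l_i)$ --- precisely the targets enumerated by the \texttt{foreach} at line~\ref{line-bc:foreach} --- carries $\sigma'_i(\wx)$ after $\chi$ executes, so that the final \texttt{out} (line~\ref{line-bc:updatechi}) produces $\sigma'_i\vert_{\live(\pi',l_i)}$.

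The core of the proof is an inductive invariant about \reconstruct. For each location $l''$ of $\pi'$ visited on $\tau_{\pi'\sigma}$ before step $i$, denote by $\mathrm{val}(\wy,l'')$ the value of $\wy$ in $\pi'$ at the last visit to $l''$ preceding step $i$. The invariant is: whenever \reconstruct$(\wy,\pi,l_i,\pi',l_i,l'')$ returns without throwing $\mundef$, appending its output to the current $\chi$ yields a program whose execution from $\sigma_i\vert_{\live(\pi,l_i)}$ leaves $\wy$ equal to $\mathrm{val}(\wy,l'')$. The proof is by induction on the number of still-unvisited points of $\pi'$, which strictly decreases at every non-trivial recursive invocation (because of the ``mark $\hat{l}$ as visited'' step). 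In the base case at line~\ref{line-rec:both-live}, $\hat{l}$ is the unique reaching definition of $\wy$ at both $l''$ and $l_i$, so no assignment to $\wy$ intervenes between them; this gives $\mathrm{val}(\wy,l'')=\sigma'_i(\wy)$, and live-variable bisimilarity (since $\wy$ is live at $l_i$ in both programs) yields $\sigma_i(\wy)=\sigma'_i(\wy)$, so $\wy$ already carries its target value. In the inductive case, the $\ureachdef$ side condition implies that $\mathrm{val}(\wy,l'')$ equals $\we$ evaluated in the $\pi'$-store at $\hat{l}$ just before the assignment; the induction hypothesis applied to each $\wz\in\wfreevar(\we)$ installs $\mathrm{val}(\wz,\hat{l})$, and the final $\wy\texttt{:=}\we$ completes the assignment, with \mytheorem\ref{thm:only-live-count} ruling out interference from non-live values.

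The main obstacle is the soundness of the visited marker: when a descent re-encounters a visited $\hat{l}$, \reconstruct\ returns empty on the assumption that the variable still holds the correct value, yet multiple assignments to the same variable may legitimately appear in $\chi$ (distinct reaching definitions $\hat{l}_1\neq\hat{l}_2$ of the same $\wy$ encountered along different recursive branches generate distinct emissions). I would discharge this by arguing persistence through two facts tied to the \emph{existence} of the concrete trace $\tau_{\pi'\sigma}$. First, whenever two descents touch the same location $\hat{l}$, they install the same value $\mathrm{val}(\cdot,\hat{l})$, so returning empty at a visited $\hat{l}$ is sound provided $\wy$ has not been overwritten since. Second, any later emission $\wy\texttt{:=}\we'$ originating from a different definition $\hat{l}_2$ can occur only inside the recursive reconstruction of some ancestor variable that consumes $\wy$ at $\hat{l}_2$, not at $\hat{l}$; the semantic consistency of the single trace $\tau_{\pi'\sigma}$ --- which forces all unique reaching definitions witnessed along the descent to be compatible with one actual execution path --- ensures that the uses of the earlier value have already been absorbed into intermediate assignments before the overwrite occurs. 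Combining the per-variable invariant with this persistence argument across the \texttt{foreach} iterations of \buildcomp\ gives $\mysem{\chi}(\sigma_i\vert_{\live(\pi,l_i)})=\sigma'_i\vert_{\live(\pi',l_i)}$, as required.
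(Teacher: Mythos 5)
Your argument is essentially the paper's: variables live at both points are dispatched by live-variable bisimilarity (the check at line~\ref{line-rec:both-live}), and the remaining variables are handled by an induction over the recursion of \reconstruct, reconstructing the free variables of the unique reaching definition before emitting the assignment; your choice of the number of unvisited points as the induction measure is only a minor sharpening of the paper's ``induction on the recursive calls''. The one place where you go beyond the paper --- the soundness of the visited marker --- is also the one place where your argument is asserted rather than established: the claim that all uses of a value installed at a visited definition point are ``absorbed into intermediate assignments'' before a later emission from a \emph{different} definition of the same variable overwrites it does not follow from the existence of the single trace $\tau_{\pi'\sigma}$; it depends on the unspecified order in which the \texttt{foreach} of \buildcomp\ processes the target variables, since a later top-level reconstruction may consult an already-visited definition whose installed value has meanwhile been clobbered by code emitted for another target. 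The paper's own proof makes the same tacit assumption (its base case simply states that if $\hat{l}$ is visited the compensation code for \wx\ ``has already been created''), so you have not missed anything the paper supplies; but to make your persistence step rigorous you would need either an explicit hypothesis ruling the scenario out (e.g., a single definition point per reconstructed variable, as in the SSA-based implementation of \mysection\ref{ss:buildcomp-implementation}) or an argument about the emission order, neither of which is currently given.
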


\begin{proof}
The correctness of ${\tt build\_comp}$ (\myalgorithm\ref{alg:osr-build-comp}) relies on the ability of \reconstruct\ (\myalgorithm\ref{alg:osr-reconstruct}) to produce compensation code for each variable that is live at the OSR destination, but not at the origin.
Procedure $\reconstruct(\wx, \pi, l, \pi', l', l'')$ aims at creating a sequence of instructions that assigns \wx\ with the value that it would have assumed at $l''$ in $\pi'$, using as input the values of the live variables at $l$ in $\pi$.

We proceed by induction on the recursive calls of \reconstruct. For the algorithm to succeed, there must be a unique definition {\tt x:=e} at some point $\hat{l}$ that dominates $l''$, otherwise $undef$ is thrown (see \myfigure\ref{fig:reconstruct}). 

\begin{figure}[!ht]
\begin{center}
\includegraphics[width=0.42\textwidth]{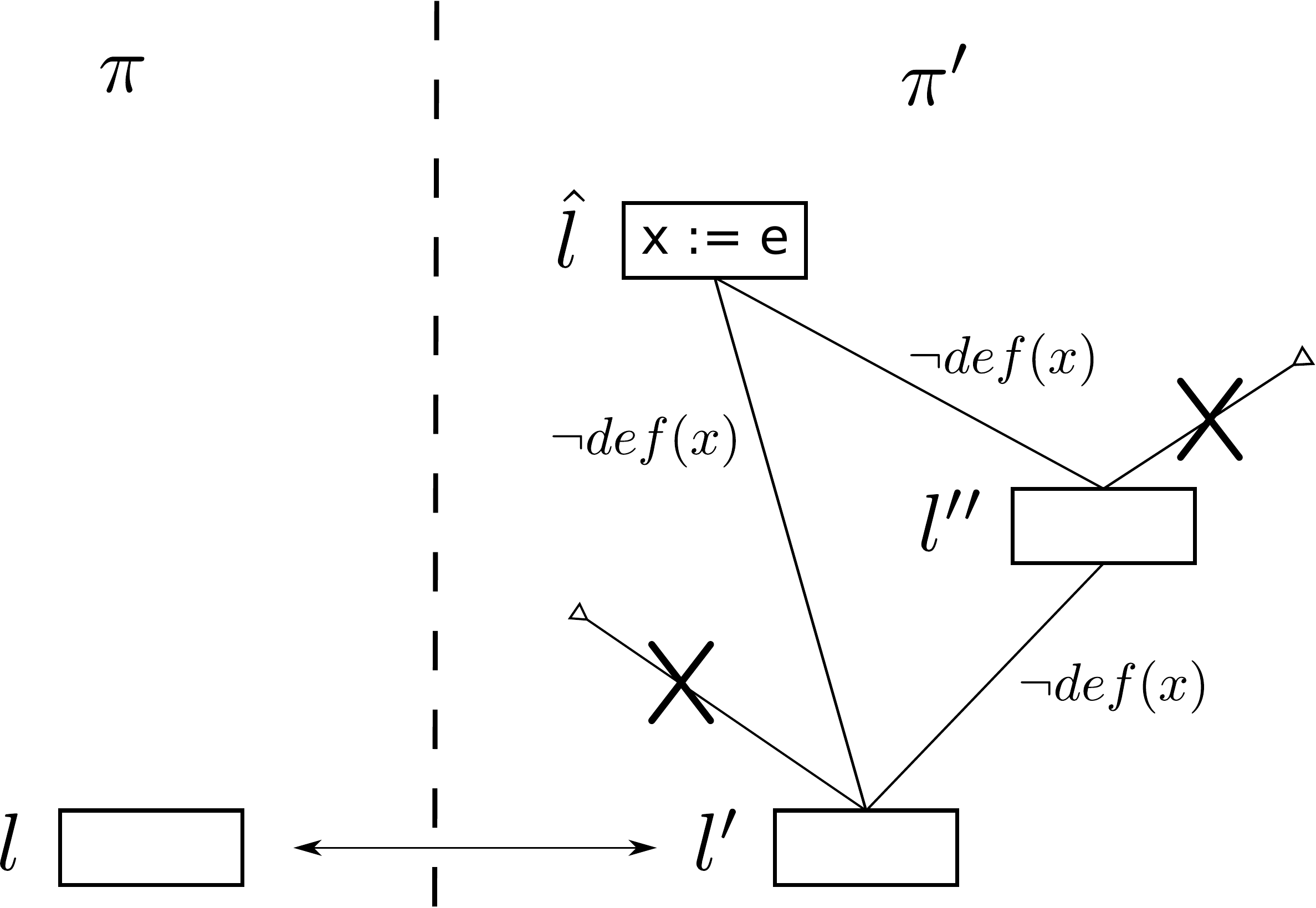}
\caption{\label{fig:reconstruct} \reconstruct\ identifies an assignment \mytt{x:=e} at $\hat{l}$ that reaches both $l'$ and $l''$, and no other definition of \wx\ is possible.}
\end{center}
\end{figure}

\noindent The base case happens when either:
\begin{enumerate}[itemsep=3pt, parsep=0pt]
 \item {\tt e} has no free variables (line~\ref{line-rec:freevar}), thus the compensation code for \wx\ is just {\tt x:=e} (line~\ref{line-rec:assign});
 \item the definition at $\hat{l}$ reaches both $l''$ and $l'$ (lines~\ref{line-rec:first-inst},~\ref{line-rec:both-live}) and \wx\ is live at both the origin and the destination (line~\ref{line-rec:both-live}), hence, since $\pi$ and $\pi'$ are live-variable bisimilar and \wx\ has the same value at $l$ and $l'$, no compensation code for \wx\ is needed as the value of \wx\ at $l$ is the same that it would have had at $l''$;
 \item $\hat{l}$ has already been visited, so compensation code for \wx\ has already been created.
\end{enumerate}

\noindent Assume by inductive hypothesis that the recursive calls of \reconstruct\ have added to $\chi$ the code to assign each free variable $y$ of $e$ with the value they would have assumed at $\hat{l}$ (line~\ref{line-rec:recursive}). Then the value of \wx\ that we would have had at $\hat{l}$ is determined by \mytt{x:=e}, which is appended to $\chi$ (line~\ref{line-rec:assign}).
\end{proof}

\osrtranscorrectness*

\begin{proof}
The correctness of ${\tt OSR\_trans}$ follows directly by \mylemma\ref{le:lvb-same-loc}, \mylemma\ref{le:build-comp-corr}, and \mycorollary\ref{co:same-size}.
\end{proof}

\lvetransexamples*

\begin{proof}
CP replaces uses of a variable $v$ at a node $m$ with a constant $c$ when all the reaching definitions for v are of the form $v:=c$. DCE deletes an instruction at a node $m$ if the result of its computation will never be used later in the execution, skipping past possible uses of the $x$ itself at $m$ with AX. Hoist moves an assignment of the form $x:=v[e]$ from a node $q$ to an insertion point $p$ provided that two conditions are met: (1) in all forward paths starting at $p$, $x$ is not used until the original location $q$ is reached; and (2) in all backward paths starting at $q$, $x$ is not reassigned at any node other than $q$ and the constituents of $e$ are not redefined, until $p$ is reached.

\noindent In \cite{Lacey02}, CP, DCE, and Hoist are proved correct, each using a different bisimulation relation $R$. 
For CP, $R$ is simply the identity relation, hence $A(l)=Val\supseteq \live(\pi,l)\cap \live(\pi',l)$ in \mydefinition\ref{de:state-equiv-relation}.

For the other two transformations, $R$ is piecewise-defined on the indexes of the traces. For any initial store $\sigma\in \Sigma$, let $\tau_{\pi\sigma}[i]=(\sigma_i,l_i)$, $\tau_{\pi'\sigma}[i]=(\sigma'_i,l'_i)$, and $t$ be the index of the final state in both traces (note that $|\tau_{\pi\sigma}|=|\tau_{\pi'\sigma}|$ from \mylemma\ref{le:bisim-prop}). Let also $\theta$ be a substitution that bounds free meta-variables with concrete program objects so that a rule's side-condition is satisfied.

\noindent
For DCE, $R$ is the identity relation before the eliminated assignment \mytt{x:=e}, and $A(l)=Val\setminus\{\theta(\wx)\}=\live(\pi,l)\cap \live(\pi',l)$ after it. $R$ is a bisimulation such that $\forall i\in[1,t]$ $l_i=l'_i$ and both the following conditions hold:
\begin{enumerate}
 \item $[\forall j, j<i \Rightarrow l_j\neq \theta(p)] \Rightarrow \sigma_i=\sigma'_i~$ and
 \item $[\exists j, j\leq i \wedge l_j = \theta(p)] \Rightarrow \sigma_i\setminus\texttt{x}=\sigma'_i\setminus\texttt{x}$
\end{enumerate}
\noindent where $p$ is the meta-variable for the eliminated assignment in $\pi'$, and $\sigma\setminus\texttt{x}$ is syntactic sugar for $\sigma\vert_{D(\sigma)}$, where $D(\sigma)=\{v\in Var ~|~ v\neq \texttt{x} ~\wedge~ \sigma(v)\neq\bot\}$ is the set of all the variable identifiers other than \mytt{x} currently defined in $\sigma$.

For Hoist, $R$ is the identity relation before $\theta(p)$ and after $\theta(q)$ (see \myfigure\ref{fig:sample-trans}), and $A(l)=Val\setminus\{\theta(\wx)\}=\live(\pi,l)\cap \live(\pi',l)$ between them. Formally, we have that $\forall i\in[1,t]$ $l_i=l'_i$ and one of the following cases holds:
\begin{enumerate}
 \item $\sigma_t=\sigma'_t ~\wedge~ \forall i~[0\leq i<t ~\Rightarrow~ l_i \notin \{\theta(p),~\theta(q)\}]$
 \item $\begin{aligned}[t]
\sigma_t=\sigma'_t ~\wedge~ \exists i~[&0\leq i<t ~\wedge~ l_i=\theta(q) ~\wedge~ \sigma_i=\sigma'_i ~\wedge\\
&\forall j~(i<j<t ~\Rightarrow~ l_j \notin \{\theta(p),~\theta(q)\})]
       \end{aligned}$
 \item $\begin{aligned}[t] 
\exists i~[0\leq i<t ~\wedge~ &l_i=\theta(p) ~\wedge~ (\sigma_t\setminus\texttt{x}=\sigma_t'\setminus\texttt{x}) ~\wedge~ (\sigma_i\setminus\texttt{x}=\sigma_i'\setminus\texttt{x}) ~\wedge\\
&\forall j~(i<j<t ~\Rightarrow~ l_j \notin \{\theta(p),~\theta(q)\}]
       \end{aligned}$
\end{enumerate}

\noindent Case 1 applies before $\theta(p)$ is reached in the trace. Case 3 applies after $\theta(p)$ has been reached, but $\theta(q)$ has not. Finally, case 2 applies after $\theta(q)$ has been reached.
\end{proof}

\subsection{OSR Mapping Composition}

\begin{restatable}[Semantics of program composition]{lem}{progcompsem}
\label{le:prog-comp-sem}
Let $\pi,\pi'\in Prog$ be any pair of composable programs, then $\forall\sigma\in\Sigma,$ $\mysem{\pi\circ\pi'}(\sigma)=\mysem{\pi'}\left(\mysem{\pi}(\sigma)\right)$.
\end{restatable}

\begin{proof}
Straightforward by \mydefinitions\ref{de:program-semantics} and~\ref{de:composition}.
\end{proof}

\osrmappingcomp*

\begin{proof} 
Let $\mu_{\pi\pi''}=\mu_{\pi\pi'}\circ\mu_{\pi'\pi''}$. By \mydefinition\ref{de:osr-mapping}, it holds:
\begin{small}
\begin{gather*}
\forall \sigma\in\Sigma, \forall s_i=(\sigma_i,l_i)\in\tau_{\pi\sigma}: ~l_i\in dom(\mu_{\pi\pi''}),\\
\exists \sigma',\sigma''\in\Sigma,~\exists s_j=(\sigma_j,l_j)\in\tau_{\pi'\sigma'}, \\ \exists s_k=(\sigma_k,l_k)\in\tau_{\pi''\sigma''}: ~
\mu_{\pi\pi''}(l_i)=(l_k,\chi\circ\chi')~\wedge~\\ \mysem{\chi\circ\chi'}(\sigma_i\vert_{\live(\pi,l_i)})=\text{[by \mylemma\ref{le:prog-comp-sem}]}\\
\mysem{\chi'}(\mysem{\chi}(\sigma_i\vert_{\live(\pi,l_i)}))=
\mysem{\chi'}(\sigma_j\vert_{\live(\pi',l_j)})=\sigma_k\vert_{\live(\pi'',l_k)}
\end{gather*}
\end{small}
\noindent Hence, $\mu_{\pi\pi'}\circ\mu_{\pi'\pi''}$ is an OSR mapping from $\pi$ to $\pi''$.
\end{proof}

\begin{restatable}{cor}{composestrict}
\label{co:compose-strict}
Let $\pi,\pi',\pi''\in Prog$, let $\mu_{\pi\pi'}$ and $\mu_{\pi'\pi''}$ be strict OSR mappings as in \mydefinition\ref{de:osr-mapping}. Then $\mu_{\pi\pi'}\circ\mu_{\pi'\pi''}$ is a strict OSR mapping from $\pi$ to $\pi''$.
\end{restatable}

\begin{proof}
Straightforward by \mydefinition\ref{de:osr-mapping} and \mytheorem\ref{thm:osr-mapping-comp}.
\end{proof}


\section{Multi-version Programs}
\label{apx:multiver}

In this section we discuss multi-version programs in detail, providing the machinery required to prove \mytheorem{\ref{th:mv-prog-determ}} correct, and describe a multi-pass transformation algorithm for constructing multi-version programs.

To characterize the execution behavior of a multi-version program, we consider the system of traces of an execution transition system that start from a given initial state.

\begin{definition}[Trace System of Multi-Version Program]
\label{de:mvp-exec-system}
The system of traces ${\mathcal T}_{\Pi,\sigma}$ contains all traces $\tau$ of transition system $(MState,\Rightarrow_{\Pi})$ such that $\tau[0]=(1,\sigma,1)$.
\end{definition}

\begin{definition}[Deterministic Multi-Version Program]
\label{de:deterministic-mvp}
A multi-version program $\Pi$ is deterministic iff $\forall \sigma\in\Sigma$, either all traces in ${\mathcal T}_{\Pi,\sigma}$ are infinite, or they all lead to the same store, i.e.:
\begin{gather*}
\forall \tau, \tau'\in{\mathcal T}_{\Pi,\sigma}: ~~ \big(|\tau|=\infty ~ \Longleftrightarrow ~ |\tau'|=\infty\big) ~ \wedge \\
\big(|\tau|<\infty ~ \Longrightarrow ~ \exists~p,p',l,l'\in\mathbb{N}, \sigma,\sigma'\in\Sigma: ~
\tau[|\tau|]=(p,\sigma,l) ~ \wedge ~ \tau'[|\tau'|]=(p',\sigma',l') ~ \wedge ~ \sigma=\sigma' \big)
\end{gather*}
\end{definition}

\noindent The meaning of a deterministic multi-version program can be defined as follows:

\begin{definition}[Multi-Version Semantic Function]
\label{de:mv-program-semantics}
The semantic function $\mysem{\Pi}:\Sigma \rightarrow \Sigma$ of a deterministic multi-version program $\Pi$ is defined as: 
$$
\forall \sigma\in\Sigma: ~~ \mysem{\Pi}(\sigma)=\sigma' ~~ \Longleftrightarrow ~~ (1,\sigma,1) \Rightarrow^{*}_{\Pi} (p,\sigma',|\pi_p|+1)
$$
where $\Rightarrow^{*}_{\Pi}$ is the transitive closure of $\Rightarrow_{\Pi}$.
\end{definition}

\noindent
To prove the correctness of this approach, we introduce a preliminary lemma and then use it to prove that a multi-version program built in this way is deterministic. 

\begin{restatable}{lem}{complemma}
\label{le:comp-lemma}
Let $\tau\in{\mathcal T}_{\Pi,\sigma}$ be an execution trace in the system of the traces for the multi-version program $\Pi$ $=({\mathcal V}, {\mathcal E}, {\mathcal M})$ constructed using \dopasses\ and LVE transformations, and let $\omega_1,\ldots,\omega_k$ be the indexes of $\tau$ where an OSR transition has just occurred, with $\tau[\omega_i]=(p_{\omega_i}, \sigma_{\omega_i}, l_{\omega_i})$. Then $\forall i\in[1,k]$ there exists a state $(\hat{\sigma}_i,\hat{l}_i)$ in the trace of $\pi_{p_{\omega_{i}}}$ starting from the initial store $\sigma$ such that $\hat{l}_i=l_{\omega_i}$ and $\hat{\sigma}_i\vert_{\live(\pi_{p_{\omega_{i}}},\,\hat{l}_i)}=\sigma_{\omega_i}\vert_{\live(\pi_{p_{\omega_{i}}},\,\hat{l}_i)}$.
\end{restatable}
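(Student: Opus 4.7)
The plan is to argue by induction on $i$, the index of the OSR transition in $\tau$. Before starting the induction, I would establish a structural fact that is used throughout: since $\Pi$ is built by \dopasses\ from LVE transformations, every edge label ${\mathcal M}(\pi_p,\pi_q)\in{\mathcal E}$ is a \emph{strict} OSR mapping. This follows by combining \mytheorem\ref{th:osr-trans-correctness}, which guarantees strictness for a single LVE pass, with \mycorollary\ref{co:compose-strict}, which shows that composition preserves strictness. The utility of strictness is that the witness store $\sigma'$ in \mydefinition\ref{de:osr-mapping} can always be taken to be the initial store $\sigma$ of $\tau$, which is exactly the store indexing the target trace in the statement of the lemma.

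For the base case $i=1$, observe that the prefix of $\tau$ up to $\omega_1-1$ consists only of Norm transitions inside $\pi_1$, so $\tau[\omega_1-1]$ corresponds to a state $(\hat\sigma,l_{\omega_1-1})\in\tau_{\pi_1,\sigma}$ with $\hat\sigma = \sigma_{\omega_1-1}$. Applying the OSR rule in \mydefinition\ref{de:osr-semantics} together with strictness of ${\mathcal M}(\pi_1,\pi_{p_{\omega_1}})$ and \mydefinition\ref{de:osr-mapping} yields a state $(\hat\sigma_1,\hat l_1)\in\tau_{\pi_{p_{\omega_1}},\sigma}$ with $\hat l_1 = l_{\omega_1}$ and $\hat\sigma_1\vert_{\live(\pi_{p_{\omega_1}},\hat l_1)}=\sigma_{\omega_1}\vert_{\live(\pi_{p_{\omega_1}},\hat l_1)}$, as required.

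For the inductive step, assume the statement holds for the first $i-1$ OSR transitions; in particular, immediately after the $(i-1)$-th OSR there is a state $(\hat\sigma_{i-1},l_{\omega_{i-1}})\in\tau_{\pi_{p_{\omega_{i-1}}},\sigma}$ agreeing with $\sigma_{\omega_{i-1}}$ on $\live(\pi_{p_{\omega_{i-1}}},l_{\omega_{i-1}})$. Between $\omega_{i-1}$ and $\omega_i-1$ only Norm transitions are taken, all in $\pi_{p_{\omega_{i-1}}}$. By repeatedly invoking \mytheorem\ref{thm:only-live-count} along this subsequence, the execution from $(\hat\sigma_{i-1},l_{\omega_{i-1}})$ in $\pi_{p_{\omega_{i-1}}}$ reaches a state $(\hat\sigma',l_{\omega_i-1})$ that lies in $\tau_{\pi_{p_{\omega_{i-1}}},\sigma}$ and agrees with $\sigma_{\omega_i-1}$ on the live variables at $l_{\omega_i-1}$. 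A second application of the OSR rule and of strictness of ${\mathcal M}(\pi_{p_{\omega_{i-1}}},\pi_{p_{\omega_i}})$ at this location then produces the desired state $(\hat\sigma_i,\hat l_i)\in\tau_{\pi_{p_{\omega_i}},\sigma}$ with $\hat l_i=l_{\omega_i}$ and the compensation code applied to the live projection matches $\hat\sigma_i$ on $\live(\pi_{p_{\omega_i}},\hat l_i)$.

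The main obstacle is reconciling two slightly different uses of the compensation code $\chi$: \mydefinition\ref{de:osr-mapping} describes $\mysem{\chi}$ applied to the \emph{projection} of the store onto live variables, whereas the OSR rule in \mydefinition\ref{de:osr-semantics} applies $\mysem{\chi}$ to the \emph{full} current store (which in $\tau$ is $\sigma_{\omega_i-1}$, not the ``aligned'' $\hat\sigma'$). Bridging this gap requires the observation that $\chi$'s initial \texttt{in} statement, constructed in line~\ref{line-bc:firstline} of \myalgorithm\ref{alg:osr-build-comp}, names exactly the live variables at the source, so $\chi$ behaves as a function of those variables only; combining this with \mytheorem\ref{thm:only-live-count} applied to $\chi$ itself gives $\mysem{\chi}(\sigma_{\omega_i-1})\vert_{\live}=\mysem{\chi}(\hat\sigma')\vert_{\live}$, which closes the induction. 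Everything else in the argument is bookkeeping about traces and program points.
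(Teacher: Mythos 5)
Your proof follows essentially the same route as the paper's: induction on the OSR transition indexes, strictness of every edge mapping obtained from Theorem~\ref{th:osr-trans-correctness} together with Corollary~\ref{co:compose-strict}, Theorem~\ref{thm:only-live-count} to carry the agreement on live variables across the Norm segments between consecutive OSR points, and the strict-mapping/\buildcomp\ correctness to handle each OSR step, so the argument is correct and matches the paper's. The only difference is that you make explicit the full-store-versus-live-projection reconciliation for $\mysem{\chi}$ (via the \texttt{in}/\texttt{out} structure of $\chi$ and Theorem~\ref{thm:only-live-count} applied to $\chi$ itself), a point the paper leaves implicit when it appeals to Lemma~\ref{le:build-comp-corr} and Theorem~\ref{thm:osr-mapping-comp}; that is a sound refinement rather than a deviation.
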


\begin{proof}
To simplify the notation we introduce:
\begin{equation*}
\hat{\pi}_i = \begin{cases}
\pi_1 & \text{if } i=0\\
\pi_{p_{\omega_i}} & \text{if } i \in [1,k]
\end{cases}
\end{equation*}

\noindent From \myequation(\ref{eq:mv-big-step}) we can write that $\tau[\omega_i]=(p_{\omega_i}, \sigma_{\omega_i},l_{\omega_i})$ has been obtained from $\tau[\omega_i-1]=(p_{\omega_i-1}, \sigma_{\omega_i-1}, l_{\omega_i-1})$ with $\sigma_{\omega_i}=\mysem{\chi_{\omega_i-1}}(\sigma_{\omega_i})$. For each OSR transition $\hat{\pi}_i$ has been obtained from $\hat{\pi}_{i-1}$ using \dopasses\ for some sequence $L$ of LVE transformations. Indeed, in order for \myequation(\ref{eq:mv-big-step}) to apply:
\begin{equation*}
(\hat{\pi}_{i-1},\hat{\pi}_i)\in{\mathcal E} ~\wedge~ \exists L: ~{\tt do\_passes}(\hat{\pi}_{i-1}, L-1)=(\hat{\pi}_i,\mu_{\hat{\pi}_{i-1}\hat{\pi}_i},{\mu'}_{\hat{\pi}_{i}\hat{\pi}_{i-1}}) ~\wedge~
M(\hat{\pi}_{i-1},\hat{\pi}_i)=\mu_{\hat{\pi}_{i-1}\hat{\pi}_i}
\end{equation*}

\noindent When the OSR step is performed we thus have:
\begin{equation*}
M(\hat{\pi}_{i-1},\hat{\pi}_i)(l_{\omega_i-1})=\mu_{\hat{\pi}_{i-1}\hat{\pi}_i}(l_{\omega_i-1})=(l_{\omega_i},\chi_{\omega_i-1})
\end{equation*}

\noindent By \mytheorem\ref{th:osr-trans-correctness} function $\mu_{\hat{\pi}_{i-1}\hat{\pi}_i}$ provides a strict OSR mapping between $\hat{\pi}_{i-1}$ and $\hat{\pi}_i$, as all LVE transformations in L are composed into a strict mapping (\mycorollary\ref{co:compose-strict}). Note also that since $\Delta_I$ is being used to map OSR program points between $\hat{\pi}_{i-1}$ and $\hat{\pi}_i$, it follows that $l_{\omega_i}=l_{\omega_i-1}~\forall i\in[1,k]$.
We now prove our claim by induction on $i$.

\paragraph{Base step} When $i=1$, we know that no OSR transition has been performed till $l_{\omega_1-1}$ and $\hat{\pi}_0$ has been executing all the time. Then we can write:
\begin{equation*}
(1, \sigma, 1)\trans^*_{\Pi}(1,\sigma_{\omega_1-1},l_{\omega_1-1}) \Longleftrightarrow (\sigma, 1) \trans^*_{\hat{\pi}_0} (\sigma_{\omega_1-1},l_{\omega_1-1})
\end{equation*}

\noindent Trivially, $(\sigma_{\omega_1-1},l_{\omega_1-1})\in\tau_{\hat{\pi}_0\sigma}$. We can thus infer from \mydefinition\ref{de:osr-mapping}:
\begin{equation*}
\exists s_j=(\sigma_j,l_j)\in\tau_{\hat{\pi}_1\sigma}: ~\mu_{\hat{\pi}_{0}\hat{\pi}_1}(l_{\omega_1-1})=(l_j,\chi)~\wedge~
\mysem{\chi}(\sigma_{\omega_1-1}\vert_{\live(\hat{\pi}_{0},\,l_{\omega_1-1})})=\sigma_j\vert_{\live(\hat{\pi}_{1},\,l_j)}
\end{equation*}

\noindent From the definition of $\mu_{\hat{\pi}_{0}\hat{\pi}_1}$ it follows that $\chi=\chi_{\omega_1-1}$ and $l_j=l_{\omega_1}=l_{\omega_1-1}$. To prove the claim we need to show that:
\begin{equation*}
\sigma_j\vert_{\live(\hat{\pi}_{1},\,l_{\omega_1})}=\sigma_{\omega_1}\vert_{\live(\hat{\pi}_{1},\,l_{\omega_1})}
\end{equation*}

\noindent which follows directly from \mylemma\ref{le:build-comp-corr} and \mytheorem\ref{thm:osr-mapping-comp}.

\paragraph{Inductive step} As an inductive hypothesis we assume that $\exists (\hat{\sigma}_{k-1},\hat{l}_{k-1})\in\tau_{\hat{\pi}_{k-1}\sigma}$ such that:
\begin{equation*}
\hat{l}_{k-1}=l_{\omega_{k-1}} ~\wedge~ \hat{\sigma}_{k-1}\vert_{\live(\hat{\pi}_{k-1},\,\hat{l}_{k-1})}=\sigma_{\omega_{k-1}}\vert_{\live(\hat{\pi}_{k-1},\,\hat{l}_{k-1})}
\end{equation*}

\noindent
Since no OSR is performed between $\tau[\omega_{k-1}]$ and $\tau[\omega_k-1]$ we can write:
\begin{equation*}
(\hat{\sigma}_{k-1}, l_{\omega_{k-1}}) \trans^*_{\hat{\pi}_{k-1}} \cdots \trans^*_{\hat{\pi}_{k-1}} (\tilde{\sigma},l_{\omega_k-1}) ~\Longleftrightarrow ~
(\sigma_{\omega_{k-1}}, l_{\omega_{k-1}}) \trans^*_{\hat{\pi}_{k-1}} \cdots \trans^*_{\hat{\pi}_{k-1}} (\sigma_{\omega_k-1},l_{\omega_k-1})
\end{equation*}

\noindent in the same number of steps, with $\tilde{\sigma}\vert_{\live(\hat{\pi}_{k-1},\,l_{\omega_k-1})} = \sigma_{\omega_k-1}\vert_{\live(\hat{\pi}_{k-1},\,l_{\omega_k-1})}$ by \mytheorem\ref{thm:only-live-count}. Since $(\tilde{\sigma},l_{\omega_k-1})\in\tau_{\hat{\pi}_{k-1}\sigma}$ by the strictness of the OSR mapping $\mu_{\hat{\pi}_{k-1}\hat{\pi}_{k}}$:
\begin{align*}
\exists s_j=(\sigma_j,l_j)\in\tau_{\hat{\pi}_k\sigma}: ~\mu_{\hat{\pi}_{k-1}\hat{\pi}_{k}}(l_{\omega_k-1})=(l_j,\chi)~\wedge \mysem{\chi}(\tilde{\sigma}\vert_{\live(\hat{\pi}_{k-1},\,l_{\omega_k-1})})=\sigma_j\vert_{\live(\hat{\pi}_{k},\,l_j)}
\end{align*}

\noindent From the definition of $\mu_{\hat{\pi}_{k-1}\hat{\pi}_k}$ it follows that $\chi=\chi_{\omega_k-1}$ and $l_j=l_{\omega_k}=l_{\omega_k-1}$. By \mylemma\ref{le:build-comp-corr} and \mytheorem\ref{thm:osr-mapping-comp} we thus prove:
\begin{align*}
\sigma_j\vert_{\live(\hat{\pi}_{k},\,l_{\omega_k})} &= \mysem{\chi_{\omega_k-1}}(\tilde{\sigma}\vert_{\live(\hat{\pi}_{k-1},\,l_{\omega_k-1})})\\
&=\mysem{\chi_{\omega_k-1}}(\sigma_{\omega_k-1}\vert_{\live(\hat{\pi}_{k-1},\,l_{\omega_k-1})})\\
&=\sigma_k\vert_{\live(\hat{\pi}_{k},\,l_{\omega_k})})
\end{align*}
\end{proof}

\paragraph{Generation Algorithm and Correctness}
A natural way to generate a multi-version program consists in starting from a base program $\pi_1$ and constructing a tree of different versions, where each version is derived from its parent by applying one or more transformations. Algorithm \dopasses\ reported in \myalgorithm\ref{alg:osr-trans-compose} takes a program $\pi$ and a list of program transformations, and applies them to $\pi$, producing a bidirectional OSR mapping $\mu_{\pi\pi''},\mu_{\pi''\pi}$ between $\pi$ and the resulting program $\pi''$.
Its correctness follows by induction from \mytheorem\ref{thm:osr-mapping-comp}. Using this approach, it is straightforward to construct a multi-version program $\Pi=({\mathcal V}, {\mathcal E}, {\mathcal M})$ such that:
\begin{align*}
(\pi_p,\pi_q)\in {\mathcal E} ~~ \Longleftrightarrow ~~ \exists L: ~ &{\tt do\_passes}(\pi_p,L)=(\pi_q,\mu,\mu') ~ \wedge ~ {\mathcal M}(\pi_p,\pi_q)=\mu ~~ \vee \\
&{\tt do\_passes}(\pi_q,L)=(\pi_p,\mu,\mu') ~ \wedge ~ {\mathcal M}(\pi_p,\pi_q)=\mu'
\end{align*}

\begin{figure}[ht!]
\IncMargin{2em}
\begin{algorithm}[H]
\DontPrintSemicolon
\LinesNumbered
\SetAlgoNoLine
\SetNlSkip{1em} 
\Indm\Indmm
\KwIn{Program $\pi$, list of program transformations $L$}
\KwOut{Program $\hat{\pi}$, mappings $\mu_{\pi\hat{\pi}}$ and $\mu_{\hat{\pi}\pi}$}
\vspace{1mm} 
\nonl$\mathbf{algorithm} \> \> \dopasses$($\pi$, $T::L$)$\rightarrow$($\pi''$, $\mu_{\pi\pi''}$, $\mu_{\pi''\pi}$):\;
\everypar={\nl}
\Indp\Indpp
$(\pi',\mu_{\pi\pi'},\mu_{\pi'\pi})\gets \osrtrans(\pi,T)$\;
\lIf{$L=Nil$}{
    \Return{$(\pi',\mu_{\pi\pi'},\mu_{\pi'\pi})$}
}
$(\pi'',\mu_{\pi'\pi''},\mu_{\pi''\pi'})\gets \dopasses(\pi',L)$\;
\Return{$(\pi'',\mu_{\pi\pi'}\circ\mu_{\pi'\pi''},\mu_{\pi''\pi'}\circ\mu_{\pi'\pi})$}\;
\Indm\Indmm
\DecMargin{2.5em}
\caption{\label{alg:osr-trans-compose} OSR-aware multi-pass program transformations.}
\IncMargin{2.5em}
\end{algorithm} 
\vspace{-3mm} 
\end{figure}

\mvprogdeterm*

\begin{proof}
To prove that $\Pi$ is deterministic, we need to show that, given any initial store $\sigma$ on which $\pi_1\in\Pi$ terminates on some final state $\sigma'=\mysem{\pi_1}(\sigma)$, any execution trace $\tau\in{\mathcal T}_{\Pi,\sigma}$ terminates with $\sigma'$.

Let $\omega_1,\ldots,\omega_k$ be the indexes of $\tau$ where an OSR transition has just occurred, i.e., for any $i\in[1,k]$, state $\tau[\omega_i]$ is obtained from $\tau[\omega_i-1]$ by applying compensation code $\chi_{\omega_i-1}$ on store $\sigma_{\omega_i-1}$, which yields a store $\sigma_{\omega_i}$. The transition leads from a point $l_{\omega_i-1}$ in version $\pi_{p_{\omega_i-1}}$ to a point $l_{\omega_i}=l_{\omega_i-1}$ in version $\pi_{p_{\omega_{i}}}$ in $\Pi$. 

\noindent By \mylemma\ref{le:comp-lemma}, $\forall i\in[1,k]$ there exists a state $(\hat{\sigma}_i,\hat{l}_i)$ in the trace of $\hat{\pi}_i=\pi_{p_{\omega_{i}}}$ starting from the initial store $\sigma$ such that $\hat{l}_i=l_{\omega_i}$ and $\hat{\sigma}_i\vert_{\live(\hat{\pi}_i,\hat{l}_i)}=\sigma_{\omega_i}\vert_{\live(\hat{\pi}_i,\hat{l}_i)}$. Hence, since no OSR is fired after $\omega_k$, by \myequation(\ref{eq:mv-big-step}) it holds:
\begin{equation*}
(\hat{\pi}_k,\sigma_{\omega_{k}},l_{\omega_{k}})\trans^*_{\Pi}(\hat{\pi}_k,\sigma',|\hat{\pi}_k|+1) \Longleftrightarrow (\sigma_{\omega_{k}},l_{\omega_{k}})\trans^*_{\hat{\pi}_k}(\sigma',|\hat{\pi}_k|+1)
\end{equation*}

\noindent We can then apply \mytheorem\ref{thm:only-live-count} and \mylemma\ref{le:comp-lemma} to write:
\begin{gather*}
(\sigma_{\omega_{k}},l_{\omega_{k}})\trans^*_{\hat{\pi}_k}(\sigma',|\hat{\pi}_k|+1) \Longleftrightarrow \\ 
(\sigma_{\omega_{k}}\vert_{\live(\hat{\pi}_k,l_{\omega_{k}})},l_{\omega_{k}})\trans^*_{\hat{\pi}_k}(\sigma',|\hat{\pi}_k|+1) \Longleftrightarrow \\
(\hat{\sigma}_k\vert_{\live(\hat{\pi}_k,\hat{l}_k)},\hat{l}_k)\trans^*_{\hat{\pi}_k}(\sigma',|\hat{\pi}_k|+1)
\end{gather*}

\noindent As $(\hat{\sigma}_k,\hat{l}_k)\in\tau_{\hat{\pi}_k\sigma}$, by \mytheorem\ref{thm:only-live-count} necessarily $\sigma'=\mysem{\hat{\pi}_k}(\sigma)$. 
Given that all programs in $\Pi$ are semantically equivalent, we can conclude that $\mysem{\Pi}(\sigma)=\sigma'=\mysem{\hat{\pi}_k}(\sigma)=\mysem{\pi_1}(\sigma)$.
\end{proof}

\section{Additional Tables and Figures}
\label{apx:additional-material}


\begin{table}[!hb]
\vspace{-1mm}
\begin{small}
\caption{\label{tab:OSR-alC-bench-desc} Optimizations and utility effective on the hottest function of each benchmark. Optimization passes have been applied in the same order (left-to-right) as they appear in the table. Utility passes {\em LC} and {\em LCSSA} are prerequisites of {\em LICM}.}
\begin{footnotesize}
\vspace{-1.5mm}
\begin{tabular}{ |c|c|c|c|c|c|c|c|c|c| }
        \cline{3-10}
        \multicolumn{2}{l|}{} & \multicolumn{6}{c|}{Optimizations} & \multicolumn{2}{c|}{Utilities} \\
        \hline
        Suite & Benchmark & \em{ADCE} & \em{CP} & \em{CSE} & \em{SCCP} & \em{LICM} & \em{Sink} & \em{LC} & \em{LCSSA} \\
        \hline
        \hline
        \multirow{7}{*}{SPEC} & bzip2 & & & \checkmark & & \checkmark & \checkmark & & \checkmark \\
        \cline{2-10}
        & h264ref & \checkmark & & \checkmark & & \checkmark & \checkmark & \checkmark & \checkmark \\
        \cline{2-10}
        & hmmer & & & \checkmark & & \checkmark & \checkmark & & \checkmark \\
        \cline{2-10}
        & namd & \checkmark & \checkmark & \checkmark & \checkmark & \checkmark & \checkmark & & \checkmark \\
        \cline{2-10}
        & perlbench & \checkmark & & \checkmark & & \checkmark & \checkmark & \checkmark & \checkmark \\
        \cline{2-10}
        & sjeng & & & \checkmark & \checkmark & \checkmark & \checkmark & & \checkmark \\
        \cline{2-10}
        & soplex & & & \checkmark & \checkmark & \checkmark & \checkmark & & \\
        \hline
        \hline
        \multirow{5}{*}{PTS} & bullet & & \checkmark & \checkmark & & \checkmark & \checkmark & & \checkmark \\
        \cline{2-10}
        & dcraw & & & \checkmark & & \checkmark & \checkmark & \checkmark & \checkmark \\
        \cline{2-10}
        & ffmpeg & \checkmark & \checkmark & \checkmark & & \checkmark & \checkmark & \checkmark & \checkmark \\
        \cline{2-10}
        & fhourstones & & \checkmark & \checkmark & & \checkmark & & \checkmark & \checkmark \\
        \cline{2-10}
        & vp8 & & & \checkmark & & \checkmark & \checkmark & & \checkmark \\
        \hline
    \end{tabular}
\end{footnotesize}
\end{small}
\vspace{-1.5mm}
\end{table} 


\mytable\ref{tab:OSR-alC-bench-desc} describes which LLVM transformations are effective on the hottest function from the benchmarks discussed in \mysection\ref{ss:evaluation}. CSE and LICM apply to all of them, and Sink to all but one benchmark (\mytt{fhourstones}). LCSSA-form construction is triggered by LICM in all benchmarks with the exception of \mytt{soplex}. 

\myfigure\ref{fig:CS-debug-tot-dead} presents results collected on the corpus of functions of the three largest benchmarks from our case study (\mysection\ref{ss:spec-benchmarks}). Our goal is to to investigate possible correlations between the size of a function and the number of user variables affected by source-level debugging issues. Each point in a scatter plot represents a function: the horizontal position is given by the number of IR instructions in its unoptimized code version, while the vertical position by the sum of the number of endangered user variables across program points corresponding to source-level locations.

\begin{figure}[!t]
\vspace{-1mm}
\begin{center}
\makebox[\textwidth][c]{
\includegraphics[width=0.54\textwidth]{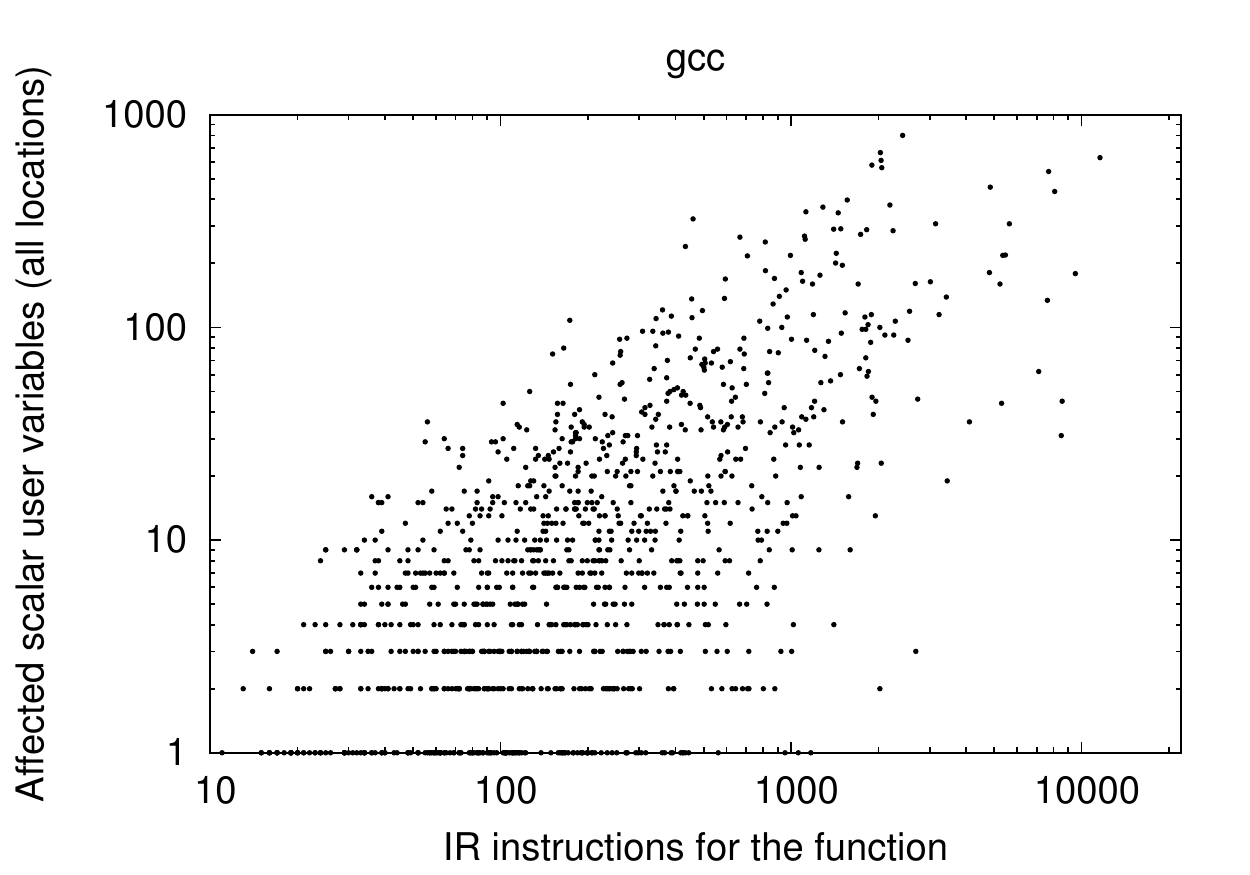}
\includegraphics[width=0.54\textwidth]{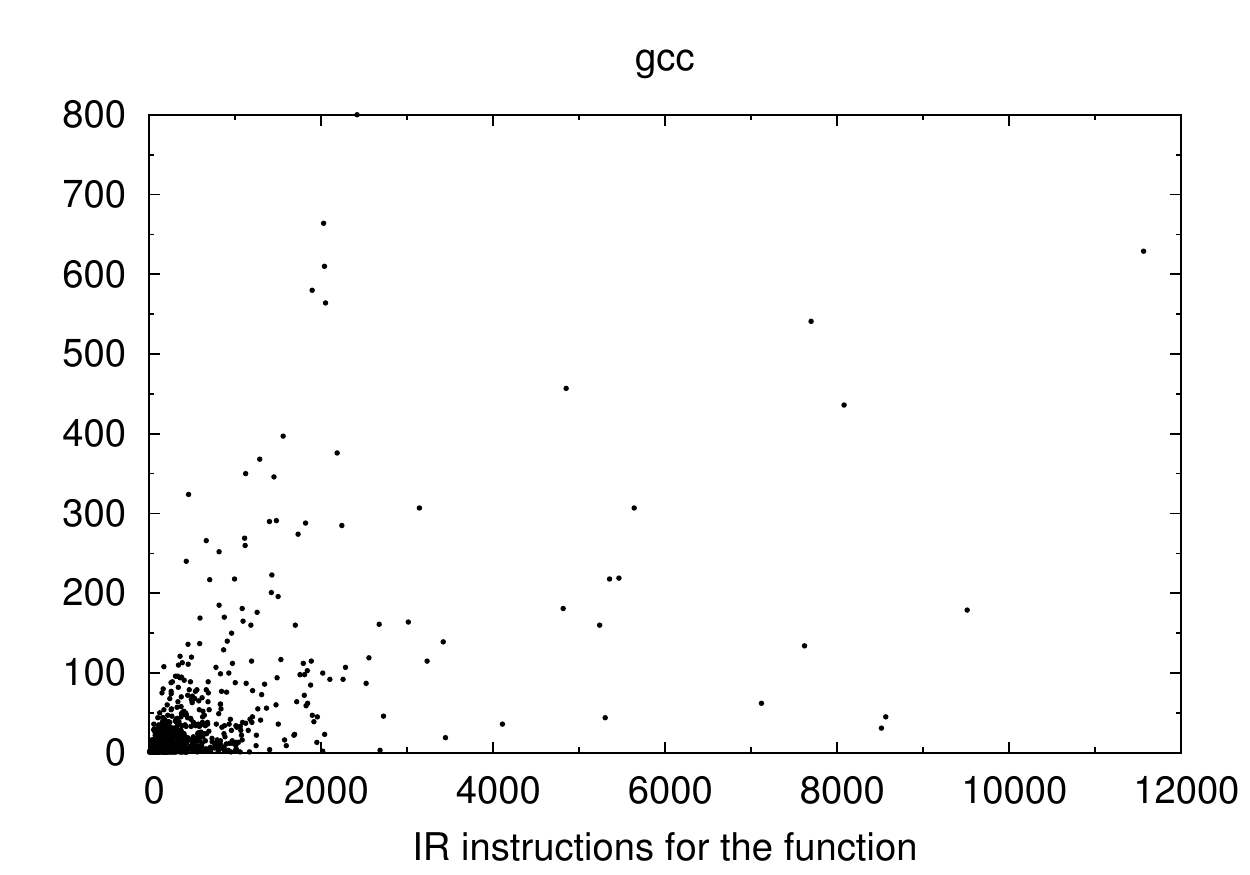}
}
\vspace{0.5mm}
\makebox[\textwidth][c]{
\includegraphics[width=0.54\textwidth]{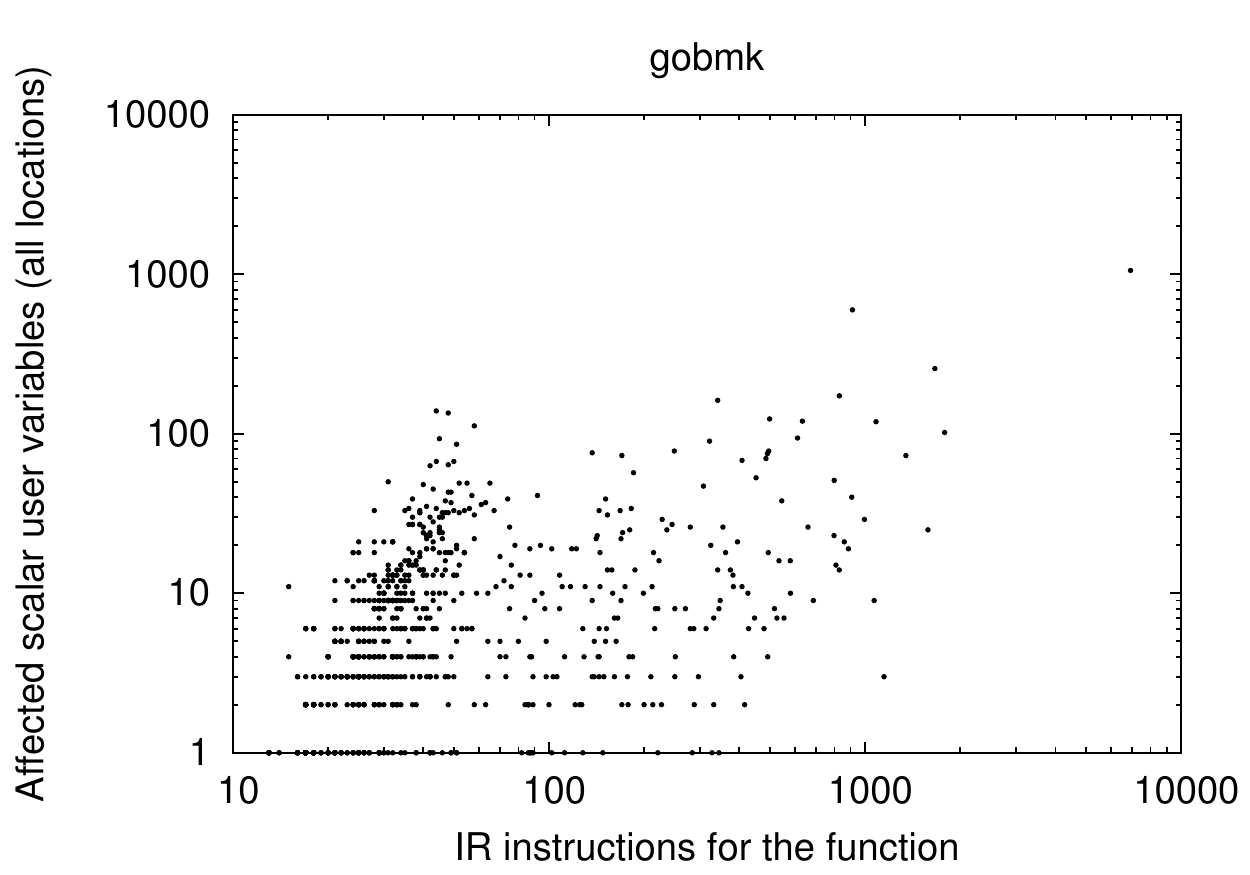}
\includegraphics[width=0.54\textwidth]{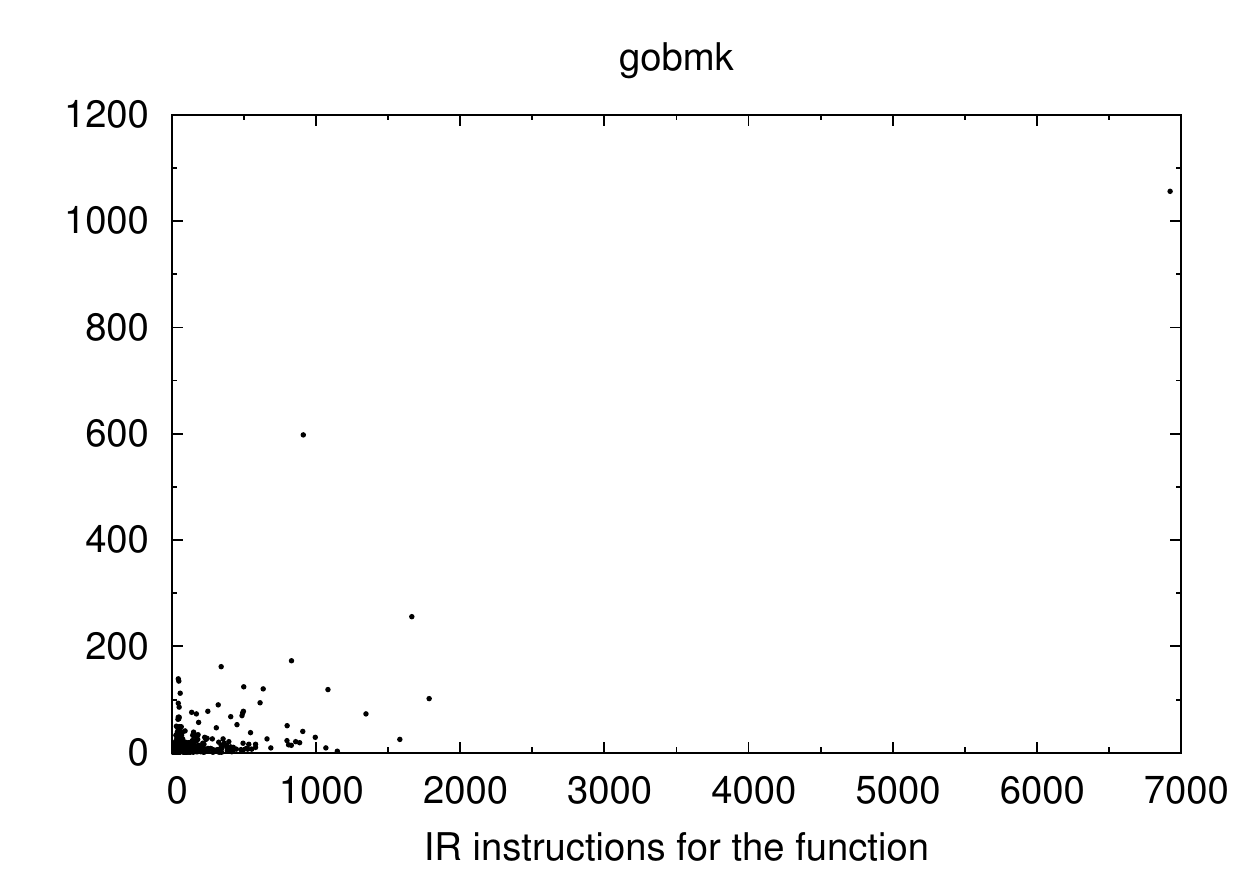}
}
\vspace{0.5mm}
\makebox[\textwidth][c]{
\includegraphics[width=0.54\textwidth]{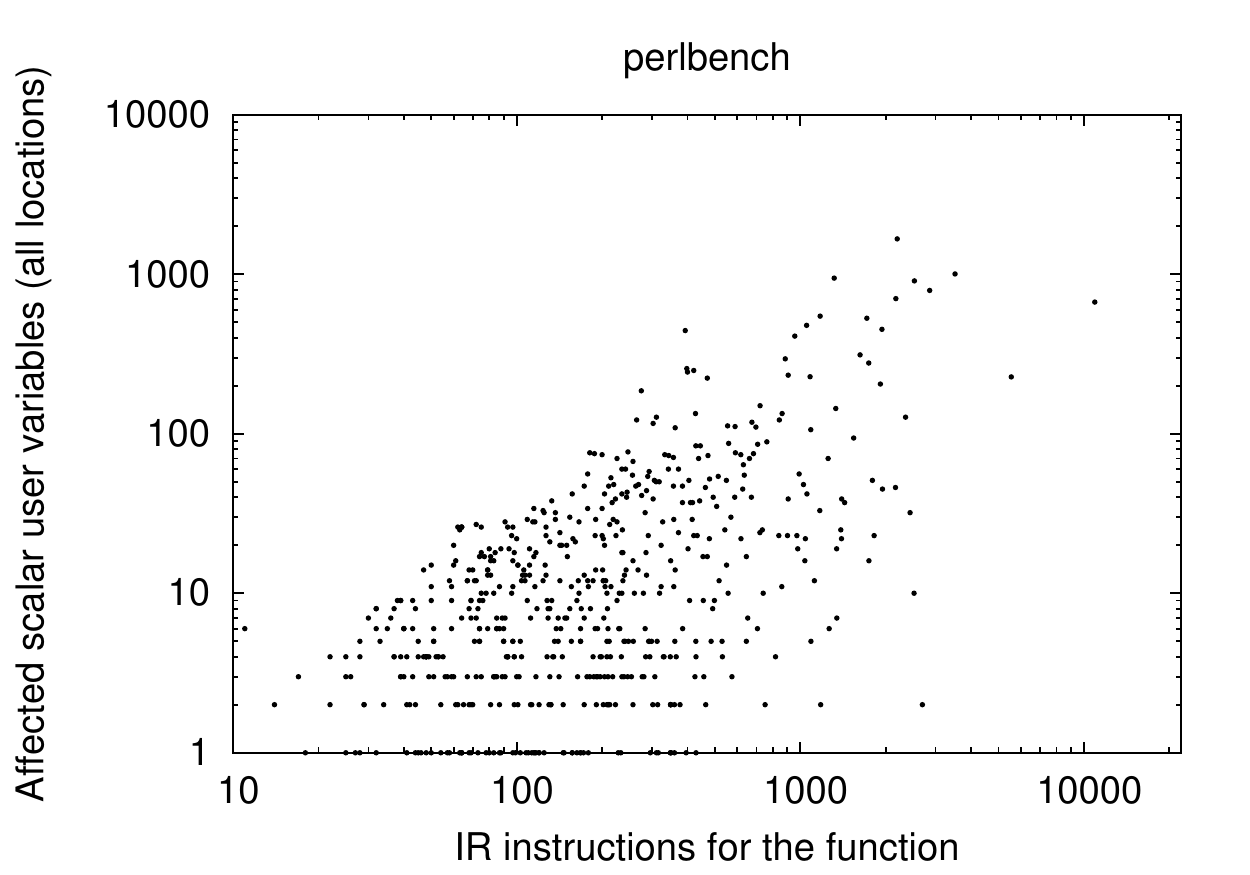}
\includegraphics[width=0.54\textwidth]{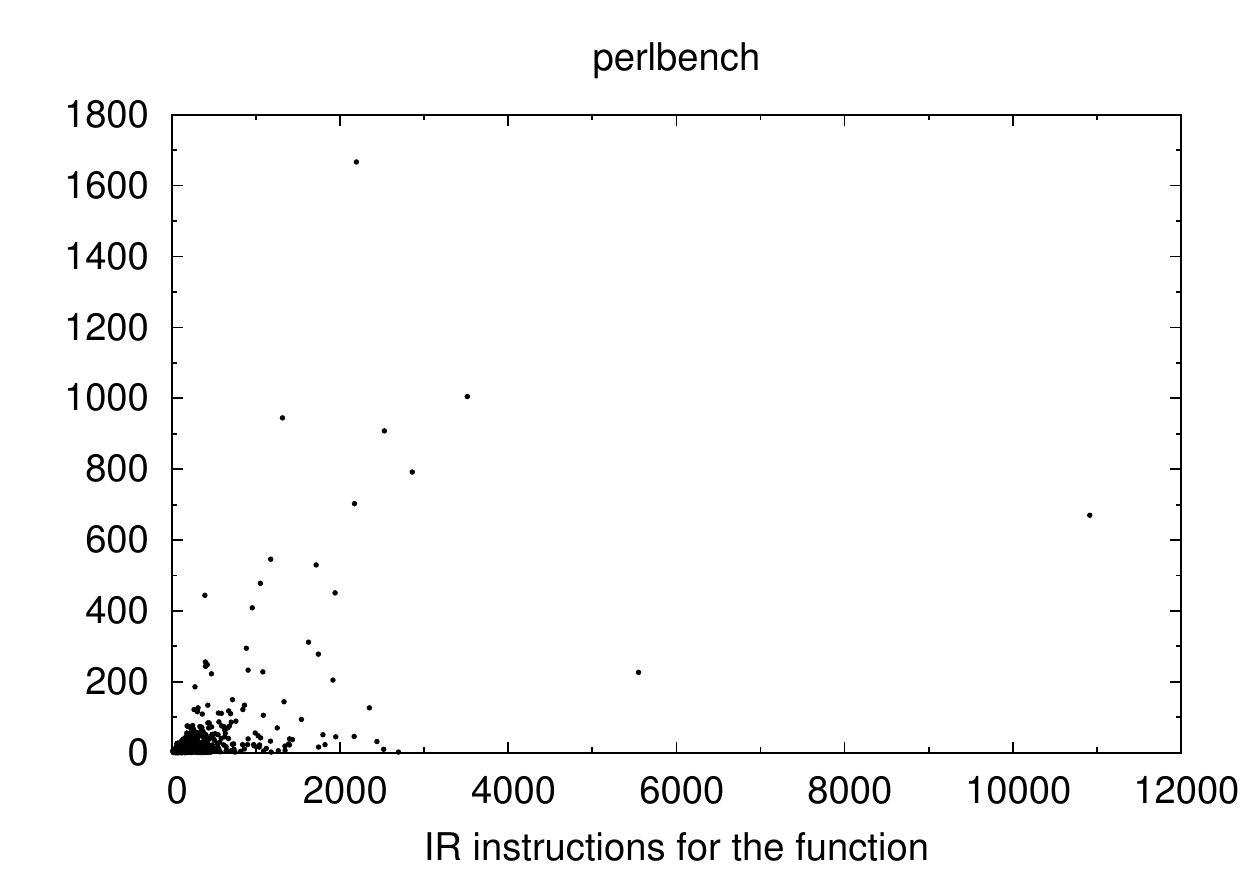}
}
\caption{\label{fig:CS-debug-tot-dead} Total number of endangered scalar user variables across program points. The horizontal coordinate of each function is determined by the size of its unoptimized version. For each benchmark we report a log-log (left) and a linear (right) plot.}
\end{center}
\vspace{-2mm}
\end{figure}

The log-log plots for \mytt{gcc} may suggest a trend line such that larger functions would typically have a large number of affected variables. However, this trend is less pronounced in \mytt{perlbench}, and nearly absent from \mytt{gobmk}. Linear plots should provide the reader with a better visualization of what happens for larger functions and for functions with a higher total number of affected variables. We can safely conclude that, although larger functions might be more prone to source-level debugging issues, these issues frequently arise for smaller functions as well.

\end{document}